\documentclass[a4paper, UKenglish, cleveref, autoref, thm-restate, comments]{lipics-v2021}

\pdfoutput=1
\usepackage[T1]{fontenc}
\usepackage[utf8]{inputenc}

\usepackage{comments}
\usepackage{style}
\usepackage{proof_style}
\usepackage{tikzit}
\nolinenumbers

\pdfoutput=1 %
\title{Causality in Pure Quantum Computation with Quantum Control}

\author{
  Kengo Hirata
}{
  University of Edinburgh, United Kingdom
  \and Kyoto University, Japan
}{
  k.hirata@sms.ed.ac.uk
}{
  https://orcid.org/0009-0005-4416-2655
}{}

\author{
  Takeshi Tsukada
}{
  Chiba University, Japan
}{
  t.tsukada@acm.org
}{
  https://orcid.org/0000-0002-2824-8708
}{}

\authorrunning{K. Hirata and T. Tsukada} %

\Copyright{Kengo Hirata and Takeshi Tsukada} %

\ccsdesc[500]{Theory of computation~Quantum computation theory}
\ccsdesc[500]{Theory of computation~Type theory}
\ccsdesc[500]{Theory of computation~Linear logic}
\ccsdesc[500]{Theory of computation~Categorical semantics}

\keywords{Quantum computing, supermap, categorical model, linear logic, BV logic, programming language, type system} %

\category{} %

\funding{This work was supported by JST CREST, Japan, Grant Number JPMJCR25I5.}%

\EventEditors{Claudia Faggian and Joost-Pieter Katoen}
\EventNoEds{2}
\EventLongTitle{41st Annual Symposium on Logic in Computer Science (LICS 2026)}
\EventShortTitle{LICS 2026}
\EventAcronym{LICS}
\EventYear{2026}
\EventDate{July 20--23, 2026}
\EventLocation{Lisbon, Portugal}
\EventLogo{}
\SeriesVolume{380}
\ArticleNo{57}

\begin{document}
\maketitle

\begin{abstract}
Indefinite causal order is a characteristic phenomenon in quantum computation, with examples including the quantum SWITCH and the OCB process. Not all such processes are believed to be physically realizable: while some implementations of the quantum SWITCH have been proposed, the OCB process is suspected to be unrealizable. This difference in realizability is commonly attributed to constraints imposed by physical causality.

This paper studies such a causality issue in a higher-order setting, proposing a typed lambda calculus with quantum control and its categorical semantics.
Our calculus extends pure quantum computation with higher-order functions and quantum conditional branching, and it is equipped with a type system based on intuitionistic BV logic to enforce causality. We also present a novel model that is closely related to the Caus construction, by which we prove that some physically-unrealizable processes are not definable in our language.
 \end{abstract}

\section{Introduction}
\label{sec:intro}

Quantum computation is a computational paradigm that exploits the principles of quantum mechanics, and it can efficiently solve certain problems that are not efficiently solvable by any known classical algorithm.
Naturally, the scope of quantum computation is constrained by the operations that quantum mechanics allows.
In the standard formalism, a physical system is represented by a Hilbert space, and the physically admissible transformations between Hilbert spaces are well understood.
Concretely, depending on the class of operations that one allows, physical processes are modeled by unitary transformations, isometries, or quantum channels.
Furthermore, every physically realizable operation admits a description in terms of quantum circuits.

However, the situation changes dramatically once we move to second-order operators, \ie transformations that take quantum channels as inputs and return a new quantum channel.
Such transformations are called \emph{supermaps}.
As we shall see below, this setting exhibits novel and intriguing phenomena that do not arise at first order, but many fundamental questions remain open, including which transformations are physically realizable and how such operations can be described in an appropriate language.

One such phenomenon is \emph{indefinite causal order}, of which a prominent example is the \emph{quantum SWITCH}~\cite{Chiribella2013}.
The quantum SWITCH takes two quantum channels $A$ and $B$ as inputs and behaves as \( A \circ B \) or \( B \circ A \) depending on a control qubit.
In particular, when the control qubit is in a superposition of \( \ket{0} \) and \( \ket{1} \), the quantum switch performs a ``mixture'' or ``superposition'' of \( A \circ B \) and \( B \circ A \), hence ``which of A and B was executed first'' is not well-defined.
There is also research that investigates computational advantages that exploit this property~\cite{Araujo2014,Chiribella2021,Colnaghi2012,Ebler2018,Kristjansson2020,Kristjansson2024,switch-advantage2}.
Another instance of this phenomenon is the supermap introduced by Oreshkov, Costa and Brukner~\cite{Oreshkov2012-ocb}, which we call the \emph{OCB process}.
Its remarkable feature is the violation of the \emph{causal inequality}~\cite{Branciard_2016,Oreshkov2012-ocb,PurvesS21,Wechs2021}, which is unviolatable under classical causality.

Although the quantum SWITCH~\cite{Chiribella2013} and the OCB process~\cite{Oreshkov2012-ocb} are both interesting supermaps exhibiting indefinite causal order, their physical realizability is in different situations.
For the quantum SWITCH, several implementation proposals and experimental results have been reported~\cite{Friis2014,Liu2024,switch-experiment2,switch-experiment-battery}.
In contrast, to the best of our knowledge, no implementation or experimental realization of the OCB process has been given, and it is rather suspected to be unrealizable~\cite{AraujoFNB17,PurvesS21}.

The precise boundary between realizable and unrealizable supermaps is of significant interest, yet remains unclear.
To explore the boundary, or as candidates of the boundary, various classes of supermaps have been proposed.

One of the most fundamental is the class of \emph{pure supermaps}~\cite{AraujoBCFGB2015,AraujoFNB17}.
This is the class of supermaps that,
when the inputs are ``pure'' quantum operations,
\ie those involving neither measurement nor discarding,
similarly return a pure operation.
The purity distinguishes between the quantum SWITCH and the OCB process: the quantum SWITCH is pure, but the OCB process is not.
Several other classes have been proposed~\cite{FeixAB2016,OreshkovG2016,Wechs2021} as subclasses of \emph{purifiable supermaps},
\ie pure supermaps followed by measurements,
making purity a useful baseline notion.

Kissinger and Uijlen~\cite{Kissinger2019} and Simmons and Kissinger~\cite{Simmons2022,Simmons2024} studied supermaps from categorical and logical viewpoints.
Kissinger and Uijlen~\cite{Kissinger2019}
have introduced the \emph{Caus construction},
which yields a category that can model higher-order causal structure in probabilistic and quantum theories.
Simmons and Kissinger~\cite{Simmons2022,Simmons2024} showed that the resulting category provides a model for
causality-aware extensions of linear logic,
namely BV-logic~\cite{Guglielmi2007} and pomset-logic~\cite{retore_pomset_1997}.
This provides a striking link between notions of causal structure
in physics and in logic.
Hefford and Wilson~\cite{BV-model-HeffordWilson_2025} have extended their work and created a model which characterizes pure-reversible
quantum computation%
\footnote{Here, by pure-reversible, we mean the subclass of quantum computation
  where all first order functions are unitaries
  and higher order functions preserves them.}
and proved that their model can also interpret BV-logic,
strengthening the relation of causal structure in the two field.

The above research can be regarded as giving a declarative description of realizable supermaps.
The complementary direction is to ask what kinds of constructions are allowed to build realizable supermaps.
In terms of programming language, this amounts to asking what kinds of language constructs can be introduced to a programming language for supermaps.

\emph{Quantum control} is a construction that has recently attracted attention.
It is a programming principle for constructing pure higher-order processes~\cite{Wechs2021}, allowing a program to form coherent superpositions of different executions.
The quantum SWITCH is the canonical example of this idea, creating a superposition of different applications of input operations.
Importantly, this mechanism remains causally constrained: it can generate indefinite causal order, but by itself cannot violate causal inequalities~\cite{PurvesS21,Wechs2021}, unlike processes such as the OCB process.
Thus quantum control occupies an intermediate regime: it goes beyond fixed-order higher-order programs while still falling short of processes that violates causal inequalities.
This makes quantum control a natural test case for understanding causality in pure higher-order quantum computation.

The above-mentioned causal models capture causal constraints on higher-order processes, but they cannot interpret quantum control at least in an evident way.
This paper studies the causality in the coexistence of quantum control and higher-order functions.

\subparagraph{New Causality Issue in Higher-order Programs.}
The first contribution of this paper is to identify a new causality issue arising from the combination of quantum control and higher-order functions.
Quantum control, exemplified by the controlled-\( U \) gate (that applies \( U \) when the control qubit is \( \ket{1} \) and acts as the identity when the control qubit is \( \ket{0} \)), is a fundamental ingredient of quantum computation and also underlies the quantum SWITCH.
However, a quantum-controlled higher-order function gives rise to nontrivial causal problems.

We use the notation
$\qifx{M}{N_1}{N_2}$
for the syntax of quantum control in our language
to represent positively controlled $N_1$
and negatively controlled $N_2$
by a qubit $M$
in analogy with classical conditional $\cifx{M}{N_1}{N_2}$.
For example, the term $\qifx{x}{\gX\,y}{y}$
represents controlled-NOT gate applied to qubits $x$ and $y$,
where $\gX$ is the NOT gate that flips $\ket0$ to $\ket1$ and vice versa.
In contrast to the classical case,
$\qifx{x}{\gX\,y}{y}$ should have the tuple type $\qbit \otimes \qbit$
rather than just $\qbit$ since the control qubit should live after
the execution of the term.

A na\"ive generalization of this construct to possibly higher-order \( N_1 \) and \( N_2 \) would have the following typing rule:
\begin{proofrule}
  \infer*{
    \Gamma \vdash M \colon \qbit
    \\
    \Gamma' \vdash N_1 \colon A
    \\
    \Gamma' \vdash N_2 \colon A
  }{
    \Gamma, \Gamma' \vdash \qifx{M}{N_1}{N_2} \colon \qbit \otimes A
  }
  \tag{$\ast$}\label{rule:wrong-qif}
\end{proofrule}
This na\"ive typing rule for quantum control
allows some physically unrealizable program to be typed,
when \( A \) is higher-order.

\begin{figure}[t]
  \centering
  \begin{quantikz}[row sep = 1.2em]
    \lstick{\( x \)} & \ctrl{1} & \rstick{$x'$} \qw \\
    \lstick{$f_\mathrm{in}$} & \targ{} & \rstick{$f_\mathrm{out}$} \qw
  \end{quantikz}
  \quad$\Rightarrow$\hspace*{-1.5em}
  \begin{quantikz}[row sep = 1.2em]
    \lstick{\( x \)} & \ctrl{1} & \qw
    \arrow[dll,dash,controls={+(1.3,0) and +(-1.3,0)}]
    \\
    {} & \targ{} & \rstick{$f_\mathrm{out}$}\qw
  \end{quantikz}
  \caption{Causality violation in naive higher-order quantum control in \eqref{eq:prog-example-qif-causal-violate}.
    (Left) The function $f$ acts as a NOT gate controlled by the qubit $x$.
    Each $f_\mathrm{in}$ and $f_\mathrm{out}$ represents the input and output of $f$.
    (Right) Applying $f$ to the control qubit itself creates a closed loop.}
  \label{fig:causality-violation}
\end{figure}

The following program reveals a critical issue:
\begin{equation*}
  \begin{aligned}
     &
    \Let
    {x' \otimes f \colon \qbit \otimes (\qbit \li \qbit)}
    = \big(
    \qifx{ x }{ \gX }{ \id }\,
    \big)
    \In { f(x') } \colon \qbit
    .
  \end{aligned}
  \tag{$**$}
  \label{eq:prog-example-qif-causal-violate}
\end{equation*}
In this program,
the $\qif$ term first generates a $\gX$ gate controlled by a qubit $x$, that is $\gCX$ gate.
It assigns the target $\gX$ gate to $f$, and renames the control qubit to $x'$.
Then, it applies $f$ to $x'$ itself.
This implies that the qubit $x$ used for control must again be used
as the target qubit of the $\gCX$ gate.
As illustrated in \cref{fig:causality-violation},
this results in a closed timelike curve, violating causality.

It is also possible to explain the problem of the program~\eqref{eq:prog-example-qif-causal-violate} without appealing to diagrammatic intuition.
Let \( \Hilb \) be the category of finite-dimensional Hilbert spaces and all linear maps.
Since \( \Hilb \) is a compact closed category and has the coproduct \( \qbit = 1 + 1 \), the above program can be interpreted in \( \Hilb \) in the standard way. The resulting denotation is a linear map \( \CC^2 \longrightarrow \CC^2 \).
We give an explicit description of this linear map.
For \( x = \ket{0} \), we have \( x' = \ket{0} \) and \( f = \id \), so the outcome is \( f(x') = \ket{0} \).
Similarly, for \( x = \ket{1} \), we have \( x' = \ket{1} \) and \( f = \gX \), so the outcome is \( f(x') = \ket{0} \).
So the denotation of the above program is
\(
    \begin{pmatrix}
    1 & 0 \\
    1 & 0 \\
    \end{pmatrix}
\).
However, since physically realizable linear maps of type \( \qbit \multimap \qbit \) are only unitaries, the above program is not physically implementable.

\subparagraph{Simmons and Kissinger's Causal Logic.}

The above discussion shows that, when the type of the control target \( A \) is a function type, allowing interaction between the control qubit and the controlled function leads to a problem.
Hence, we need a mechanism that properly regulates such interaction.

To clarify what kind of mechanism would be required, let us examine how quantum conditionals controlling functions could be implemented.
A point is that, although it is unclear how to implement \( \qifx{ x }{ \gX }{ \id } \), it is clear what is \( \qifx{ x }{ \gX\,L }{ \id\,L } \) for given \( L \), which is the controlled-\( \gX \) gate applied to \( (x,L) \).
Our idea is that a quantum conditional branching that controls functions should not be executed immediately; instead, its execution should be deferred until the argument of the controlled function is determined.
Guided by this intuition, we design a type system that avoids the above problem.

The above discussion suggests that the type of \(\qifx{M}{N_1}{N_2}\) should not be \( \qbit \otimes A \), but rather a type expressing that ``\(\qbit\) becomes available only after \( A \).''
Fortunately, a logical connective with precisely this intended meaning have already been studied, namely the \emph{seq-connective} \( \seq \) from pomset logic~\cite{retore_pomset_1997} and
BV-logic~\cite{Guglielmi2007}.
These logics are both based on
multiplicative linear logic (MLL)
with MIX rule,
conservatively extended by adding
$A \seq B$ whose intuitive meaning is
the causal relation
``$A$ happens \emph{before} $B$''.

By adopting this idea, we can now fix the typing rule~\eqref{rule:wrong-qif}
to the following by restricting the usage of the control qubit
to only after $A$ is ``resolved'' (where $A \after B \defeq B \before A$):
\begin{proofrule}
  \infer*{
    \Gamma \vdash M \colon \qbit
    \\
    \Gamma' \vdash N_1 \colon A
    \\
    \Gamma' \vdash N_2 \colon A
  }{
    \Gamma, \Gamma' \vdash \qifx{M}{N_1}{N_2} \colon \qbit \after A
  }.
\end{proofrule}
We shall prove that this new typing rule actually solves the problem.

However, this mechanism is too restrictive on its own.
For example, the following circuit
\[
  \begin{quantikz}[row sep = 0.5em]
    \lstick{x} & \ctrl{1} & \targ{ } & \\
    \lstick{y} & \targ{ } & \ctrl{-1} &
  \end{quantikz}
\]
is obviously physically realizable, but the corresponding program
\begin{equation*}
  \letx{(x', y')}{
    \big(\qifx{x}{\gX\,y}{y}\big)
  }{
    \big(\qifx{y'}{\gX\,x'}{x'}\big)
  }
\end{equation*}
is not well-typed,
since for $(x', y') \colon \qbit \after \qbit$, the first \(\qbit\) \( x' \) will be only available after the second \( \qbit \) \( y' \) has been consumed.

Here, a concept of \emph{first-order proposition} introduced by Simmons and Kissinger~\cite{Simmons2024} comes into play.
The idea is to equip a ``causally simple'' proposition with the attribute ``first-order'' and to admit the following additional rules for first-order propositions (where $F$ is first order and $A$ is arbitrary):
\begin{equation*}
  F \otimes A \iff F \before A
  \qquad
  F \parr A \iff F \after A.
\end{equation*}
Simmons and Kissinger's \emph{causal logic}~\cite{Simmons2024} is pomset logic~\cite{retore_pomset_1997} enriched with the notion of first-order propositions.
In causal logic,
by considering $\qbit$ as a first order type,
we have the type isomorphism
$\qbit \otimes \qbit \iff \qbit \after \qbit$,
which solves our concern.

Based on this intuition, this paper proposes a higher-order quantum programming language with quantum conditional branching, equipped with a type system grounded in causal logic.
For technical reasons, our type system is based on a weaker version of causal logic, namely \emph{intuitionistic BV logic}~\cite{AcclavioS25} extended with the notion of first-order propositions.
We show that the type system behaves as intended: every program of type \( \qbit \multimap \qbit \) denotes a unitary transformation (\cref{prop:semantic-soundness-of-the-type-system}).

Our type system is important not only because it resolves the aforementioned issue concerning causality, but also because it provides new evidence for the relevance of causality-aware logics to higher-order quantum computation.
Previous work showed that structures of causality-aware logics can be found in causal models motivated by quantum computation.
This is undoubtedly an important observation.
However, it was not clear whether these structures appeared merely incidentally, or whether they are essential to higher-order quantum computation.
This paper demonstrates that causality-aware logics can play an active role in solving a concrete problem, thereby providing new evidence for their relevance.

\subparagraph{Categorical Model.}

We also construct a new categorical model.
This model is used to prove the soundness of our type system, and it also answers the previously-mentioned question of whether one can build a model consisting solely of pure supermaps.

Our construction is motivated by the \emph{Caus construction}~\cite{Kissinger2019,Simmons2022,Simmons2024}.
It is a categorical construction
that produces a model of causal logic
(hence models BV-logic and pomset-logic)
from a compact closed category with some additional
structure, called discarding maps
$\discardDiag_A\colon A \longrightarrow I$.
A particular example of such category is $\CPM$,
which is the category of completely positive maps.
By the Caus construction, we can construct the category
$\CausCPM$ that can capture the causal structure of supermaps.

However, this category is insufficient
for two reasons:
(i) it does not rule out non-pure supermaps, such as the OCB process; and
(ii) it does not interpret quantum conditionals.
In particular, the latter is critical.
Let us consider the program
$
  \qifx{x}{U\, y}{y}
$,
whose semantics should be the controlled unitary $CU$.
To define a compositional semantics,
we should be able to construct $CU$ from $U$,
but the map
$U \mapsto CU = (\ketbra{1}{1} \otimes U) + (\ketbra{0}{0}\otimes \id)$
turns out be not completely positive
because it respects global phases.

Our idea is to leverage the connection between \( \Hilb \) and \( \CPM \): we define \(\CausHilb\) as the pullback of \(\Caus[\CPM] \longrightarrow \CPM \) via \( \Hilb \longrightarrow \CPM \).
A high-level categorical argument establishes that \(\CausHilb\) carries the expected logical structure.

\subparagraph{Related Work.}

Some closely related work has already been discussed above.

The quantum SWITCH was first discovered in \cite{Chiribella2013},
and later, many different kinds of supermaps with
indefinite causal order have been found,
including
the OCB process~\cite{Oreshkov2012-ocb},
the Lugano process~\cite{BaumelerFF14,Baumeler_2016},
and the Grenoble process~\cite{Wechs2021}.

The class of supermaps that are smaller than pure supermaps
includes
extensibly causal maps~\cite{OreshkovG2016,FeixAB2016}
which are maps that do not violate
causal inequality~\cite{AbbottGCB16,Branciard_2016,Baumeler_2016},
and 
QC-QC~\cite{Wechs2021}
which is defined as a specific procedure
to create a superposition of causal order.
The QC-QC is included in the class of extensibly causal maps,
but equality of these two classes remains an
open problem.
Our approach is also different from these studies
in that we can describe classes of higher-order functions.
We also conjecture that the supermaps expressible in the language we present
define a subclass of these two classes.

Pomset-logic was introduced in 1997~\cite{retore_pomset_1997,retore_pomset_2020}
as a logic whose correctness is defined by proof-nets.
On the other hand, BV-logic was introduced in
the series of papers
``A system of interaction and structure''~\cite{Guglielmi2007,Tiu06,NguyenS23,BurgerG11,GuglielmiS11}.
The first paper~\cite{Guglielmi2007} introduced BV-logic in 2007,
and the second paper~\cite{Tiu06} discussed the necessity of deep inference.
The third paper was initially intended to prove the equivalence of pomset-logic and BV-logic.
However, although this had long been believed, it was disproved considerably later,
and this negative result for the conjecture became
the third paper~\cite{NguyenS23} in 2023.
The fourth~\cite{BurgerG11} and fifth~\cite{GuglielmiS11} papers concern exponentials.
An intuitionistic variant of BV-logic was published recently in~\cite{AcclavioS25}.
There is prior work relating these logics to the $\pi$-calculus
\cite{acclavio2025proofsexecutiontreespicalculus,HorneT19},
but, to the best of our knowledge,
this is the first paper to incorporate BV-logic into
a programming language via the Curry--Howard correspondence
in a setting not based on the $\pi$-calculus.

Apart from the Caus construction,
there are some categorical studies of supermaps.
In~\cite{Hefford2024},
Heffords and Wilson have studied semantics
of supermaps based on strong profunctors,
and in \cite{BV-model-HeffordWilson_2025},
they also studied the Chu construction~\cite{Barr91},
which is known as a general construction that creates
a $\ast$-autonomous category from a monoidal closed category,
and yields a BV-category from a duoidal category.
Other than that, Li and Zamdzhiev~\cite{QCS-revisited} have studied
the BV-structure in the category of operator spaces.

There is also a study that relates causal relations in physics with linear logic
in \cite{projective-causality-hoffreumon_2024,jencova2026}.
These work can be compared with causal logic~\cite{Simmons2024}.
These studies appear to be based on similar concepts,
yet the structures of union and intersection seem to differ.

\subparagraph{Outline.}

We first review some preliminaries in \cref{sec:preliminary}, and review BV logic in
\cref{sec:bv}.
In \cref{sec:lang}, we introduce our language \ourlang{} together with a simple
(degenerate) denotational semantics in \(\Hilb\).
In \cref{sec:CausHilb}, we define our categorical model of pure supermaps and show that it is a BV-category.
Finally, in \cref{sec:cat-sem-causHilb}, we study the semantics of \ourlang{} in detail:
we prove a full-abstraction theorem and investigate term definability.

\section{Preliminaries}
\label{sec:preliminary}

\subsection{Linear Logic: Proof Systems and Models}

In this section, we briefly review the terminology and background
on linear logic that will be used throughout the paper.
For further details, we refer the reader to the standard literature.

Linear logic, introduced by Girard~\cite{GIRARD19871},
is a logic in which the structural rules of weakening and contraction
are not freely available.
This restriction corresponds to disallowing discarding and copying,
and for this reason linear logic is often described as
a \emph{resource-aware} logic.

Linear logic has several well-known fragments.
One of them is multiplicative intuitionistic linear logic (MILL),
which features the connectives
$\otimes$, $\li$ and the unit $I$.
One may view $\otimes$ and $\li$ as linear analogues of pairing and functions,
and compare them with conjunction $\land$ and implication $\to$
in ordinary logic.
A crucial difference, however, is that
$\otimes$ is not idempotent, \ie $A$ and $A \otimes A$
represent different amounts of resources.
In particular,
$A \otimes B \vdash A$ is not derivable, since $B$ cannot be discarded.
The Curry--Howard--Lambek correspondence between MILL,
the linear lambda calculus, and symmetric monoidal closed categories is well known.

A larger fragment containing MILL is multiplicative linear logic (MLL).
It can be seen as extending MILL with negation
$A^\bot$, which allows one to define
$\llpar$, the De Morgan dual of $\otimes$, as
$A \llpar B \defeq (A^\bot \otimes B^\bot)^\bot$.
The categorical structure corresponding to MLL is a
\emph{$\ast$-autonomous category}, which is defined as follows.
\begin{definition}
  A $\ast$-autonomous category is
  a symmetric monoidal closed category $(\categoryC, \otimes, I, \li)$
  equipped with a fully faithful functor
  $(-)^* \colon \categoryC^\op \longrightarrow \categoryC$
  and a natural isomorphism
  $A \li B^* \iso (A \otimes B)^*$.
  \lipicsEnd
\end{definition}

In particular, among $\ast$-autonomous categories,
there is a somewhat degenerate class
known as \emph{compact closed categories}.
These are categories in which the two
monoidal structures $\otimes$ and $\parr$ coincide.
They can be defined as a monoidal category with
\emph{unit} $I \longrightarrow A^* \otimes A$
and \emph{counit} $A \otimes A^* \longrightarrow I$
satisfying some axioms.

It is also common to extend MLL with the rule
$
  A \otimes B \vdash A \llpar B
$.
This rule is called the \emph{MIX} rule.
Categorically, it corresponds to equipping a
\(*\)-autonomous category with a morphism \(\bot \to I\) satisfying the appropriate
coherence conditions.
In particular, when this morphism is an isomorphism, such a
category is called an \emph{isoMIX category}~\cite{cockett1997proof}.

\subsection{Basics of Quantum Computation and Supermaps}

In this section, we briefly review basic notions from quantum computation.
For precise definitions, see, for example, \cite{Selinger2004a,Selinger2004}.

A (pure) quantum state is represented by a norm-\(1\) element of a Hilbert space.
In this paper we restrict attention to the finite-dimensional setting,
so a pure state is a vector \(v \in \CC^n\) with \(\|v\|=1\).
We write the standard basis vectors \(e_0,\dots,e_{n-1} \in \CC^n\) as
\(\ket{0},\dots,\ket{n-1}\).
For qubits, the basis states \(\ket{0}\) and \(\ket{1}\) are often identified with
the Booleans \texttt{false} and \texttt{true}.
Unlike the classical case, a qubit state need not be \(\ket{0}\) or \(\ket{1}\);
it can also be any linear combination \(\alpha\ket{0}+\beta\ket{1}\),
called a \emph{superposition} state.
In quantum computation one also considers \emph{mixed states}, which represent
probabilistic mixtures of pure states.
Mathematically, a mixed state on \(\CC^n\) is a \emph{density matrix},
\ie a positive semidefinite operator \(\rho \in \CC^{n\times n}\) with
\(\Tr(\rho)=1\).
Equivalently, \(\rho\) can be written as a convex combination of rank-one projectors:
$
  \rho \;=\; \sum_i p_i \ketbra{\psi_i}{\psi_i}
$
where
$
  p_i \ge 0,\ \sum_i p_i = 1,\ \|\psi_i\|=1.
$
A pure state \(\ket{\psi}\) corresponds to the special case
\(\rho = \ketbra{\psi}{\psi}\), which has rank \(1\).

To describe general state transformations, we use \emph{quantum channels}.
Formally, a quantum channel is a completely positive trace-preserving (CPTP)
linear map
\(\Phi : \mathcal{L}(\CC^n) \to \mathcal{L}(\CC^m)\),
where \(\mathcal{L}(\CC^n)\) denotes the set of $n \times n$ matrices.
A particularly important subclass consists of channels that preserve purity.
Such a channel is implemented by an \emph{isometry} \(V : \CC^n \to \CC^m\)
satisfying \(V^\dagger V = I\), via \(\Phi(\rho) = V \rho V^\dagger\).
We refer to these as \emph{pure channels}.
Every channels can be decomposed to pure channels followed
by some qubit discardings.

We also consider \emph{higher-order} transformations that take quantum channels as inputs.
In particular, it should preserve the trace-preserving property.
Moreover, when represented in a matrix form
(e.g.\ via a Choi--Jamio\l{}kowski-type representation),
the supermap itself is \emph{completely positive}.
Following the definition in \cite{Yokojima2021consequencesof},
we define \emph{pure supermaps} as a linear map
$
  S\colon
  \big(\bigotimes_{n = 1}^N \Lin(A_i) \li \Lin(B_i)\big)
  \longrightarrow \Lin(A) \li \Lin(B)
$
such that for all auxiliary (finite-dimensional) Hilbert spaces
$X_i$ and $Y_i$
and isometry operators
$f_i \colon A_i \otimes X_i \li B_i \otimes Y_i$,
the partial application to $S$
described as follows
defines an isometry
$A \otimes_i X_i \li B \otimes_i Y_i$:
\[
\begin{tikzpicture}[yscale=0.7]
	\begin{pgfonlayer}{nodelayer}
		\node [style=none] (0) at (5, -1.5) {};
		\node [style=none] (1) at (-0.25, -1.5) {};
		\node [style=none] (3) at (0.25, -1) {};
		\node [style=none] (4) at (0.25, 1) {};
		\node [style=none] (6) at (5, 1.5) {};
		\node [style=none] (7) at (-0.25, 1.5) {};
		\node [style=none] (8) at (1, -1) {};
		\node [style=none] (9) at (1, 1) {};
		\node [style=none] (10) at (1, -0.5) {};
		\node [style=none] (11) at (1, 0.5) {};
		\node [style=none] (12) at (0.75, -0.75) {$A_1$};
		\node [style=none] (13) at (0.75, 0.75) {$B_1$};
		\node [style=none] (14) at (0.5, 1.5) {};
		\node [style=none] (15) at (0.5, 1.75) {};
		\node [style=none] (16) at (0.5, 2) {$B$};
		\node [style=none] (17) at (2, -0.5) {};
		\node [style=none] (18) at (0.5, -0.5) {};
		\node [style=none] (19) at (0.5, 0.5) {};
		\node [style=none] (20) at (2, 0.5) {};
		\node [style=none] (21) at (1.5, -1.75) {};
		\node [style=none] (22) at (1.5, -0.5) {};
		\node [style=none] (23) at (1.5, 0.5) {};
		\node [style=none] (24) at (1.5, 1.75) {};
		\node [style=none] (25) at (1.5, -2) {$X_1$};
		\node [style=none] (26) at (1.5, 2) {$Y_1$};
		\node [style=none] (27) at (0, 0) {$S$};
		\node [style=none] (28) at (1.375, 0) {$f_1$};
		\node [style=none] (29) at (0.5, -1.75) {};
		\node [style=none] (30) at (0.5, -1.5) {};
		\node [style=none] (31) at (0.5, -2) {$A$};
		\node [style=none] (33) at (3.25, -1) {};
		\node [style=none] (34) at (3.25, -0.5) {};
		\node [style=none] (35) at (3.25, 0.5) {};
		\node [style=none] (36) at (3, -0.75) {$A_2$};
		\node [style=none] (37) at (4.25, -0.5) {};
		\node [style=none] (38) at (2.75, -0.5) {};
		\node [style=none] (39) at (2.75, 0.5) {};
		\node [style=none] (40) at (4.25, 0.5) {};
		\node [style=none] (41) at (3.75, -0.5) {};
		\node [style=none] (43) at (3.625, 0) {$f_2$};
		\node [style=none] (44) at (3, 0.75) {$B_2$};
		\node [style=none] (45) at (3.75, -1.75) {};
		\node [style=none] (46) at (3.75, -0.5) {};
		\node [style=none] (47) at (3.75, -2) {$X_2$};
		\node [style=none] (48) at (3.75, 0.5) {};
		\node [style=none] (49) at (3.75, 1.75) {};
		\node [style=none] (50) at (3.75, 2) {$Y_2$};
		\node [style=none] (51) at (2.25, 1) {};
		\node [style=none] (53) at (3.25, 1) {};
		\node [style=none] (54) at (2.25, -1) {};
		\node [style=none] (55) at (2.5, 1) {};
		\node [style=none] (56) at (2.5, -1) {};
		\node [style=none] (57) at (4.5, -1) {};
		\node [style=none] (58) at (4.5, 1) {};
	\end{pgfonlayer}
	\begin{pgfonlayer}{edgelayer}
		\draw (0.center) to (1.center);
		\draw (1.center) to (7.center);
		\draw (7.center) to (6.center);
		\draw (4.center) to (3.center);
		\draw (3.center) to (3.center);
		\draw (8.center) to (8.center);
		\draw (8.center) to (10.center);
		\draw (11.center) to (9.center);
		\draw (14.center) to (15.center);
		\draw (17.center) to (20.center);
		\draw (20.center) to (19.center);
		\draw (19.center) to (18.center);
		\draw (18.center) to (17.center);
		\draw (23.center) to (24.center);
		\draw (21.center) to (22.center);
		\draw (29.center) to (30.center);
		\draw (33.center) to (33.center);
		\draw (33.center) to (34.center);
		\draw (37.center) to (40.center);
		\draw (40.center) to (39.center);
		\draw (39.center) to (38.center);
		\draw (38.center) to (37.center);
		\draw (45.center) to (46.center);
		\draw (48.center) to (49.center);
		\draw (51.center) to (4.center);
		\draw (0.center) to (6.center);
		\draw (35.center) to (53.center);
		\draw (54.center) to (3.center);
		\draw (54.center) to (51.center);
		\draw (56.center) to (55.center);
		\draw (57.center) to (56.center);
		\draw (58.center) to (55.center);
		\draw (57.center) to (58.center);
		\draw (27.center) to (27.center);
	\end{pgfonlayer}
\end{tikzpicture}
.
\]

\subsection{Definition of Basic Categories}

We define two basic categories $\Hilb$ and $\CPM$
which are both standard model of quantum computation.
Both categories are compact closed.

\begin{definition}
  The category $\Hilb$ consists of the set of objects
  $\Ob(\Hilb) = \Nat$ where the maps
  $\Hilb(n, m)$ is defined by the linear maps
  $\CC^n \longrightarrow \CC^m$.

  This category $\Hilb$ is compact closed where the structures
  are defined as follows:
  \begin{itemize}
    \item The monoidal unit is $1$, and the monoidal product of objects $n \otimes m$ is $n \times m$.
    \item The monoidal product of maps are defined by the tensor product,
          \ie for linear maps \( f \in \Hilb(n,n') \) and \( g \in \Hilb(m,m') \), when \( f \ket{i} = \sum_{i'} \alpha_{i'} \ket{i'} \) and \( g \ket{j} = \sum_{j'} \beta_{j'} \ket{j'} \), let \( (f \otimes g)(\ket{i} \otimes \ket{j}) := \sum_{i',j'} \alpha_{i'} \beta_{j'} (\ket{i'} \otimes \ket{j'}) \).
    \item The monoidal structure is strict: the associater
          $\alpha \colon (A \otimes B) \otimes C
            \longrightarrow A \otimes (B \otimes C)$
          and the uniters
          $\ell \colon I \otimes A \longrightarrow A$ and
          $r \colon A \otimes I \longrightarrow A$ are identity.
    \item The dual of an object is itself $n^* = n$,
          and the dual map $f^* \colon m \longrightarrow n$
          of a map $f \colon n \longrightarrow m$ is defined by the transpose.
    \item The unit map
          $\eta \colon 1 \longrightarrow n \otimes n$
          and the counit map
          $\varepsilon \colon n \otimes n \longrightarrow 1$
          are defined as follows:
          \begin{align*}
            \eta \colon
            1 \longmapsto \textstyle\sum_{i = 0}^n \ket i \otimes \ket i,
            \qquad
            \varepsilon \colon
            \ket i \otimes \ket j \longmapsto \delta_{ij}.
          \end{align*}
  \end{itemize}
  It also admits finite biproducts defined by
  $n \oplus_\Hilb m \defeq n + m$.
  \lipicsEnd
\end{definition}

\begin{definition}
  The category $\CPM$ has sequences of natural numbers
  $\vec n = (n_1, \dots, n_k)$ as objects,
  and each map $f \in \CPM(\vec n, \vec m)$
  is defined by a matrix of maps
  $
    ( f_{ij} ) \colon \vec n \longrightarrow \vec m
  $
  where
  $f_{ij} \colon \Lin(\CC^{n_i}) \longrightarrow \Lin(\CC^{m_j})$
  is completely positive.
  The composition is defined by
  the matrix multiplication, \ie
  $(g_{k\ell}) \circ (f_{ij}) = (\sum_i g_{j\ell} \circ f_{ij})$.

  This category is also compact closed with the following structures:
  \begin{itemize}
    \item The monoidal unit is $(1)$, and the monoidal product $\vec n \otimes \vec m$ is given by
          \[
            (n_1m_1,\, n_1m_2,\, \dots,\, n_1m_\ell,\, n_2m_1,\, \dots,\, n_km_\ell).
          \]
    \item The monoidal product of morphisms $(f_{ij}) \otimes (g_{i'j'})$
          is defined by the pointwise Kronecker product of CPMs
          $(f_{ij} \otimes g_{i'j'})$.
    \item It is strict monoidal.
    \item The dual of an object is itself ${\vec n}^* = \vec n$,
          and the dual of $(f_{ij})$ is defined by
          $(f^*_{ji})$ where $f^*_{ji}$ is
          the transpose of $f_{ij}$.
    \item The unit and counit are defined by
          $(\eta_{1,ij})$ and $(\varepsilon_{ij,1})$ where
          \begin{align*}
            \eta_{1,ij}        &
            = \begin{cases}
                1 \longmapsto \sum_{k,\ell \in \{0,\dots,n_i - 1\}}
                \ketbra{k}{\ell} \otimes \ketbra{k}{\ell}
                                & \text{if } i = j
                \\
                1 \longmapsto 0 & \text{otherwise},
              \end{cases}
            \\
            \varepsilon_{ij,1} &
            = \begin{cases}
                \ketbra{k}{\ell} \otimes \ketbra{k'}{\ell'}
                \longmapsto
                \delta_{k,k'} \delta_{\ell\ell'}
                  & \text{if } i = j   \\
                \ketbra{k}{\ell} \otimes \ketbra{k'}{\ell'}
                \longmapsto
                0 & \text{otherwise}.
              \end{cases}
          \end{align*}
  \end{itemize}
  It also admits finite biproducts defined by
  the concatination of sequences $\vec{n}\oplus_\CPM\vec{m} = \vec{n}\vec{m}$.
  \lipicsEnd
\end{definition}

These two categories are related with the following
\emph{embedding functor}.\footnote{
  We call this functor embedding
  since it embeds pure quantum computation
  into general quantum computation
  even though it is not full or faithful.
}

\begin{definition}
  There is a functor $\EmbeddingFunctor \colon \Hilb \longrightarrow \CPM$
  that sends $n$ to $(n)$
  and $f$ to the map
  $\lambda \rho.\ f \circ \rho \circ f^\dagger$.
  This functor preserves all the compact closed strucuture
  \emph{on the nose},
  but it does not preserve biproducts.
  \lipicsEnd
\end{definition}

Since these two categories are compact closed,
they can model a linear lambda calculus.
The category $\Hilb$ can interpret pure quantum computation:
the semantics of the qubit type can be defined by $2 = 1 \oplus_\Hilb 1$,
where a state of a qubit $I = 1 \longrightarrow 2$ corresponds with a vector,
and the semantics of a program can be given as a linear functions.
On the other hand, $\CPM$ can interpret any quantum computation
including measuremets or discardings.
Through the embedding functor $\EmbeddingFunctor$,
the pure quantum computation can also be interpreted in $\CPM$.
Additionally, the semantics of Booleans can be defined by
$(1, 1) = 1 \oplus_\CPM 1$,
and the measurement and discarding maps are defined as follows:
\begin{align*}
  \begin{array}{rccc}
    \measmap \defeq (P_i)_{i = 1,2} \colon
     & (2)
     & \longrightarrow
     & (1, 1)
    \\
    \text{where}\ \ P_i \colon
     & M
     & \longmapsto
     & M_{ii},
  \end{array}
  \qquad
  \begin{array}{rccc}
    \discardDiag_n \defeq (\Tr_{n_i})_{i = 1,\dots,k} \colon
     & \vec n
     & \longrightarrow
     & (1)
    \\
    \text{where}\ \ \Tr_{m} \colon
     & M
     & \longmapsto
     & \sum_{j = 1}^{m} M_{jj}.
  \end{array}
\end{align*}

\section{BV-logic}
\label{sec:bv}

In addition to linear logic,
BV-logic~\cite{Guglielmi2007} and pomset-logic~\cite{retore_pomset_1997}
introduce a \emph{non-commutative}
and \emph{self-dual} connective $\seq$ called \emph{seq} or \emph{before}
to the logic.
That is, for each pair of propositions $A$ and $B$,
unlike other connectives such as $\otimes$, $\llpar$, $\&$ or $\oplus$,
there is no symmetry rule $A \seq B \implies B \seq A$,
and it admits De Morgan duality with itself
$(A \seq B)^\bot \iff A^\bot \seq B^\bot$.
Both BV-logic and pomset-logic are conservative extensions
of MLL + nullary and binary MIX.

BV-logic can be seen as a linear logic with causality relation,
where $A \seq B$ means $A$ happens before $B$.
We list below two illustrative derivations in BV.
\begin{gather*}
  A \otimes B\ \vdash\ A \seq B\ \vdash\ A \llpar B
  \qquad\quad
  (A \seq C) \otimes (B \seq D)\ \vdash\ (A \otimes B) \seq (C \otimes D)
\end{gather*}
In linear logic, $A \otimes B$ can be understood as a pair
of resources $A$ and $B$.
Here, in BV-logic, we can additionally assume
$A \otimes B$ as a pair \emph{without causal ordering}.
Therefore, the first derivation above $A \otimes B \vdash A \seq B$
can be interpreted as
``from a pair of resources $A \otimes B$,
we can force a causal relation to obtain $A \seq B$''.
Dually, $A \llpar B$ can be understood as a pair
\emph{with some causation}.
So $A \seq B \vdash A \llpar B$ can be interpreted as
``by forgetting the detail of causal relation $A \seq B$,
we obtain a pair of resource with unknown causation $A \llpar B$''.
The second is the \emph{interchange rule},
which can also be understood similarly as,
``by assuming extra causal relations from $A$ to $D$ and from $B$ to $C$,
we can obtain $(A \otimes B) \seq (C \otimes D)$ from $A \seq C$ and $B \seq D$''.
Similar interchange rules also appear in other logic literature
that addresses causality in a context of concurrency~\cite{HoareMSW11,PaquetS24}.

BV-logic is famous for its style of inference rules
called \emph{deep inference}~\cite{Tiu06}.
Deep inference can be seen as a special kind of sequent calculus,
and adopted by BV-logic to admit cut elimination.
While usual sequent calculi have rewriting rules that
only pattern match the outermost connective of a formula,
deep inference allows one to pattern match a formula recursively at arbitrary depth,
described as follows:
\begin{proofrule}
  \infer{
    A \vdash B
  }{
    S\{ A \} \vdash S\{ B \}
  }
  \qquad \text{where} \qquad
  \begin{aligned}
    S\{\cdot\} &
    \Coloneqq \{\cdot\}
    \sor A \otimes S
    \sor S \otimes A
    \sor A \seq S
    \\ & \ \ \sor
    S \seq A
    \sor A \llpar S
    \sor S \llpar A
  .
  \end{aligned}
\end{proofrule}

\subsection{Intuitionistic BV}

Since we use a logic in this paper to define a type system for a
lambda calculus, it is natural, via the Curry-Howard correspondence,
to consider the intuitionistic variant of BV logic.

Intuitionistic BV-logic (IBV) was introduced in \cite{AcclavioS25}.
In IBV, we have $\li$ instead of $\llpar$ as in MILL.
IBV is proven to be a conservative extension of MILL.
It is also defined by deep inference rules so that 
it admits cut elimination.

We present a simplified sequent calculus \emph{sequentIBV}
for IBV in \cref{fig:IBV-sequent}.
Instead of having deep inference rules to admit cut-elimination,
we rather define it as a usual sequent calculus.
This also allows us to significantly reduce the number of rules
which were needed in IBV to comprehensively represent
all normal forms.
As we will see in the next section,
the correspondence between proofs and programs 
becomes much clearer in this simpler sequent calculus style definition.
\begin{figure}
  \begin{proofrules}[.5em]
    \infer
    {}
    {A \vdash A}

    \infer
    {\Gamma, A, B \vdash C}
    {\Gamma, A \otimes B \vdash C}

    \infer
    {\Gamma \vdash A\\
    \Gamma' \vdash B}
    {\Gamma, \Gamma' \vdash A \otimes B}

    \infer
    {\Gamma, A \vdash B}
    {\Gamma \vdash A \li B}

    \infer
    {\Gamma \vdash A\\
    \Gamma', B \vdash C}
    {\Gamma, \Gamma', A \li B \vdash C}

    \infer{\Gamma \vdash A}
    {\Gamma, I \vdash A}

    \infer{}
    {\ \ \vdash I}

    \infer
    {\Gamma \vdash A\\
    \Gamma', A \vdash B}
    {\Gamma, \Gamma' \vdash B}

    \infer
    {}
    {\ \  \vdash I \seq I}

    \infer
    {\Gamma, A \vdash B\\
    \Gamma', C \vdash D}
    {\Gamma, \Gamma', A \seq C \vdash B \seq D}

    \infer
    {
      \Gamma \vdash A \seq C\\
      \Gamma' \vdash B \seq D
    }{
      \Gamma, \Gamma' \vdash
      (A \otimes B) \seq (C \otimes D)
    }

    \infer
    {
      \Gamma \vdash (A \li C) \seq (B \li D)
    }{
      \Gamma
      \vdash
      (A \seq B) \li (C \seq D)
    }

    \infer{}{
      (A \seq B) \seq C
      \dashvdash
      A \seq (B \seq C)
    }
  \end{proofrules}
  \caption{Derivation rules in sequentIBV.}
  \label{fig:IBV-sequent}
\end{figure}

As expected, sequentIBV and IBV have the same expressivity.
See \cref{app:sec:logic-lang}.

\begin{proposition}\label{prop:IBV-eq-sequentIBV}
  $A \vdash B$ is derivable in sequentIBV
  if and only if $A \li B$ is derivable in IBV~\cite{AcclavioS25}.
\end{proposition}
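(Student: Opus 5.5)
The plan is to prove both directions by translation: first show that the sequentIBV rules are admissible in IBV, then that every derivable IBV implication can be reproduced in sequentIBV. Throughout, I read a sequent $\Gamma \vdash B$ with $\Gamma = A_1,\dots,A_n$ as the formula $A_1 \otimes \dots \otimes A_n \li B$, with the empty context read as $I$. Two facts from \cite{AcclavioS25} will be used: IBV is a conservative extension of MILL, so every MILL-derivable implication is available in IBV, and IBV admits cut elimination, so its derivable implications are closed under composition.

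\emph{Soundness (sequentIBV to IBV).} I will argue by induction on the sequentIBV derivation, showing that each rule preserves IBV-derivability under the reading above.
\begin{itemize}
\item The MILL rules (identity, $\otimes$-left and right, $\li$-left and right, the $I$ rules and cut) are handled by conservativity over MILL. Cut additionally uses closure under composition, i.e.\ cut elimination in IBV.
\item The axiom $\vdash I \seq I$ corresponds to the IBV unit law $I \equiv I \seq I$.
\item Associativity of $\seq$ is a primitive equivalence of IBV.
\item The interchange rule $(A\seq C)\otimes(B\seq D) \vdash (A\otimes B)\seq(C\otimes D)$ is the medial/switch-type rule of IBV.
\item The $\li$-distribution rule, from $(A\li C)\seq(B\li D)$ to $(A\seq B)\li(C\seq D)$, is the intuitionistic form of the BV interaction of $\seq$ with $\llpar$. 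I expect to derive it from the IBV rule relating $\seq$ and $\li$.
\item The monotonicity rule for $\seq$ (from $\Gamma,A\vdash B$ and $\Gamma',C\vdash D$ infer $\Gamma,\Gamma',A\seq C\vdash B\seq D$) needs more care. First apply deep-inference monotonicity to get $(\Gamma\otimes A)\seq(\Gamma'\otimes C) \vdash B \seq D$. Then precompose with the derivable $\Gamma\otimes\Gamma'\otimes(A\seq C) \vdash (\Gamma\otimes A)\seq(\Gamma'\otimes C)$, which comes from interchange together with $X \equiv X\seq I$ and $X \equiv I \seq X$.
\end{itemize}

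\emph{Completeness (IBV to sequentIBV).} Suppose $A\li B$ is derivable in IBV. Since a deep inference derivation is a sequence of rewrites, it is enough to show two things in sequentIBV: each IBV rewrite instance $C\vdash D$ is derivable, and derivability is closed under arbitrary contexts. Closure under contexts will be shown by induction on the context $S\{\cdot\}$. The cases $\otimes$ and $\li$ are standard MILL congruences. The $\seq$ case uses the $\seq$-monotonicity rule together with the identity axiom. A derivation of $A \li B$ then becomes a chain $A\vdash\dots\vdash B$, and these links are joined by cut. Note that the target concerns a single formula $A$ on the left, so the statement is stated exactly for this reading.

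\emph{Main obstacle.} The hardest step is checking that each of the many IBV rewrite rules, which were designed to cover all normal forms, can be obtained from the reduced rule set of \cref{fig:IBV-sequent}. For every such rule I will try to decompose it into interchange, the $\li$-distribution rule, unit laws for $\seq$, and MILL reasoning via cut. The delicate cases are the IBV rules mixing $\li$ with $\seq$ in negative position, such as $(A\li B)\seq C \vdash A\li(B\seq C)$. For these I would first obtain the rule $A\seq(B \li C) \vdash \dots$ variants from the $\li$-distribution rule instantiated with units, using $I\li C\equiv C$ and $I\seq X\equiv X$, and then finish with currying.
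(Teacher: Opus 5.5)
Your proposal follows the paper's proof. The sequentIBV-to-IBV direction is a rule-by-rule check, which the paper simply calls easy. What you actually need there is that the IBV rules are schematic in arbitrary formulas, including ones containing $\seq$, rather than conservativity over MILL. The converse derives each basic IBV rule in sequentIBV and closes derivability under contexts $S\{\cdot\}$ using the $\otimes$, $\li$ and $\seq$ rules, exactly as in the paper's appendix figures.

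The case you flag as delicate, $(A\li B)\seq C\vdash A\li(B\seq C)$, is immediate in the paper: apply the $\seq$ rule to $A, A\li B\vdash B$ and $C\vdash C$, then $\li$-right. Your unit-based detour also works, provided you only use $A\vdash I\seq A$ and $A\vdash A\seq I$. The converses $I\seq A\vdash A$ and $A\seq I\vdash A$ are not derivable in sequentIBV: reading every $\seq$-formula as $\top$ in intuitionistic MALL validates all its rules but not $I\seq a\vdash a$.
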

\begin{proof}[Sketch of Proof]
  Checking that all sequents in IBV are provable in sequentIBV is easy.
  One can also check that all basic derivations
  in IBV can be also proven in sequentIBV.
  To check that all derivations in IBV
  that use the deep inference principle in the proof are also
  derivable in sequentIBV,
  it suffices to prove that,
  in sequentIBV,
  if $X \vdash Y$ is derivable,
  then $S\{X\} \vdash S\{Y\}$
  (or $S\{Y\} \vdash S\{X\}$ if the hole in $S$ is at the negative position)
  is also derivable for each following $S$ defined by
  \begin{align*}
    S\{\cdot\} &
    \Coloneqq \{\cdot\}
    \sor A \otimes S
    \sor S \otimes A
    \sor A \seq S
    \sor S \seq A
    \sor A \li S
    \sor S \li A
    .
  \end{align*}
  For example, if $S = \{\cdot\} \seq A$,
  this can be done as follows:
  \[
    \infer*{
      \infer*{\pi}{
        X \vdash Y
      }
      \\
      A \vdash A
    }{
      X \seq A \vdash Y \seq A
    }
    .
    \tag*{\qedhere}
  \]
\end{proof}

\subsection{First Order Proposition in Causal Logic}

Simmons and Kissinger~\cite{Simmons2024} defined \emph{causal logic}
as an extension of pomset-logic obtained
by adding \emph{first-order propositions}.
This is a particularly natural extension:
for example, in the proof-net style proof of linear logic,
it is shown that, in an appropriate sense, adding the seq-connective
is equivalent to adding first-order propositions
\cite[Proposition 4.13]{Simmons2024}.
Concretely, these first-order propositions satisfy the following.
\begin{align*}
  \forall A\colon \text{first-order},
  \quad
  A \otimes B \dashv\vdash A \seq B
  \quad \text{and} \quad
  B \seq A \dashv\vdash B \llpar A
\end{align*}
Recalling that $A \otimes B \vdash A \seq B \vdash A \llpar B$
is already derivable in BV or pomset-logic,
this means we can prove the converse if we assume first-orderness
for some propositions.
Such first-order propositions can be seen as
``data without input'' or ``data without past''.
So, when $A$ is first-order,
the causal relation $A \seq B$ says
``$B$ can be obtained after $A$ is obtained'',
but since there is nothing to wait to obtain $A$,
this is simply the same as having a pair $A\otimes B$.

In particular, in our intuitionistic setting,
we can replace
$B \seq A \dashv\vdash B \llpar A$
with
$B \li (C \seq A) \dashv\vdash (B \li C) \seq A$,
where the left-to-right implication is always possible.

\section{Language for Pure Quantum Computation}
\label{sec:lang}

We now present our language \ourlang{}.
Our language is \emph{pure} (measurement-free),
\emph{higher-order}, and has \emph{quantum conditional branching}.
We also have a type system inspired by IBV,
so we have the $\seq$ type representing causal ordering.

\subsection{Syntax}

We define the syntax of our language \ourlang{} in \cref{fig:lang-syntax}.
\begin{figure}
  \begin{gather*}
  \begin{aligned}
      \textit{Types} \quad A,B \quad \Coloneqq \quad&
       d\ (\in \Nat^+) \sor \bot \sor A \otimes B \sor A \seq B \sor A \li B
      \\[5pt]
      \textit{First-order types}\quad F : \FO \quad \Coloneqq \quad&
      d \sor \bot \sor F \otimes F \sor F \seq F
      \\[5pt]
      \textit{Terms} \quad M,N \quad \Coloneqq \quad&
       x \sor \unitval \sor \lambda x.\, M \sor M\,N \sor M \otimes N
      \\ \qquad\sor\quad
       & \letx{\ \_}{M}{N} \sor \letx{x \otimes y}{M}{N}
      \\ \qquad\sor\quad
       & M \seq N \sor \letx{x \seq y}{M}{(N_1 \seq N_2)}
      \\ \qquad\sor\quad
       & \sassocL(M) \sor \sassocR(M)
        \sor
       \sinterch(M) %
      \\ \qquad\sor\quad
       &
       \pawait(M) \sor \pexpose(M)
      \\ \qquad\sor\quad
       & \ceil{f} \sor \qinj \sor \gU \sor \qifx{M}{N_1}{N_2}
  \end{aligned}
      \\[7pt]
    \text{where $f$ in $\ceil{f}$ is a type isomorphism, defined at \eqref{eq:type-iso}.}
    \\
    \text{Abbreviations:} \quad
      \qbit \defeq 2,\quad
      (\letx{x}{M}{N}) \defeq (\lambda x.\,N) M,\quad
      A \after B \defeq B \before A.
  \end{gather*}
  \vspace{-5ex}
  \caption{The syntax of \ourlang.}
  \label{fig:lang-syntax}
\end{figure}

\subparagraph{Types.}
For the types, we have $d$ (a positive natural number)
that represents qu$d$its, which are quantum data types of $d$ dimensions.
Our language is based on Intuitionistic BV-logic,
and has
$\otimes$ (tuple type),
$\bot$ (unit type),
$\seq$ (causal ordering),
and $\li$ (function type).
We have first-order types, corresponding to first-order propositions.
The atomic types $d$ and $\bot$ are first-order,
and they are closed under $\otimes$ and $\seq$,
but not under $\li$.

\subparagraph{Terms.}
In our language \ourlang{},
we have terms that are derived from
linear lambda calculus, BV-logic, and pure quantum computation.

In the first two lines of~\cref{fig:lang-syntax},
we have some terms taken from standard linear lambda calculus.
We have
$x$ (variable),
$\unitval$ (the value of the type $\bot$),
$\lambda x.M$ (lambda abstraction),
$M\,N$ (function application),
$M \otimes N$ (pair of terms),
and two let bindings
corresponding to $\bot$ and $\otimes$.
The usual let binding can be derived as $(\lambda x.N)M$.

In the next two lines, we have some terms
derived from BV-logic.
We have $M \seq N$ (causally ordered terms)
and a let binding for $\seq$.
We also have explicit associators $\sassocR$ and $\sassocL$ for $\seq$,
which are the inverse of each other.
The term $\sinterch$ corresponds to the interchange rule in BV-logic.

The terms $\pawait$ and $\pexpose$ are the special rules for
first-order types, which will be explained along with the typing rules
in~\cref{subsec:lang-typingRules}.

In the last line, we have some quantum primitives.
We have $\qinj$ (embedding to larger space $n\to n+m$), $\gU$ (unitary map),
and $\qif$ (quantum conditional).
We also have a casting $\ceil{f}$ along a type isomorphism $f \colon A \iso B$,
where type isomorphism is defined as a congruence relation as
\begin{equation}
  1 \iso \bot,\quad
  n \otimes m \iso n \times m
  .
  \label{eq:type-iso}
\end{equation}

\subsection{Typing Rules}
\label{subsec:lang-typingRules}

The typing rules for \ourlang{} are given in~\cref{fig:typing-rule}.
\begin{figure}[t]
  \begin{proofrules}[0.3em]
    \infer
    {}{x \colon A \vdash x \colon A}

    \infer
    {}{\ \vdash \unitval \colon \bot}

    \infer
    {\Gamma, x \colon A \vdash M \colon B}
    {\Gamma \vdash \lambda x. M \colon A \li B}

    \infer
    {\Gamma \vdash M \colon A \li B
      \\
      \Gamma' \vdash N \colon A}
    {\Gamma, \Gamma' \vdash M\,N \colon B}

    \infer
    {\Gamma \vdash M \colon A\\
      \Gamma' \vdash N \colon B}
    {\Gamma, \Gamma' \vdash M \otimes N \colon A \otimes B}

    \infer
    {\Gamma \vdash M \colon \bot\\
      \Gamma' \vdash N \colon A}
    {\Gamma, \Gamma' \vdash \letx{\,\_}{M}{N} \colon A}

    \infer
    {\Gamma \vdash M \colon A \otimes B\\
      x \colon A, y \colon B, \Gamma' \vdash N \colon C}
    {\Gamma, \Gamma' \vdash \letx{x \otimes y}{M}{N} \colon C}

    \infer
    {\Gamma \vdash M \colon (A \seq B) \seq C}
    {\Gamma \vdash \sassocL(M) \colon A \seq (B \seq C)}

    \infer
    {\Gamma \vdash M \colon A \seq (B \seq C)}
    {\Gamma \vdash \sassocR(M) \colon (A \seq B) \seq C}

    \infer*[lab = seq-intro]
    {\Gamma \vdash M \colon A\\
    \Gamma' \vdash N \colon B}
    {\Gamma, \Gamma' \vdash M \seq N \colon A \seq B}
    \label{lrule:seq-intro}

    \infer*[lab = let-seq]
    {\Gamma_0 \vdash M \colon A \seq B\\
      x \colon A, \Gamma_1 \vdash N_1 \colon C\\
      y \colon B, \Gamma_2 \vdash N_2 \colon D}
    {\Gamma_0, \Gamma_1, \Gamma_2 \vdash
      \letx{x \seq y}{M}{(N_1 \seq N_2)}
      \colon C \seq D}
    \label{lrule:let-seq}

    \infer*[lab = interch]
    {\Gamma \vdash
      M \colon (A \seq C) \otimes (B \seq D)}
    {\Gamma \vdash
      \sinterch(M) \colon (A \otimes B) \seq (C \otimes D)}
    \label{lrule:interch}

    \infer*[lab = await]
    {\Gamma \vdash M \colon A \seq B\\
      A \colon \FO}
    {\Gamma \vdash \pawait(M) \colon A \otimes B}
    \label{lrule:await}

    \infer*[lab = expose]
    {\Gamma \vdash M \colon B \li (C \seq A)\\
      A \colon \FO}
    {\Gamma \vdash \pexpose(M) \colon (B \li C) \seq A}
    \label{lrule:expose}

    \infer*[lab=type-iso]
    {f \colon A \iso B}
    {\vdash \ceil{f} \colon A \li B}
    \label{lrule:type-iso}

    \infer*[lab = qinit]
    {}{\ \vdash \qinj \colon n \li n + m}
    \label{lrule:qinit}

    \infer*[lab = unitary]
    {U \colon \CC^{n} \longrightarrow \CC^{n}; \text{Unitary}}
    {\vdash \gU \colon n \li n}
    \label{lrule:unitary}

    \infer*[lab = qif]
    {\Gamma \vdash M \colon \qbit\\
      \Gamma' \vdash N_1 \colon A\\
      \Gamma' \vdash N_2 \colon A}
    {\Gamma, \Gamma' \vdash
      \qifx{M}{N_1}{N_2} \colon \qbit \after A}
    \label{lrule:qif}
  \end{proofrules}
  \vspace{-3ex}
  \caption{Typing rule of \ourlang{}.}
  \label{fig:typing-rule}
\end{figure}
Typing rules for terms derived from linear lambda calculus are standard
and can be found in the literature.

We explain the terms that are derived from BV-logic.
The two associators have the type as expected.
With the typing rule \ref{lrule:seq-intro},
we can introduce a term of $\seq$-type.
This rule, for example, allows us to derive a term of type
$A \otimes B \li A \seq B$ as
\[
  \vdash
  \lambda x. \letx{y \otimes z}{x}{(y \seq z)} \colon
  A \otimes B \li A \seq B.
\]
With \ref{lrule:let-seq}, we can destruct a term of type $A \seq B$.
This corresponds to the functoriality of $\seq$.
The rule \ref{lrule:interch} corresponds
to the interchange law in BV-logic.
Notably, from this rule, we can also derive its dual, as in
\cref{fig:derivation:term-dual-of-interch}.
\begin{figure}[t]
  \centering
  \begin{derivation}
    \infer*
    {
      \infer*{
        \infer*{
          \Gamma \vdash M \colon (A \li C) \seq (B \li D)
        }{
          \Gamma, x \colon A \seq B \vdash M \otimes x \colon 
          ((A \li C) \seq (B \li D)) \otimes (A \seq B)
        }
      }{
        \Gamma, x \colon A \seq B \vdash \sinterch(M \otimes x) \colon 
        ((A \li C) \otimes A)  \seq ((B \li D) \otimes B)
      }
      \qquad
      \infer*{
      y \colon (A \li C) \otimes A \vdash
      \Let \cdots \colon C
      \\\\
      z \colon (A \li C) \otimes A \vdash
      \Let \cdots \colon D
      }{}
    }{
      \Gamma \vdash
      \lambda x.
      \letx{y \seq z}{
        (\sinterch{M \otimes x})
      }{
          (\letx{y_1 \otimes y_2}{y}{y_1\, y_2})
          \seq
          (\letx{z_1 \otimes z_2}{z}{z_1\, z_2})
      }
      \colon 
      (A \seq B) \li (C \seq D)
    }
  \end{derivation}
  \caption{A term corresponding to the dual of the interchange rule}
  \label{fig:derivation:term-dual-of-interch}
\end{figure}

The \ref{lrule:await} and \ref{lrule:expose} rules are the rules
that are not derived from the original BV-logic,
but from the extension of adding first-order propositions.
Intuitively, with the term $\pawait$, if $A$ is first-order,
we can obtain a pair of data $A$ and $B$
from causally ordered data $A \seq B$
by ``waiting'' for the computation that produces $A$ to finish.
On the other hand, with the term $\pexpose$, again if $A$ is first-order,
we can ``expose'' the output data $A$ from a function $B \li (C \seq A)$
before it is actually executed, but with a restriction that says $A$
has to be used after the function $B \li C$ is resolved.
Note that \ref{lrule:expose} is the only structural term with
no corresponding one when we replace every occurrence of $\seq$ with $\otimes$.

In the remaining terms,
we have \ref{lrule:type-iso} for type casting %
\ref{lrule:qinit} for injection,
and \ref{lrule:unitary} for unitary maps.
The last rule \ref{lrule:qif} for quantum conditional statement is notable.
The $\qif$ term takes a $\qbit$ as an input for the condition,
and returns data of the type $\qbit \after A$,
creating the superposition of $N_1$ and $N_2$ of type $A$.
An intuition behind this typing rule is that,
since the control qubit is required in the execution of $A$,
we have to wait until $A$ is resolved to reuse the control qubit.
In fact, for example, when $A$ is given by
$B_1 \li B_2 \li \cdots B_n \li n$,
the return type $\qbit \after A$ has the following equivalence of types
\begin{align*}
\qbit \after A
&\;\;\equiv\;\;
\qbit \after (B_1 \li B_2 \li \cdots B_n \li n)
\;\;\equiv\;\;
B_1 \li (\qbit \after B_2 \li \cdots B_n \li n)
\\
\cdots&
\;\;\equiv\;\;
B_1 \li B_2 \li \cdots B_n \li (\qbit \after n)
\;\;\equiv\;\;
B_1 \li B_2 \li \cdots B_n \li (\qbit \otimes n),
\end{align*}
which means that the control qubit is returned only after
the execution of $A$ is complete.
Note that this equivalence relies heavily on the fact that
$\qbit$ is first-order.

\begin{remark}
  We defined the type of $\qif$ by $\qbit \after A$
  based on IBV ($\otimes, \li, \seq$),
  but we could instead define the type as $\qbit \llpar A$ and
  adopt classical MLL types ($\otimes, \llpar$).
  Indeed, we can have a similar type equivalence
  with this definition.
  \begin{align*}
    \qbit \llpar (B_1 \li B_2 \li \cdots B_n \li n)
    \quad\cong\quad
    B_1 \li B_2 \li \cdots B_n \li (\qbit \llpar n),
  \end{align*}
  Using \( \qbit \llpar A \) instead of \( \qbit \after A \) can be justified by a law of causal logic, \( \qbit \after A \cong \qbit \llpar A \).

  The reason why we use IBV and causal logic is as follows.
  First, classical MLL involves computational complications arising from its classical nature, whereas the intuitionistic system IBV is easier to relate to programs.
  Second, a type system based on classical MLL needs an additional isomorphism \( \qbit \llpar n \cong \qbit \otimes n \), of which justification cannot be found except for causal logic.
  Third, we also believe our explanation that ``the control qubit should live after the execution of the branches'' intuitively motivates the introduction of $\seq$.
  \lipicsEnd
\end{remark}

The next proposition
states that our language covers all derivable proofs in IBV.
A proof can be found in \cref{app:sec:logic-lang}.
\begin{proposition}
  \label{prop:program-as-proof}
  Each sequentIBV derivation has a corresponding term in \ourlang{}.
  \lipicsEnd
\end{proposition}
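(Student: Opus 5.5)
We proceed by induction on the sequentIBV derivation of $\Gamma \vdash A$. The precise claim is: for every derivation of $A_1,\dots,A_n \vdash B$ and every choice of distinct variables $x_1,\dots,x_n$, there is a term $M$ with $x_1\colon A_1,\dots,x_n\colon A_n \vdash M \colon B$ in \ourlang{}. Here the logical unit $I$ is read as the type $\bot$ of \ourlang{}, whose only constant is $\unitval$. First-order types and the rules \ref{lrule:await}, \ref{lrule:expose} and the quantum primitives play no role, since sequentIBV has no first-order propositions. For each rule of \cref{fig:IBV-sequent} we exhibit a term built from the terms supplied by the induction hypothesis.

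The multiplicative-intuitionistic part is the standard linear lambda calculus translation:
\begin{itemize}
\item the axiom $A \vdash A$ is the variable $x$;
\item the $\otimes$-right rule is $M \otimes N$, and the $\otimes$-left rule is $\letx{x\otimes y}{z}{M}$ for a fresh $z$;
\item the $\li$-right rule is $\lambda x.M$, and the $\li$-left rule is $N[f\,M/y]$, i.e.\ $\letx{y}{f\,M}{N}$;
\item the $I$-left rule is $\letx{\_}{z}{M}$, and $\vdash I$ is $\unitval$;
\item cut is $\letx{x}{M}{N}$, i.e.\ $(\lambda x.N)M$.
\end{itemize}
Contexts combine disjointly, as the typing rules require, because the two premises use disjoint variables.

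The $\seq$-specific rules go as follows.
\begin{itemize}
\item $\vdash I\seq I$ is $\unitval\seq\unitval$, by \ref{lrule:seq-intro}.
\item The functoriality rule (from $\Gamma,A\vdash B$ and $\Gamma',C\vdash D$ infer $\Gamma,\Gamma',A\seq C\vdash B\seq D$) is $\letx{x\seq y}{z}{(M\seq N)}$ by \ref{lrule:let-seq}. This matches the context shape $x\colon A,\Gamma$ and $y\colon C,\Gamma'$ exactly.
\item The interchange rule from $\Gamma\vdash M\colon A\seq C$ and $\Gamma'\vdash N\colon B\seq D$ is $\sinterch(M\otimes N)$.
\item The two directions of associativity are $\sassocL(z)$ and $\sassocR(z)$.
\end{itemize}

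The only rule needing thought is the one taking $\Gamma\vdash (A\li C)\seq(B\li D)$ to $\Gamma\vdash (A\seq B)\li(C\seq D)$. This is precisely the derived term of \cref{fig:derivation:term-dual-of-interch}:
\[
\lambda x.\,\letx{y\seq z}{\sinterch(M\otimes x)}{\bigl((\letx{y_1\otimes y_2}{y}{y_1\,y_2})\seq(\letx{z_1\otimes z_2}{z}{z_1\,z_2})\bigr)}.
\]
Its typing derivation is given there, so I will cite it. The remaining check is that the context splitting in \ref{lrule:let-seq} works, which holds with $\Gamma_1=\Gamma_2=\emptyset$.

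I expect this dual-interchange case to be the main obstacle, and it is already dispatched by the figure. The rest is routine bookkeeping about linear contexts: variable freshness, and the fact that the typing rules place bound variables at the front of the context. I handle the latter by an implicit exchange, i.e.\ by treating contexts as multisets, which the typing rules of \ourlang{} implicitly permit.
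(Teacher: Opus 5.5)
Your proposal is correct and takes the same approach as the paper. The paper's proof (its appendix figure of programs for sequentIBV proofs) is likewise an induction on the derivation with exactly your rule-by-rule translation, and it also handles the dual-interchange rule by reusing the term of \cref{fig:derivation:term-dual-of-interch}.
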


\begin{remark}
  Since sequentIBV does not admit cut elimination,
  our language based on it does not have a natural operational semantics.
  The language we propose is intended for writing supermaps, and given that not every supermaps inherently possess a natural operational meaning, we deliberately choose not to attempt to define an operational semantics in this paper.
  \lipicsEnd
\end{remark}

\subsection{Examples}

\begin{example}
  \label{ex:switch-term-definition}
  In our language, we can define the following term
  $\mathit{SWITCH}$
  that makes the superposition of the order of two function applications
  as the quantum SWITCH~\cite{Chiribella2013}.
  We write $g \circ f$ to mean $\lambda y.\,g\,(f\,y)$.
  \[
    x \colon \qbit,
    f, g \colon A \li A
    \ \vdash\ 
    \qifx{x}{g \circ f}{f \circ g}
    \ \colon\ 
    \qbit \after (A \li A)
  \]

  When $A = d$ for some $d \in \Nat^+$,
  we will later see that the semantics of this term
  actually defines the quantum SWITCH.
  In this case, the type equivalence above means that
  this term is essentially equivalent to the following term
  \begin{align*}
    x \colon \qbit,
    f, g \colon d \li d
    \ \vdash\
    \lambda y. \pawait(\qifx{x}{g\,(f\, y)}{f\,(g \, y)})
    \ \colon\ d \li (\qbit \otimes d)
  \end{align*}
  where the variable $y$ now appears as an input to the term.
  \lipicsEnd
\end{example}

\begin{example}
  Let us assume that we are given two functions
  $\Gamma \vdash M \colon A \li A' \otimes C$
  and
  $\Gamma' \vdash N \colon C \otimes B \li B'$,
  and we want to connect these via $C$ as in the following circuit.
  \[
    \begin{quantikz}[row sep=-0.2em]
      \lstick{A}   & \gate[2]{M} &[-5pt] &[-5pt] & \rstick{A'} \\
      & \setwiretype{n} & \ C\ \setwiretype{q} & \gate[2]{N} & \setwiretype{n} \\
      \lstick{B} & & & & \rstick{B'} \\
    \end{quantikz}
  \]
  In our language, we can describe this term with the type
  $(A \li A') \before (B \li B')$,
  reflecting the fact that information flows
  only from $\mathsf{A} = (A \li A')$ to $\mathsf{B} = (B \li B')$.
  To do so, we assume that $C$ is first-order,
  ensuring that
  no information flows backward through the output $C$ of $M$.
  Then we can derive the following terms.
  \begin{align*}
    \Gamma &\vdash \pexpose(\lambda a.\, \letx{a' \otimes c}{M\,a}{a' \seq c})
    \colon (A \li A') \seq C
    \\
    \Gamma' &\vdash \lambda c.\,\lambda b.\,N\,(c \otimes b)
    \colon C \li (B \li B')
  \end{align*}
  Let us denote these terms as
  $\Gamma \vdash M' \colon \mathsf{A} \seq C$
  and
  $\Gamma' \vdash N' \colon C \li \mathsf{B}$.
  We also have the following term
  $x \colon C \seq (C \li \mathsf{B}) \vdash L \colon \mathsf{B}$.
  \begin{align*}
    x \colon C \seq (C \li \mathsf{B})
    \vdash
    \letx{c \otimes f}{\pawait(x)}{(f\,c)}
    \colon \mathsf{B}
  \end{align*}
  Then we can derive the following
  \begin{derivation}
    \infer*{
      \infer*{
        \Gamma, \Gamma' \vdash M' \seq N'
        \colon (\mathsf{A} \seq C) \seq (C \li \mathsf{B})
      }{
        \Gamma, \Gamma' \vdash \sassocL(M' \seq N')
        \colon \mathsf{A} \seq (C \seq (C \li \mathsf{B}))
      }
    }{
      \Gamma, \Gamma' \vdash
      \letx{a \seq x}{\sassocL(M' \seq N')}{(a \seq L)}
      \colon \mathsf{A} \seq \mathsf{B}
    }
  \end{derivation}
  which has the desired type.
  \lipicsEnd
\end{example}

\begin{example}
  \label{ex:auxiliary-space}
  Let us assume we are given a supermap
  $\Gamma \vdash M \colon (A \li B) \li C$
  and a map $\Gamma' \vdash N \colon A \otimes A' \li B \otimes B'$,
  and we want to connect them via $A$ and $B$
  as in the following image
  to obtain a map of type $A' \li B' \otimes C$.
  \[
\begin{tikzpicture}[yscale=0.7]
	\begin{pgfonlayer}{nodelayer}
		\node [style=none] (0) at (-1.5, 1.25) {};
		\node [style=none] (1) at (-1.5, -0.25) {};
		\node [style=none] (2) at (-1, 1.25) {};
		\node [style=none] (3) at (-1, 0.25) {};
		\node [style=none] (4) at (1, 0.25) {};
		\node [style=none] (5) at (1, 1.25) {};
		\node [style=none] (6) at (1.5, 1.25) {};
		\node [style=none] (7) at (1.5, -0.25) {};
		\node [style=none] (8) at (-1, 1) {};
		\node [style=none] (9) at (1, 1) {};
		\node [style=none] (10) at (-0.5, 1) {};
		\node [style=none] (11) at (0.5, 1) {};
		\node [style=none] (12) at (-0.75, 1.25) {$A$};
		\node [style=none] (13) at (0.75, 1.25) {$B$};
		\node [style=none] (14) at (1.5, 0.5) {};
		\node [style=none] (15) at (1.75, 0.5) {};
		\node [style=none] (16) at (2, 0.5) {$C$};
		\node [style=none] (17) at (-0.5, 2.25) {};
		\node [style=none] (18) at (-0.5, 0.5) {};
		\node [style=none] (19) at (0.5, 0.5) {};
		\node [style=none] (20) at (0.5, 2.25) {};
		\node [style=none] (21) at (-1.25, 1.75) {};
		\node [style=none] (22) at (-0.5, 1.75) {};
		\node [style=none] (23) at (0.5, 1.75) {};
		\node [style=none] (24) at (1.25, 1.75) {};
		\node [style=none] (25) at (-1.5, 1.75) {$A'$};
		\node [style=none] (26) at (1.5, 1.75) {$B'$};
		\node [style=none] (27) at (0, 0) {$M$};
		\node [style=none] (28) at (0, 1.375) {$N$};
	\end{pgfonlayer}
	\begin{pgfonlayer}{edgelayer}
		\draw (0.center) to (1.center);
		\draw (1.center) to (7.center);
		\draw (7.center) to (6.center);
		\draw (6.center) to (5.center);
		\draw (5.center) to (4.center);
		\draw (4.center) to (3.center);
		\draw (3.center) to (2.center);
		\draw (2.center) to (0.center);
		\draw (3.center) to (3.center);
		\draw (8.center) to (8.center);
		\draw (8.center) to (10.center);
		\draw (11.center) to (9.center);
		\draw (14.center) to (15.center);
		\draw (17.center) to (20.center);
		\draw (20.center) to (19.center);
		\draw (19.center) to (18.center);
		\draw (18.center) to (17.center);
		\draw (23.center) to (24.center);
		\draw (21.center) to (22.center);
	\end{pgfonlayer}
\end{tikzpicture}
   \]
  We first assume that $B'$ is first-order.
  Then we have the following terms $M'$ and $N'$.
  \begin{align*}
    \Gamma &\vdash
    M \seq \unitval
    \colon ((A \li B) \li C) \seq \bot
    \\
    \Gamma', a' \colon A' &\vdash 
    \pexpose(\lambda a.\,
    \letx{b \otimes b'}{N(a \otimes a)}{b \seq b'})
    \colon (A \li B) \seq B'
  \end{align*}
  Now, the term $\sinterch(M' \otimes N')$ has type
  $(((A\li B) \li C) \otimes (A \li B)) \seq (I \otimes B')$.
  From the left-hand side of $\seq$, we can obtain $C$,
  and from the right-hand side, we can obtain $B'$.
  Therefore, there is a term $L$ of type
  $\Gamma, \Gamma', a' \colon A' \vdash L \colon C \seq B'$.
  By assuming that $C$ is also a first-order type,
  we can obtain the term
  $\Gamma, \Gamma' \vdash \lambda a'.\,\pawait(L)
  \colon A \li C \otimes B'$.

  In particular, when we are given another supermap of type
  $(A' \li B') \li C'$ as an input, where $C'$ is also FO,
  we can connect $A'$ and $B'$ again to obtain $C \otimes C'$
  as in
  \begin{align*}
\begin{tikzpicture}[yscale=0.6]
	\begin{pgfonlayer}{nodelayer}
		\node [style=none] (0) at (-1.5, 1.25) {};
		\node [style=none] (1) at (-1.5, -0.25) {};
		\node [style=none] (2) at (-1, 1.25) {};
		\node [style=none] (3) at (-1, 0.25) {};
		\node [style=none] (4) at (1, 0.25) {};
		\node [style=none] (5) at (1, 1.25) {};
		\node [style=none] (6) at (1.5, 1.25) {};
		\node [style=none] (7) at (1.5, -0.25) {};
		\node [style=none] (8) at (-1, 1) {};
		\node [style=none] (9) at (1, 1) {};
		\node [style=none] (10) at (-0.5, 1) {};
		\node [style=none] (11) at (0.5, 1) {};
		\node [style=none] (14) at (1.5, 0.5) {};
		\node [style=none] (15) at (1.75, 0.5) {};
		\node [style=none] (17) at (-0.5, 2.25) {};
		\node [style=none] (18) at (-0.5, 0.5) {};
		\node [style=none] (19) at (0.5, 0.5) {};
		\node [style=none] (20) at (0.5, 2.25) {};
		\node [style=none] (21) at (-1, 1.75) {};
		\node [style=none] (22) at (-0.5, 1.75) {};
		\node [style=none] (23) at (0.5, 1.75) {};
		\node [style=none] (24) at (1, 1.75) {};
		\node [style=none] (25) at (-1.5, 1.75) {};
		\node [style=none] (29) at (-1.5, 1.5) {};
		\node [style=none] (30) at (-1.5, 3) {};
		\node [style=none] (31) at (-1, 1.5) {};
		\node [style=none] (32) at (-1, 2.5) {};
		\node [style=none] (33) at (1, 2.5) {};
		\node [style=none] (34) at (1, 1.5) {};
		\node [style=none] (35) at (1.5, 1.5) {};
		\node [style=none] (36) at (1.5, 3) {};
		\node [style=none] (41) at (1.5, 2.25) {};
		\node [style=none] (42) at (1.75, 2.25) {};
	\end{pgfonlayer}
	\begin{pgfonlayer}{edgelayer}
		\draw (0.center) to (1.center);
		\draw (1.center) to (7.center);
		\draw (7.center) to (6.center);
		\draw (6.center) to (5.center);
		\draw (5.center) to (4.center);
		\draw (4.center) to (3.center);
		\draw (3.center) to (2.center);
		\draw (2.center) to (0.center);
		\draw (3.center) to (3.center);
		\draw (8.center) to (8.center);
		\draw (8.center) to (10.center);
		\draw (11.center) to (9.center);
		\draw (14.center) to (15.center);
		\draw (17.center) to (20.center);
		\draw (20.center) to (19.center);
		\draw (19.center) to (18.center);
		\draw (18.center) to (17.center);
		\draw (23.center) to (24.center);
		\draw (21.center) to (22.center);
		\draw (29.center) to (30.center);
		\draw (30.center) to (36.center);
		\draw (36.center) to (35.center);
		\draw (35.center) to (34.center);
		\draw (34.center) to (33.center);
		\draw (33.center) to (32.center);
		\draw (32.center) to (31.center);
		\draw (31.center) to (29.center);
		\draw (32.center) to (32.center);
		\draw (41.center) to (42.center);
	\end{pgfonlayer}
\end{tikzpicture}
.
    \tag*{\lipicsEnd}
  \end{align*}
\end{example}

\begin{example}
  In our language, we can define the following map
  \begin{align*}
    & x, y \colon \qbit,
    f, g \colon \qbit \li \qbit
    \\
    &
    \quad\vdash
      \qifx{x}
      {\mathit{SWITCH}(y, f, g)}
      {(f\,y) \after g}
    \ \colon\ 
    \qbit \after (\qbit \li \qbit),
  \end{align*}
  where $\mathit{SWITCH}$ is the term defined in \cref{ex:switch-term-definition}.
  The qubit $y$
  is used as a control in the \texttt{then} branch
  and as an input of $f$ in the \texttt{else} branch.
  Therefore, this function cannot be represented
  in a language that separates
  ``control'' qubits and ``target'' qubits
  as in \cite{ClmPerd20-LIPIcs-PBS}.
  \lipicsEnd
\end{example}

\subsection{Degenerate Categorical Semantics}
\label{subsec:lang-degenerate-semantics}

A degenerate categorical semantics for our language can be given in $\Hilb$,
using the compact closed structure where
$n \otimes m = n \seq m = n \li m = n \times m$.

For a type $A$, its categorical semantics $\sem{A}_\Hilb \in \Hilb$
is inductively defined as follows:
\[
  \sem{n} = n,\quad
  \sem{\bot} = 1,\quad
  \sem{A \otimes B} = \sem{A \seq B} = \sem{A \li B} = \sem{A} \otimes \sem{B}.
\]

The semantics for terms $\Gamma \vdash M \colon A$
is given in \cref{fig:degenerate-cat-sem}
as a morphism $\sem{M}_\Hilb \colon \sem{\Gamma} \to \sem{A}$ in $\Hilb$
where $\sem{\Gamma}$ represents the tensor product of
the semantics of types in $\Gamma$.
\begin{figure*}
  \begin{proofrules*}
    \sem{x \colon A \vdash x \colon A} = \id_{\sem{A}}

    \sem{\unitval} = \id_{I}

    \sem{\lambda x.\,M} = \Lambda_{\sem{\Gamma},\sem{A},\sem{B}} \sem{M}

    \sem{M\,N} =  \mathrm{ev}_{\sem{A},\sem{B}} \circ (\sem{M} \otimes \sem{N})

    \sem{M \otimes N} =
    \sem{M \seq N} = 
    \sem{M} \otimes \sem{N}

    \sem{\letx{\,\_}{M}{N}} = \sem{M} \otimes \sem{N}

    \sem{\letx{x \otimes y}{M}{N}}
    = \sem{N} \circ (\sem{M} \otimes \id_{\sem{\Gamma'}})

    \sem{\letx{x \seq y}{M}{(N_1 \seq N_2)}}
    = (\sem{N_1} \otimes \sem{N_2})
    \sigma (\sem{M} \otimes \id_{\sem{\Gamma_1}} \otimes \id_{\sem{\Gamma_2}})

    \sem{\sassocR}
    = \alpha

    \sem{\sassocL}
    = \alpha^{-1}

    \sem{\sinterch(M)}
    = \sigma \circ \sem{M}

    \sem{\pawait(M)}
    = \sem{M}

    \sem{\pexpose(M)}
    = \sem{M}

    \sem{\qinj}
    = \qinj

    \sem{\gU}
    = U

    \sem{\qifx{M}{N_1}{N_2}}
    = \qifmap_{A} \circ (\sem{M} \otimes \langle\sem{N_1}, \sem{N_2}\rangle)
  \end{proofrules*}
  \caption{Degenerate categorical semantics in $\Hilb$.}
  \vspace{-0.7em}
  \captionsetup{singlelinecheck=off}
  \caption*{
    Here,
    $\Lambda_{X,Y,Z} F \colon X \to (Y \li Z)$ is the transpose map of
    $F \colon X \otimes Y \to Z$,\ \
    $\mathrm{ev}_{X, Y} \colon (X \li Y) \otimes X \to Y$ is the evaluation map,\ \
    $\sigma \colon
    X \otimes Y \otimes Z \otimes W \to
    X \otimes Z \otimes Y \otimes W$ is the isomorphism that swaps the middle two,\ \
    $\langle f, g \rangle \colon X \to Y \otimes Z$ is the map 
    that is obtained from the universality of the product
    from $f \colon X \to Y$ and $g \colon X \to Z$.
  }
  \label{fig:degenerate-cat-sem}
\end{figure*}
For most of the terms, the semantics are based on standard linear lambda calculus.
For example, the semantics of lambda abstraction and function application
are defined via the monoidal closed structure of $\Hilb$.
As already noted,
by identifying $\seq$ and $\otimes$,
all the terms in the first three lines of~\cref{fig:degenerate-cat-sem}
can be regarded as those in a standard linear lambda calculus.
For example, the term
$\letx{x \seq y}{M}{(N_1 \seq N_2)}$
is identified with
$\letx{x \otimes y}{M}{(N_1 \otimes N_2)}$,
so they are defined to have the same semantics.

In the last line of~\cref{fig:degenerate-cat-sem},
we have the definition of semantics of $\pawait$, $\pexpose$,
and the three quantum primitives.
For $\pawait$ and $\pexpose$, the domain and the codomain of the semantics
in $\Hilb$ match, so they can be simply defined as the identities.
The semantics of $\qinj$ and $U$ are defined by the followings.
\begin{align*}
  \qinj\ \colon\ d \longrightarrow d + d';\ \ket i \longmapsto \ket i,
  \qquad
  U\ \colon\ d \longrightarrow d;\ \ket\psi \longmapsto U \ket\psi
  .
\end{align*}
For the semantics of $\qifx{M}{N_1}{N_2}$,
we use a linear map
$\qifmap_n \colon 2 \otimes (n \oplus_\Hilb n) \longrightarrow 2 \otimes n$
for each object $n$ as follows:
\begin{align*}
  \begin{matrix}
    \qifmap_{n} \colon
     & 2 \otimes (n \oplus n)
     & \longrightarrow
     & 2 \otimes n
    \\
     & \ket{0} \otimes (\ket{\varphi} \oplus \ket{\psi})
     & \longmapsto
     & \ket{0} \otimes \ket{\varphi}
     \ \
    \\
     & \ket{1} \otimes (\ket{\varphi} \oplus \ket{\psi})
     & \longmapsto
     & \ket{1} \otimes \ket{\psi}.
  \end{matrix}
\end{align*}

The following two examples show that
the semantics of quantum conditional branching is defined as expected
with this map $\qifmap_n$.

\begin{example}
  Let $M$ be
  $\qifx{x}{\gU\,y}{y}$.
  This term has the type derivation $x, y \colon \qbit \vdash M \colon \qbit \after \qbit$,
  and the semantics are given by the controlled-$U$ gate as
  $\sem{M} = CU$.
  \lipicsEnd
\end{example}

\begin{example}
  When $A \in \Nat^+$,
  the semantics of the term
  $x \colon \qbit,\allowbreak f,g \colon A \li A
  \vdash \mathit{SWITCH} \colon\allowbreak \qbit \after (A \li A)$
  we defined in \cref{ex:switch-term-definition}
  coincides with the usual quantum SWITCH~\cite{Chiribella2013}
  described as
  \begin{align*}
    (\alpha\ket0 + \beta\ket1) \otimes f \otimes g
    \longmapsto
    (\alpha\ket0 \otimes (g f))
    +
    (\beta\ket1 \otimes (f g)).
    \tag*{\lipicsEnd}
  \end{align*}
\end{example}

\begin{remark}
  It is worth noting that a closed monoidal category is a (degenerate) model of IBV, in which \( (A \before B) \cong (A \otimes B) \).
  This is in contract to the classical setting: a \( * \)-autonomous category is not necessarily a model of BV logic.
  The difference comes from the self-duality of \( \before \) required only in the classical case.
  A model of (classical) BV logic satisfying \( (A \before B) \cong (A \otimes B) \) also satisfies \( (A \otimes B) \cong (A \before B) \cong \neg((\neg A) \before (\neg B)) \cong \neg ((\neg A) \otimes (\neg B)) \cong (A \llpar B) \), so it is compact closed.
  We think that this difference in the behavior of \( \before \) between the intuitionistic and classical settings is interesting, although we cannot discuss it further in this paper.
  \lipicsEnd
\end{remark}

\section{Causal Model of Pure Quantum Computation}
\label{sec:CausHilb}

Although we have already defined a categorical model
of \ourlang{} in \cref{subsec:lang-degenerate-semantics},
it does not reflect the causal structure of the language well enough
since it is defined in $\Hilb$ where $\otimes = \seq = \llpar$.
This is problematic and does not allow us to analyze our language
in detail.
For example, it is not easy to prove whether
there is no program that increases the total probability of a state
as in the paradox we explained in the Introduction.

In this section, we construct a new categorical model
of pure quantum computation $\CausHilb$,
which has enough structure to interpret BV-logic
and our language.

\subsection{Caus Construction}

Kissinger, Uijlen and Simmons~\cite{Kissinger2019,Simmons2022}
have developed a construction of a categorical model of causality,
called the \emph{Caus construction}.
With the Caus construction, from a compact closed category $\categoryC$
with products and \emph{discarding maps},
called an \emph{additive precausal category},
one can construct a category $\Caus[\categoryC]$
that captures the essence of causal ordering of processes.
Here, we briefly review their results.

\begin{definition}
  A compact closed category $\categoryC$ with products
  is called \emph{additive precausal}
  if it is equipped with a family of morphisms
  $\{ \discardDiag_A \in \categoryC(A, I) \}_{A \in \categoryC}$
  satisfying the following axioms.
  \begin{enumerate}[
      label=AP\arabic*.,
      leftmargin=\dimexpr 26pt-.0em
    ]
    \item
          $\discardDiag_{A \otimes B} = \discardDiag_A \otimes \discardDiag_B$,\
          $\discardDiag_{I} = \id_I$,\
          $\discardDiag_{A \oplus B} = \discardDiag_A \oplus \discardDiag_B$.
    \item $\discardDiag_A \circ \maxmixDiag_A$ is invertible for any $A \neq 0$,
          where $\maxmixDiag_A \defeq (\discardDiag_{A^*})^*$.
    \item For each $A$, there exists a finite set
          $\{ \rho_i \in \categoryC(I, A) \}_{i = 1}^{n}$
          such that $\discardDiag_A \circ \rho_i = \id_I$ and
          jointly monic, \ie
          $\forall f, g \in \categoryC(A, B).\
            (\forall i.\ f \circ \rho_i = g \circ \rho_i)
            \implies f = g$.
    \item For the commutative semiring of scalars $R \defeq \categoryC(I, I)$,
          the addition is cancellative,
          the canonical preorder induced by addition is total,
          and $R^\times$ forms a group by multiplication.\label{axiom:addtive-precausal:scalar}
    \item $\forall \pi \in \categoryC(A, I).\
            \exists \pi' \in \categoryC(A, I).\
            \exists \lambda \in R.\
            \pi + \pi' = \lambda \cdot \discardDiag_A
          $.
          \lipicsEnd
  \end{enumerate}
\end{definition}

\begin{definition}
  Let $\categoryC$ be an additive precausal category.
  For a set $c \subseteq \categoryC(I, A)$,
  we define the set $c^* \subseteq \categoryC(A, I)$ by
  $\{ \pi \in \categoryC(A, I)
    \mid \forall \rho \in c.\ \pi \circ \rho = \id_I \}$.
  If $c^{**} = c$ holds, $c$ is called \emph{closed}.

  A set $c \subseteq \categoryC(I, A)$ is called \emph{flat}
  if there exist invertible scalars $\mu$ and $\theta$
  such that $\mu \cdot \maxmixDiag_A \in c$
  and $\theta \cdot \discardDiag_A \in c^*$.

  The category $\Caus[\categoryC]$ consists of the following:
  An object is a pair $\mathbf{A} = (A, c_\mathbf{A})$ of an object $A$
  and a closed and flat set $c_\mathbf{A} \subseteq \categoryC(I, A)$.
  A morphism $f \colon \mathbf{A} \longrightarrow \mathbf{B}$ is
  a map $f \in \categoryC(A, B)$ such that
  $\forall \rho \in c_\mathbf{A}.\ f \circ \rho \in c_B$.
  \lipicsEnd
\end{definition}

The category $\Caus[\categoryC]$ is a full subcategory
of $\mathbf{T}(\categoryC)$
where the objects are pairs $(A, c_\mathbf{A})$
of an object $A$ and a closed set $c_\mathbf{A}$.
In \cite{HYLAND2003183},
this category $\mathbf{T}(\categoryC)$ is called
the tight category induced by focussed orthogonality
for the singleton set $\{ \id_I \} \subseteq R$,
so it is $\ast$-autonomous.
By proving flatness is closed under
all $\ast$-autonomous structures,
one can prove the following.

\begin{proposition}[\cite{Kissinger2019}]
  $\Caus[\categoryC]$ is a $\ast$-autonomous category
  with products,
  where the monoidal structure and dual are defined as follows.
  The obvious forgetful functor
  $\Caus[\categoryC] \longrightarrow \categoryC$
  preserves all structures on the nose.
  \begin{align*}
    \mathbf{I}
     &
    \defeq
    (I, \{\id_I\})
    \hspace{4em}
    \mathbf{A}^*
    \defeq
    (A, c_\mathbf{A}^*)
    \\
    \mathbf{A} \otimes \mathbf{B}
     &
    \defeq
    (
    A \otimes B,
    c_{A \otimes B}
    \defeq
    \{
    \rho_A \otimes \rho_B
    \mid
    \rho_A \in c_\mathbf{A},
    \rho_B \in c_\mathbf{B}
    \}^{**}
    )
    \tag*{ \lipicsEnd }
  \end{align*}
\end{proposition}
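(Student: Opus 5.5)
The plan is to obtain the $\ast$-autonomous structure from the general theory of tight orthogonality categories, and then check that the additional condition, flatness, survives each construction. First, $\mathbf{T}(\categoryC)$, with objects the pairs $(A,c)$ where $c\subseteq\categoryC(I,A)$ is closed, is the tight category of the focussed orthogonality given by $\rho\perp\pi \iff \pi\circ\rho=\id_I$. By \cite{HYLAND2003183} it is $\ast$-autonomous, with unit $(I,\{\id_I\}^{**})$, dual $(A,c^*)$ and tensor $(A\otimes B,\{\rho_A\otimes\rho_B\}^{**})$. The forgetful functor to $\categoryC$ preserves this structure strictly, because the underlying maps (associators, symmetry, evaluation, $\eta$, $\varepsilon$) are the ones in $\categoryC$. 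Two facts need a short check. First, $\{\id_I\}$ is already closed: $\{\id_I\}^*=\{\id_I\}$ since a scalar $\pi$ with $\pi\circ\id_I=\id_I$ must equal $\id_I$. Second, $c^*$ is closed for every $c$, since $c^{***}=c^*$. Since $\Caus[\categoryC]$ is a full subcategory, it is enough to show that the flat objects contain $\mathbf{I}$ and are closed under $(-)^*$ and $\otimes$; the internal hom $\mathbf{A}\li\mathbf{B}=(\mathbf{A}\otimes\mathbf{B}^*)^*$ is then flat as well.

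The unit is flat because $\maxmixDiag_I=\discardDiag_I=\id_I$ by AP1, so we may take $\mu=\theta=1$. Flatness of the dual is immediate from the definition, which is symmetric in $c$ and $c^*$: if $\mu\maxmixDiag_A\in c$ and $\theta\discardDiag_A\in c^*$, then $\theta\discardDiag_A\in c^*$ and $\mu\maxmixDiag_A\in c=c^{**}$. Here we read $\maxmixDiag$ and $\discardDiag$ for the dual object through the compact structure $A^*$; this is harmless since $\maxmixDiag_A=(\discardDiag_{A^*})^*$.

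For the tensor, take $\mu=\mu_A\mu_B$. By AP1 and compact closure, $\maxmixDiag_{A\otimes B}=\maxmixDiag_A\otimes\maxmixDiag_B$, so $\mu\maxmixDiag_{A\otimes B}$ is a generator of $c_{\mathbf{A}\otimes\mathbf{B}}$. For the other half, take $\theta=\theta_A\theta_B$. We must show $\theta\discardDiag_{A\otimes B}=\theta_A\discardDiag_A\otimes\theta_B\discardDiag_B$ lies in $c_{\mathbf{A}\otimes\mathbf{B}}^*=\{\rho_A\otimes\rho_B\}^*$. Evaluating on a generator gives $(\theta_A\discardDiag_A\rho_A)(\theta_B\discardDiag_B\rho_B)=\id_I\cdot\id_I$. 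Invertibility of $\mu,\theta$ follows because $R^\times$ is a group (AP4).

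The main obstacle is not flatness but confirming that the orthogonality really is focussed in the sense of \cite{HYLAND2003183}. This is needed for the tensor and the closure operation to interact correctly, in particular for $\{\rho_A\otimes\rho_B\}^{*}$ to agree with the hom $\mathbf{A}\li\mathbf{B}^*$. Concretely, we would check that $f\colon A\to B^*$ sends $c_\mathbf{A}$ into $c_\mathbf{B}^*$ iff its name is orthogonal to all $\rho_A\otimes\rho_B$; this uses the snake equations and associativity of composition in $\categoryC$.

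Finally, products come from the products of $\categoryC$ (biproducts, by additivity). We define $\mathbf{A}\times\mathbf{B}$ on $A\oplus B$ with the closed set $\{\langle\rho_A,\rho_B\rangle\}^{**}$. The projections are then morphisms, and pairing preserves the sets by construction. Flatness uses $\discardDiag_{A\oplus B}=\discardDiag_A\oplus\discardDiag_B$ from AP1. If $\mu_A\neq\mu_B$ the scalars have to be rescaled, and we expect AP2 and AP4 to allow this, at least when either object is zero. We would follow \cite{Kissinger2019} for this case.
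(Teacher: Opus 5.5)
Your argument for the $\ast$-autonomous part is the paper's own. You view $\Caus[\categoryC]$ as the full subcategory of flat objects in the Hyland--Schalk tight category for the focus $\{\id_I\}$, and check that flatness is stable under $\mathbf{I}$, $(-)^*$ and $\otimes$; your three checks are correct. The step you call the main obstacle is not one: the orthogonality is focussed by definition, being given by the focus $\{\id_I\}\subseteq\categoryC(I,I)$, so \cite{HYLAND2003183} applies directly.

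The part that does not close is products, which the paper does not prove but only cites from \cite{Kissinger2019}. Your candidate $(A\oplus B,\{\langle\rho_A,\rho_B\rangle\}^{**})$ is in general \emph{not} flat, and the remedy you hint at points in the wrong direction.

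\emph{Why it fails.} We have $\maxmixDiag_{A\oplus B}=\langle\maxmixDiag_A,\maxmixDiag_B\rangle$, and the projections carry the closed set into $c_\mathbf{A}$ and $c_\mathbf{B}$. So flatness would need a single invertible $\mu$ with $\mu\maxmixDiag_A\in c_\mathbf{A}$ and $\mu\maxmixDiag_B\in c_\mathbf{B}$. But such a scalar is unique for each object: composing with $\theta_A\discardDiag_A$ gives $\theta_A\mu(\discardDiag_A\circ\maxmixDiag_A)=\id_I$. For example, $\mathbf{I}$ and $(I,\{2\cdot\id_I\})$ in $\CPM$ are both flat but admit no common $\mu$.

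\emph{The fix.} Normalise by an isomorphism, as the paper does in the proof of \cref{prop:caushilb-product}.
\begin{itemize}
\item The map $\mu_A^{-1}\cdot\id_A\colon\mathbf{A}\to(A,\mu_A^{-1}c_\mathbf{A})$ is invertible in $\Caus[\categoryC]$, and its target contains $\maxmixDiag_A$.
\item For normalised objects, flatness gives $\discardDiag_A\circ\rho_A=\discardDiag_A\circ\maxmixDiag_A$ for every $\rho_A\in c_\mathbf{A}$.
\item Hence, by AP1, $\discardDiag_{A\oplus B}\circ\langle\rho_A,\rho_B\rangle=\discardDiag_{A\oplus B}\circ\maxmixDiag_{A\oplus B}$ on every generator.
\item This scalar is invertible by AP2, so $\mu=1$ and $\theta=(\discardDiag_{A\oplus B}\circ\maxmixDiag_{A\oplus B})^{-1}$ witness flatness.
\end{itemize}

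\emph{Corrections to your remedy.} Neither AP4 nor a zero-object case is involved. No flat object lives over $0$ at all, since flatness there would force $\id_I=0$. Also, the product cone of the original objects then has projections $\mu_A\pi_A$ and $\mu_B\pi_B$, not the standard ones. No choice of closed set on $A\oplus B$ can make the standard projections a product cone when $\mu_A\neq\mu_B$. So your claim that the forgetful functor preserves products on the nose, with the standard projections, must be weakened to preservation up to this rescaling.
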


\begin{example}
  $\CPM$ is additive precausal,
  thus it admits the $\Caus$ construction.
  \lipicsEnd
\end{example}

\subsection{Our Categorical Model: CausHilb}

The Caus construction is indeed a remarkable construction.
It is not only $\ast$-autonomous,
but can also interpret derivations in BV-logic.
Moreover, $\CausCPM$ does provide a categorical model
that captures causal relationships in quantum computing.
However, it does not provide a model for our language.
In particular, the semantics of quantum conditional $\qif$
in our language is defined using the additive enrichment in $\Hilb$
corresponding to quantum superposition,
which is not in $\CPM$ or $\CausCPM$, where the additive enrichment
corresponds to probabilistic mixing.

One might then consider applying the Caus construction
to $\Hilb$, but this does not work since $\Hilb$ is not
additive precausal.
In fact, the requirement for a category to be additive precausal
is quite strong.
For example, \ref{axiom:addtive-precausal:scalar}
asks the scalars to be totally ordered like $\Real^+$,
which is not the case for $\Hilb(1,1) = \CC$.

Instead of directly applying the Caus construction
to $\Hilb$ or $\CPM$,
we define a category by gluing together
the structure in $\Hilb$ and $\CausCPM$ as follows.

\begin{definition}
  The category $\CausHilb$ consists of the following:
  Objects of $\CausHilb$ are given by pairs $(n, c)$ of
  a natural number $n$ and a closed and flat set
  $c \subseteq \CPM(1, (n))$.
  Morphisms from $(n, c)$ to $(m, c')$ are linear maps
  $f \in \Hilb(n, m)$ such that
  $\forall \rho \in c, (\iota f) \circ \rho \in c'$.

  There is a canonical forgetful functor
  $\CausHilb \longrightarrow \Hilb$,
  and a functor
  $\CausHilb \longrightarrow \CausCPM$
  which maps $(n, c)$ to $((n), c)$
  and $f$ to $\iota(f)$.\footnote{
    Abusing the notation, we also write the functor
    $\CausHilb \longrightarrow \CausCPM$ as $\iota$.
  }
  \lipicsEnd
\end{definition}

In fact, there is an alternative definition for $\CausHilb$.

\begin{lemma}
  The following diagram is a pullback square in $\Cat$.
  \begingroup
  \makeatletter
  \def\tagform@#1{%
    \maketag@@@{(\ignorespaces#1\unskip\@@italiccorr)\quad$\lrcorner$}%
  }%
  \makeatother
  \begin{equation}
    \begin{tikzpicture}[scale=1, transform shape, baseline={(0,0.2)}]
      \node(lu) at (0,1) {$\CausHilb$};
      \node(ld) at (0,0) {$\Hilb$};
      \node(ru) at (3,1) {$\CausCPM$};
      \node(rd) at (3,0) {$\CPM$};
      \draw[1cell] (lu) -- (ld);
      \draw[1cell] (lu) -- (ru);
      \draw[1cell] (ld) -- node[] {$\iota$} (rd);
      \draw[1cell] (ru) -- node[] {$U$} (rd);
      \pullback{0.5,0.6}
    \end{tikzpicture}
    \label{diagram:pullback}
  \end{equation}
  \endgroup
\end{lemma}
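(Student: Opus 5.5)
The plan is to construct the pullback of $\iota \colon \Hilb \to \CPM$ and $U \colon \CausCPM \to \CPM$ in $\Cat$ explicitly, then identify it with $\CausHilb$ up to isomorphism of categories, so that the square (with the two functors from $\CausHilb$ defined above) is a pullback. In $\Cat$, the pullback $\mathcal{P}$ has as objects pairs $(n, \mathbf{A})$ with $n \in \Hilb$, $\mathbf{A} = (A, c_\mathbf{A}) \in \CausCPM$ and $\iota(n) = U(\mathbf{A})$, i.e.\ $A = (n)$. Its morphisms $(n,\mathbf{A}) \to (m,\mathbf{B})$ are pairs $(f, g)$ with $f \in \Hilb(n,m)$ and $g \in \CausCPM(\mathbf{A},\mathbf{B})$ satisfying $\iota f = U g$. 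Composition and identities are componentwise.

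The first step is the comparison functor $K \colon \CausHilb \to \mathcal{P}$, sending $(n,c)$ to $(n, ((n),c))$ and $f$ to $(f, \iota f)$. This is well defined because $\iota f$ maps every $\rho \in c$ into $c'$, so $\iota f$ is a morphism of $\CausCPM$. Since $U$ is faithful (it is just the inclusion of underlying maps), a morphism $(f,g)$ in $\mathcal{P}$ is determined by $f$, because $g$ is forced to equal $\iota f$. The condition that $g$ lies in $\CausCPM$ is exactly the defining condition on $f$ in $\CausHilb$. Hence $K$ is bijective on objects, because $A=(n)$ forces $\mathbf{A} = ((n), c)$ with $c$ closed and flat. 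It is also fully faithful, so it is an isomorphism of categories. Composing $K$ with the two projections gives the forgetful functor to $\Hilb$ and the functor $\iota$ to $\CausCPM$, which shows the square commutes.

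The second step checks the universal property directly, as a sanity check independent of the explicit construction. Take functors $P \colon \mathcal{X} \to \Hilb$ and $Q \colon \mathcal{X} \to \CausCPM$ with $\iota P = U Q$. The mediating functor sends $X$ to $(P X, c_{QX})$, using $U(QX) = (PX)$, and sends $h$ to $P h$. Then $\iota(P h) = U(Q h)$ is a morphism of $\CausCPM$, so $P h$ is a morphism of $\CausHilb$. Uniqueness follows because the forgetful functor $\CausHilb \to \Hilb$ is faithful, and the object component is pinned down by both projections.

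The main obstacle is only bookkeeping: making sure the equality $\iota(n) = U(\mathbf{A})$ holds on the nose rather than up to isomorphism. This is fine, since $\iota$ sends $n$ to the one-element sequence $(n)$ strictly. It also means that objects of $\CausCPM$ whose underlying sequence has length $\neq 1$ are simply excluded, which matches the definition of $\CausHilb$.
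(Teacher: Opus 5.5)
Your proof is correct, and it is the intended argument: the paper states this lemma without proof, treating it as an immediate unwinding of the definition of $\CausHilb$. Your verification supplies exactly that unwinding, namely that objects of $\CausCPM$ over $\iota(n)=(n)$ are precisely the pairs $((n),c)$, and that, since $U$ is faithful, a morphism of the strict pullback is just a linear map $f$ for which $\iota f$ preserves the causal sets; your separate check of the universal property is redundant given the isomorphism $K$, but also correct.
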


\begin{proposition}
  \label{prop:causHilb-is-star-aut}
  $\CausHilb$ is $\ast$-autonomous and
  the functors $\CausHilb \longrightarrow \Hilb$
  and $\CausHilb \longrightarrow \CausCPM$
  preserves all $\ast$-autonomous structures on the nose.
\end{proposition}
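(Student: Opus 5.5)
The plan is to transport the $\ast$-autonomous structure along the pullback square \eqref{diagram:pullback}, using that both legs into $\CPM$ preserve the compact closed structure on the nose. Concretely, I will define the structure componentwise: an object of $\CausHilb$ is a pair $(n,c)$, which is the same data as an object $((n),c)$ of $\CausCPM$ lying over $\iota(n)=(n)$. I set
\[
  \mathbf{I} \defeq (1,\{\id_{(1)}\}),\qquad (n,c)^* \defeq (n,c^*),\qquad (n,c)\otimes(m,c') \defeq (nm,\ c_{(n),c}\otimes c_{(m),c'} \text{ as in } \CausCPM).
\]
This is well defined because $\iota$ strictly preserves objects and tensor, $(n)\otimes(m)=(nm)$ and $(n)^*=(n)$, so the $\CausCPM$ tensor and dual of objects in the image of $\iota$ again have underlying $\CPM$ object of the form $\iota(k)$. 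On morphisms, I take the $\Hilb$ tensor $f\otimes g$ and the $\Hilb$ transpose $f^*$. These are morphisms in $\CausHilb$ because $\iota(f\otimes g)=\iota f\otimes\iota g$ and $\iota(f^*)=(\iota f)^*$, and the right-hand sides are morphisms of $\CausCPM$ by the proposition of \cite{Kissinger2019}.

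Next I check that all structure maps lift. The associator, unitors and symmetry in $\Hilb$ are identities or swaps, and $\iota$ sends them to the corresponding structure maps of $\CPM$, which are morphisms in $\CausCPM$. So they are morphisms in $\CausHilb$, and their inverses lift likewise. For closedness I put $A\li B\defeq(A\otimes B^*)^*$, as in $\CausCPM$. For the currying bijection $\CausHilb(A\otimes B,C)\cong\CausHilb(A,B\li C)$, I use the $\Hilb$ bijection given by composing with $\eta,\varepsilon$. Since $\iota$ preserves $\eta,\varepsilon$ on the nose, the $\Hilb$-transpose $\Lambda f$ satisfies $\iota(\Lambda f)=\Lambda(\iota f)$. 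Hence $f$ satisfies the $\CausHilb$ morphism condition (i.e.\ $\iota f$ lies in $\CausCPM$) exactly when $\Lambda f$ does, because currying is a bijection in $\CausCPM$. I handle the duality isomorphism $A\li B^*\cong(A\otimes B)^*$ and full faithfulness of $(-)^*$ the same way: in $\Hilb$ the transpose is bijective and involutive, and the condition $\iota f\in\CausCPM$ is invariant under it. Naturality and coherence equations hold because they hold in $\Hilb$ and the forgetful functor $\CausHilb\to\Hilb$ is faithful.

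The key general fact is that a morphism of $\CausHilb$ is exactly a $\Hilb$ map whose $\iota$-image is a $\CausCPM$ map. Therefore any construction built from $\Hilb$ operations that $\iota$ preserves strictly, and that is bijective or invertible in $\CausCPM$, restricts to $\CausHilb$. Equivalently, this is a pullback of strict structure-preserving functors, one of which, $U$, is faithful. By construction both projections then preserve the structure on the nose.

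The step I expect to need the most care is this matching on the nose: checking that $\iota$ really sends the $\Hilb$ cup and cap to the $\CPM$ ones and the $\Hilb$ transpose to the $\CPM$ transpose. That fact was asserted when $\EmbeddingFunctor$ was introduced. If it only held up to a canonical isomorphism, I would have to rewrite the objects along that isomorphism. Since the definitions given make $\iota(\eta_n)=\eta_{(n)}$ literally, I expect this to go through.
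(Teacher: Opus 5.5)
Your proof is correct, but it takes a more hands-on route than the paper. The paper's argument is abstract. The category of $\ast$-autonomous categories and strict structure-preserving functors is monadic over $\Cat$ \cite{BLACKWELL19891}, so its forgetful functor creates limits. Since $\Hilb \xrightarrow{\iota} \CPM \xleftarrow{U} \CausCPM$ is a cospan in that category, the pullback $\CausHilb$ inherits the structure, and both projections are automatically strict.

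You instead unpack what this creation of limits amounts to. You work from the primary definition, where a $\CausHilb$ morphism is a $\Hilb$ map whose $\iota$-image is a $\CausCPM$ morphism. You then build $\otimes$, $(-)^*$ and $\li$ componentwise and lift each piece of structure, using that $\iota$ and $U$ are strict on cups, caps and transposes.

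The valuable feature of your version is how it handles the non-equational parts. For the currying bijection and the full faithfulness of $(-)^*$, you show that the bijections themselves, and not just their forward operations, restrict to $\CausHilb$. This works because the $\CausCPM$ bijections are restrictions of the $\CPM$ ones, and $\iota$ commutes with them, so the inverses lift too. In the paper this point is hidden inside the claim that $\ast$-autonomy admits an algebraic presentation in which these inverses are operations preserved on the nose.

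Your argument is self-contained and does not depend on the exact form of the two-dimensional monad theory result. It also directly produces the explicit description of the structure that the paper only extracts afterwards. The paper's argument is much shorter and is reused verbatim for the BV structure (\cref{lem:bv-monadic-over-cat}, \cref{thm:causHilb-is-bv}). Your method would carry over there, since every BV structure map is a $\Hilb$ map whose $\iota$-image is the corresponding $\CausCPM$ one. However, checking $\seq$, $\zeta$ and all the duoidal coherences by hand would be routine but tedious.
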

\begin{proof}
  The category $\caty{AutCat}$
  of $\ast$-autonomous categories and functors
  that preserve every structure on the nose
  is monadic over $\Cat$~\cite{BLACKWELL19891}.
  So the forgetful functor $\caty{AutCat} \longrightarrow \Cat$
  creates limits.
  Since the cospan in the diagram \eqref{diagram:pullback}
  is a cospan in $\caty{AutCat}$,
  $\CausHilb$ is also $\ast$-autonomous.
\end{proof}

From the previous proposition,
we can easily compute the $\ast$-autonomous structure in $\CausHilb$.
For example, the monoidal unit is $\mathbf{I} \defeq (1, \{ \id_I \})$,
and the monoidal product of objects can be
computed in $\CausCPM$,
\ie an object $(n, c_{\mathbf{A}}) \in \CausHilb$ can be identified with
an object of $((n), c_\mathbf{A}) \in \CausCPM$,
and the monoidal product can be computed via this identification.
Note that, even though we can compute the structure in such a way,
$\CausHilb \longrightarrow \CausCPM$ is not full or faithful
and the categories themselves are very different.

\begin{proposition}
  \label{prop:caushilb-product}
  $\CausHilb$ has all finite products and coproducts,
  preserved by the forgetful functor
  $\CausHilb \longrightarrow \Hilb$.
\end{proposition}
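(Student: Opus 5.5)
We plan to construct the products explicitly, reusing the biproduct $n \oplus_\Hilb m = n+m$ of $\Hilb$. Coproducts then follow by duality. Since $\CausHilb$ is $\ast$-autonomous (\cref{prop:causHilb-is-star-aut}) and $(-)^*$ is a contravariant equivalence, products give coproducts via $\mathbf{A} + \mathbf{B} \defeq (\mathbf{A}^* \times \mathbf{B}^*)^*$. The forgetful functor preserves duals on the nose, and in $\Hilb$ the biproduct is self-dual, so preservation of coproducts reduces to preservation of products.

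\textbf{Terminal object and underlying object of the product.} For the terminal object we take $(0, c)$, where $c$ is the unique closed flat subset of $\CPM(1,(0))=\{0\}$; the conditions should hold degenerately. Given $\mathbf{A}=(n,c_\mathbf{A})$ and $\mathbf{B}=(m,c_\mathbf{B})$, the underlying object of the product is $n+m$ with projections $\pi_1,\pi_2$ from $\Hilb$. Note that $\iota(n+m)=(n+m)$ is a single block, not the $\CPM$-product $(n,m)$. A state $\rho \in \CPM(1,(n+m))$ is a positive block matrix $\begin{pmatrix}\rho_{11}&\rho_{12}\\ \rho_{21}&\rho_{22}\end{pmatrix}$, and $\iota(\pi_i)\circ\rho = \rho_{ii}$.

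\textbf{Choice of causal set.} We set
\[
c_{\mathbf{A}\times\mathbf{B}} \defeq \{\rho \mid \rho_{11}\in c_\mathbf{A},\ \rho_{22}\in c_\mathbf{B}\}.
\]
This makes the projections morphisms by construction. For the universal property, take $f\colon \mathbf{C}\to\mathbf{A}$ and $g\colon\mathbf{C}\to\mathbf{B}$, so that $\langle f,g\rangle = \binom{f}{g}$ in $\Hilb$. Then $\iota\langle f,g\rangle\circ\sigma$ has diagonal blocks $f\sigma f^\dagger$ and $g\sigma g^\dagger$, which lie in $c_\mathbf{A}$ and $c_\mathbf{B}$. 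Uniqueness is inherited from $\Hilb$, because morphisms in $\CausHilb$ are just linear maps.

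\textbf{Main obstacle: closedness and flatness.} The key step is to show that this set is closed and flat. For closedness, we would write it as $D^*$ for the set
\[
D=\{\pi_1^{\dagger}\text{-pullback of effects}\} = \{\pi\circ\iota(\pi_1) \mid \pi\in c_\mathbf{A}^*\}\cup\{\pi\circ\iota(\pi_2)\mid \pi\in c_\mathbf{B}^*\}.
\]
Its dual $D^*$ is exactly our set, since $c_\mathbf{A}=c_\mathbf{A}^{**}$, and any set of the form $D^*$ is closed. There is one subtlety: $D^*$ requires $\pi\circ\rho=1$ for all $\pi$ in the union, which is precisely the two diagonal conditions.

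For flatness, we take $\mu\cdot\maxmixDiag_{n+m}$, whose diagonal blocks are $\mu\maxmixDiag_n$ and $\mu\maxmixDiag_m$. The normalisations required by $c_\mathbf{A}$ and $c_\mathbf{B}$ may differ ($\mu_A$ versus $\mu_B$), so instead we use the block-diagonal state $\mu_A\maxmixDiag_n\oplus\mu_B\maxmixDiag_m$. This is a positive invertible multiple of a full-rank state, and flatness as used in Caus only needs full rank in the appropriate sense. If the definition literally demands a scalar multiple of $\maxmixDiag$, we will instead check it against $\theta$ on the effect side.

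For the discarding condition $\theta\discardDiag_{n+m}\in c^*$, we argue via AP5 and the trace decomposition $\discardDiag_{n+m} = \discardDiag_n\pi_1+\discardDiag_m\pi_2$. Here a genuine mismatch can occur when $\theta_A\neq\theta_B$. If it does, we would follow the $\CausCPM$ product construction of \cite{Kissinger2019}, which handles this by rescaling with the flatness scalars. We expect this normalisation bookkeeping to be the delicate part of the proof.
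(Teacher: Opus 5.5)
Your binary-product construction is the paper's: the underlying object is $n+m$, the set $c_{\mathbf{A}\times\mathbf{B}}$ is the same, and coproducts are $(\mathbf{A}^*\times\mathbf{B}^*)^*$. Your closedness argument via $c_{\mathbf{A}\times\mathbf{B}}=D^*$ is correct, and it is more self-contained than the paper's appeal to \cite[Theorem 17]{Simmons2022}.

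The gap is flatness, which you leave open. You rightly notice that $\mu_A$ and $\mu_B$ may differ, but your fix does not work. Flatness asks for a scalar multiple of $\maxmixDiag_{n+m}$ itself, so the block state $\mu_A\maxmixDiag_n\oplus\mu_B\maxmixDiag_m$ does not qualify. The effect-side condition cannot stand in for it, since both conditions are required.

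Nor is this mere bookkeeping. Since $\theta_A\discardDiag_n\in c_\mathbf{A}^*$, every $\sigma\in c_\mathbf{A}$ has $\Tr\sigma=1/\theta_A=n\mu_A$, so $\mu\maxmixDiag_n\in c_\mathbf{A}$ only for $\mu=\mu_A$. Hence for $\mu_A\neq\mu_B$ your set is not flat; for example, take $\mathbf{A}=\mathbf{I}$ and $\mathbf{B}=(1,\{2\})$. In fact no closed flat set on $n+m$ can make the bare projections $p_A,p_B$ morphisms, because that already pins $\Tr\rho_{11}$ and $\Tr\rho_{22}$. Conversely, the effect-side mismatch you fear never occurs: every $\rho\in c_{\mathbf{A}\times\mathbf{B}}$ has $\Tr\rho=1/\theta_A+1/\theta_B$, so a suitable multiple of $\discardDiag_{n+m}$ always lies in $c^*_{\mathbf{A}\times\mathbf{B}}$.

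The missing idea is the paper's opening reduction: $(n,c)\iso(n,\lambda c)$ in $\CausHilb$ for every real $\lambda>0$, via $\sqrt{\lambda}\,\id_n$, because $\iota(\sqrt{\lambda}\,\id_n)\rho=\lambda\rho$. So one may assume $\maxmixDiag_n\in c_\mathbf{A}$ and $\maxmixDiag_m\in c_\mathbf{B}$. This forces $\frac1n\discardDiag_n\in c_\mathbf{A}^*$ and $\frac1m\discardDiag_m\in c_\mathbf{B}^*$, and then $\maxmixDiag_{n+m}\in c_{\mathbf{A}\times\mathbf{B}}$ and $\frac{1}{n+m}\discardDiag_{n+m}\in c^*_{\mathbf{A}\times\mathbf{B}}$. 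Transported back to the original objects, the legs become $\sqrt{\mu_A}\,p_A$ and $\sqrt{\mu_B}\,p_B$. These still form a product cone in $\Hilb$, so preservation is unaffected.

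Separately, drop your terminal object: $\CPM(1,(0))=\{0\}$ has no closed flat subset, since $\{0\}^*=\emptyset$ while $\emptyset$ contains no multiple of $\maxmixDiag_0$. Indeed $\CausHilb$ has no terminal object at all. Flatness makes every $c_\mathbf{A}$ nonempty and forbids $0\in c_\mathbf{B}$, so $0$ is never a morphism. Meanwhile $\iota(e^{i\phi}f)=\iota(f)$ shows that $e^{i\phi}f$ is a morphism whenever $f$ is, so nonempty hom-sets are infinite. The statement can therefore only mean binary (hence nonempty finite) products, which is exactly what the paper proves.
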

\begin{proof}
  Let 
  $\mathbf{A} = (A, c_\mathbf{A})$
  and
  $\mathbf{B} = (B, c_\mathbf{B})$
  be objects of $\CausHilb$.
  Without loss of generality,
  we assume $\maxmixDiag_A \in c_\mathbf{A}$
  and $\maxmixDiag_B \in c_\mathbf{B}$.
  The product of the objects
  is given by
  \[
    \mathbf{A} \times \mathbf{B} \defeq
    (A \oplus_\Hilb B, \{
    \rho \mid (\iota p_A)\rho \in c_\mathbf{A}, (\iota p_B)\rho \in c_\mathbf{B}
    \}),
  \]
  where $p_\mathbf{A}$ is the projection
  $A \oplus_\Hilb B \longrightarrow A$ in $\Hilb$.
  Since $\CausHilb$ is self-dual, it also admits coproducts
  $\mathbf{A} \sqcup \mathbf{B} \defeq
    (\mathbf{A}^* \times \mathbf{B}^*)^*$.
  A detailed proof can be found in \cref{app:sec:product}.
\end{proof}

\subsection{BV-category}
\label{subsec:bv-category}

We now show that our model forms a BV-category,
in which BV-logic has interpretation.
BV-category was first defined in~\cite{Blute2012},
and later studied in \cite{BV-model-HeffordWilson_2025}
in detail.
We first sketch the definition of BV-categories,
asking the reader to consult the other papers for details.

\begin{definition}[\cite{Blute2012,BV-model-HeffordWilson_2025}]
  A BV-category is a category with three monoidal structures
  $\otimes$, $\llpar$, and $\seq$ with isomorphic units $I \iso I^* \iso J$,
  with the following structures:
  \begin{itemize}
    \item a $\ast$-autonomous isoMIX structure with respect to $\otimes$ and $\llpar$,
    \item a self-dual structure $(A \seq B)^* \iso A^* \seq B^*$,
    \item a duoidal category structure with respect to $\otimes$ and $\seq$,
          \ie interchange natural transformation $
            \zeta\colon(A\seq C)\otimes(B\seq D) \longrightarrow
            (A \otimes B) \seq (C \otimes D)
          $
          that is compatible with the monoidal structures,
    \item and another duoidal category structure
          with respect to $\seq$ and $\llpar$,
          which derives as the dual of that for $\otimes$ and $\seq$.
          \lipicsEnd
  \end{itemize}
\end{definition}

To distinguish three associators and left/right unitors for BV-categories,
we call them $\alpha^\bullet$, $\ell^\bullet$ and $r^\bullet$
for each $\bullet \in \{\otimes,\llpar,\seq\}$.

\begin{remark}
  Although the BV-categories are expected to serve as a sound model of BV-logic in a certain appropriate sense,
  this remains a partially open problem.
  Recently, Acclavio~et~al.~\cite{AcclavioSZ26} have proven that a
  \emph{strong BV-category},
  which is a special BV-category with a faithful embedding into a compact closed category that satisfies some coherence axioms,
  does model BV-logic in an appropriate sense.
  Our model and the forgetful functor
  $\CausHilb \longrightarrow \Hilb$ defines a strong BV-category,
  so it is a model of BV-category.
  \lipicsEnd
\end{remark}

Rather than directly proving that $\CausHilb$ is a BV-category by spelling out all coherence diagrams and risking overlooking something, we shall prove this by showing that pullbacks in $\Cat$ preserve the BV-category structure.
For this purpose, we first state that the other three categories
$\Hilb$, $\CPM$, and $\CausCPM$
are BV-categories.

\begin{proposition}
  A compact closed category is a BV-category
  where $\otimes = \seq = \llpar$.
  \lipicsEnd
\end{proposition}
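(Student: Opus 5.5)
We take all three monoidal structures $\otimes$, $\llpar$ and $\seq$ to be the given tensor $\otimes$ of the compact closed category $\categoryC$, with the same associator, unitors and symmetry. The common unit $I$ then trivially gives $I \iso I^* \iso J$, with identity maps if we use $I^* = I$ in the strict cases $\Hilb$ and $\CPM$, and the canonical isomorphism $I \iso I^*$ in general. Compact closedness gives $\ast$-autonomy with $\llpar = \otimes$. The mix map $\bot = I^* \to I$ is this canonical isomorphism, so the category is isoMIX. The mix coherence reduces to naturality of the unitors and of $I\iso I^*$.

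For self-duality of $\seq = \otimes$ we use the canonical isomorphism $(A \otimes B)^* \iso A^* \otimes B^*$ available in any compact closed category. It is built from units and counits, possibly composed with a symmetry, since in a symmetric setting $(A\otimes B)^* \iso B^*\otimes A^*$ canonically. Its compatibility with associators and unitors is standard coherence for compact closed categories. We then check it is involutive in the sense required, namely compatible with $A \iso A^{**}$; this again follows from the snake equations.

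For the duoidal structure between $\otimes$ and $\seq$ (both equal to $\otimes$), we take the interchange
\[
  \zeta \colon (A \otimes C) \otimes (B \otimes D) \longrightarrow (A \otimes B) \otimes (C \otimes D)
\]
to be the middle-four interchange built from associators and the symmetry $\sigma_{C,B}$. The remaining unit maps $J \to J\otimes J$, $I\otimes I \to I$ and $I \to J$ are unitors and identities. Every duoidal axiom then becomes an equation between morphisms built purely from symmetric monoidal structure maps with matching underlying permutations. By Mac Lane's coherence theorem for symmetric monoidal categories, such equations hold automatically. This is the standard fact that any braided (here symmetric) monoidal category is duoidal with both structures equal.

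The duoidal structure between $\seq$ and $\llpar$ is required to be the dual of the first one, transported along the self-duality isomorphisms. Because every component is again a canonical symmetric-monoidal/compact structure map, its coherence once more reduces to Mac Lane coherence, now for compact closed categories, using Kelly–Laplaza coherence where duals are involved. We expect the main obstacle to be exactly these compatibility conditions between the duoidal structures, the self-duality and the $\ast$-autonomous structure listed in the full definition of \cite{BV-model-HeffordWilson_2025}. We would handle them uniformly by observing that every component lies in the free compact closed category on the relevant objects. There, Kelly–Laplaza coherence says that two parallel canonical maps agree whenever their underlying graphs agree, and these graphs agree in every axiom because no traces or loops are introduced.
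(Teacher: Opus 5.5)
The paper states this without proof, as a standard fact. Your argument is the folklore justification it relies on, and it is correct: take all three structures to be $\otimes$, use the middle-four interchange built from the symmetry, use the self-duality $(A\otimes B)^*\cong B^*\otimes A^*\cong A^*\otimes B^*$, and discharge all coherence axioms by Mac Lane and Kelly--Laplaza coherence. You can make your ``no loops'' step precise as follows: in this instance every BV-specific structure map is invertible, and loops in Kelly--Laplaza graphs can only accumulate under composition, so invertible canonical maps are loop-free.
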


\begin{theorem}[\cite{Simmons2022}]
  $\Caus[\categoryC]$ is a BV-category
  and all structures are preserved on the nose
  by the forgetful functor
  $\Caus[\categoryC] \longrightarrow \categoryC$.
  The object $\mathbf{A} \seq \mathbf{B}$ is defined by
  $(A \otimes B, c_{\mathbf{A} \seq \mathbf{B}})$
  where
  \[
    c_{\mathbf{A} \seq \mathbf{B}} \defeq
    \big\{
    h \in \categoryC(I, A \otimes B)
    \mathrel{\big|}
    \forall \pi, \pi' \in c^*_{\mathbf{B}},
    (\id_A \otimes \pi) \circ h =
    (\id_A \otimes \pi') \circ h
    \in c_\mathbf{A}
    \big\}
    \tag*{\lipicsEnd}
  \]
\end{theorem}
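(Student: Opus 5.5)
The plan is to build the BV-structure on $\Caus[\categoryC]$ by lifting it from $\categoryC$. Since the forgetful functor $\Caus[\categoryC] \longrightarrow \categoryC$ is faithful and $\categoryC$, being compact closed, is a degenerate BV-category with $\otimes = \seq = \llpar$, every coherence diagram in $\Caus[\categoryC]$ commutes automatically once its constituent maps are known to be morphisms of $\Caus[\categoryC]$. The argument therefore reduces to three tasks. First, show that $\mathbf{A} \seq \mathbf{B}$ as defined is an object, i.e.\ that $c_{\mathbf{A} \seq \mathbf{B}}$ is closed and flat. Second, show that $\seq$ is functorial. Third, show that each structure map of $\categoryC$ (associator, unitors, interchange $\zeta$, the self-duality isomorphism) respects the causal sets. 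The isoMIX $\ast$-autonomous part is already given by the preceding proposition of \cite{Kissinger2019}, together with the observation that $\mathbf{I}^* = \mathbf{I}$.

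For the first task, I will rewrite $c_{\mathbf{A} \seq \mathbf{B}}$ as an orthogonal. Concretely, I aim to show that $c_{\mathbf{A} \seq \mathbf{B}} = \{\pi_A \otimes \pi_B, \ \text{affine combinations used to witness the constancy condition}\}^*$, so that it is closed. A convenient route is the known characterisation $c_{\mathbf{A}\seq\mathbf{B}} = (c_{\mathbf{A}}^* \seq c_{\mathbf{B}}^*)$-style duality, i.e.\ $c_{\mathbf{A}\seq\mathbf{B}}^* $ is generated by effects $\sum$ of the form $(\pi \otimes \mathrm{id})\circ$(a $\mathbf{B}^*$-indexed family). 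Flatness follows by checking directly that $\maxmixDiag_A \otimes \maxmixDiag_B$, suitably scaled, lies in the set: applying any $\pi \in c_\mathbf{B}^*$ gives $(\pi\circ\maxmixDiag_B)\cdot\maxmixDiag_A$, and flatness of $\mathbf{B}$ together with AP2 and AP5 makes $\pi\circ\maxmixDiag_B$ a constant invertible scalar. The dual statement, $\discardDiag\otimes\discardDiag$ being in the dual set, is immediate.

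Functoriality is next. If $f \colon \mathbf{A}\to\mathbf{A}'$ and $g \colon \mathbf{B}\to\mathbf{B}'$, then for $\pi' \in c_{\mathbf{B}'}^*$ the composite $\pi' g$ lies in $c_\mathbf{B}^*$. Hence $(\mathrm{id}\otimes\pi')(f\otimes g)h = f(\mathrm{id}\otimes \pi' g)h$ is independent of $\pi'$ and lies in $c_{\mathbf{A}'}$.

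The structure maps follow. The inclusions $\mathbf{A}\otimes\mathbf{B} \to \mathbf{A}\seq\mathbf{B} \to \mathbf{A}\llpar\mathbf{B}$ and associativity are checked on generators using closedness. For self-duality, I must show that $c_{\mathbf{A}\seq\mathbf{B}}^* = c_{\mathbf{A}^*\seq\mathbf{B}^*}$. This is the step I expect to be the main obstacle. My approach is to use AP3 and AP5 to decompose effects, following Kissinger–Uijlen's proof that $\seq$ can be characterised as one-way signalling: an effect lies in the dual set iff it factors as $\pi_A$ followed by a $c_\mathbf{A}$-dependent choice of $\pi_B$. The interchange $\zeta$ is then obtained from the $\otimes$–$\seq$ inclusions via a generator check on $\rho_{AC}\otimes\rho_{BD}$, closing under $(-)^{**}$. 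Finally, the $\seq$–$\llpar$ duoidal structure is obtained by dualising the $\otimes$–$\seq$ one.
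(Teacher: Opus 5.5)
Your reduction is the right one. Since $\Caus[\categoryC]\to\categoryC$ is faithful and $\categoryC$ is a degenerate BV-category, it suffices to show two things: that $\mathbf{A}\seq\mathbf{B}$ is an object, and that every structure map of $\categoryC$ (and, for isomorphisms, its inverse) preserves causal sets. Your checks of flatness, functoriality and $\mathbf{A}\otimes\mathbf{B}\to\mathbf{A}\seq\mathbf{B}\to\mathbf{A}\llpar\mathbf{B}$ are fine (the first of these modulo closedness). The substantive steps, however, are not carried out.

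For closedness, the ``convenient route'' through $(c^*_{\mathbf{A}}\seq c^*_{\mathbf{B}})$-style duality presupposes the self-duality you later call the main obstacle. Your first idea, presenting $c_{\mathbf{A}\seq\mathbf{B}}$ as an orthogonal, does work, but it needs an argument. Let $h\in c^{**}_{\mathbf{A}\seq\mathbf{B}}$. Testing against $\pi_A\otimes\pi$ gives $(\id_A\otimes\pi)h\in c_{\mathbf{A}}$. For constancy, take any effect $\alpha$ on $A$ and write $\alpha+\alpha'=\lambda\discardDiag_A$ by AP5. Then $\alpha\otimes\pi+\alpha'\otimes\pi'$ and $\lambda\discardDiag_A\otimes\pi'$ both take the constant value $\lambda\theta^{-1}$ on $c_{\mathbf{A}\seq\mathbf{B}}$ (where $\theta\discardDiag_A\in c^*_{\mathbf{A}}$), so after rescaling both lie in $c^*_{\mathbf{A}\seq\mathbf{B}}$. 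Evaluating on $h$ and cancelling gives $(\alpha\otimes\pi)h=(\alpha\otimes\pi')h$, and AP3 finishes.

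Similarly, associativity and $\zeta$ are not generator checks: $c_{\mathbf{A}\seq\mathbf{B}}$ is not given by generators. Membership in $c_{\mathbf{A}\seq(\mathbf{B}\seq\mathbf{C})}$ or $c_{(\mathbf{A}\otimes\mathbf{B})\seq(\mathbf{C}\otimes\mathbf{D})}$ quantifies over all of $c^*_{\mathbf{B}\seq\mathbf{C}}$ or $c^*_{\mathbf{C}\otimes\mathbf{D}}$, and these contain non-product effects that you must control. For example, for $h\in c_{(\mathbf{A}\seq\mathbf{B})\seq\mathbf{C}}$ with $A$-marginal $a$, the state $(\alpha\otimes\id_{B\otimes C})h$ is, up to the scalar $\alpha\circ a$, a state of $\mathbf{B}\seq\mathbf{C}$. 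This is what yields $c_{(\mathbf{A}\seq\mathbf{B})\seq\mathbf{C}}\subseteq c_{\mathbf{A}\seq(\mathbf{B}\seq\mathbf{C})}$.

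The decisive gap is self-duality, $c^*_{\mathbf{A}\seq\mathbf{B}}=c_{\mathbf{A}^*\seq\mathbf{B}^*}$, and here your heuristic, read naturally, is false. Take $\categoryC=\CPM$, $\mathbf{A}=((2),\{\discardDiag_{(2)}\}^*)$ and $\mathbf{B}=\mathbf{A}^*$. Then $c_{\mathbf{A}\seq\mathbf{B}}=c_{\mathbf{A}\otimes\mathbf{A}^*}$, so $c^*_{\mathbf{A}\seq\mathbf{B}}=c_{\mathbf{A}\li\mathbf{A}}$ is the set of Choi matrices of qubit channels, while $c^*_{\mathbf{B}}=c_{\mathbf{A}}$ consists of density matrices. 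The Choi matrix of the identity channel is entangled, so it is not of the form $\sum_i\pi_i\otimes\phi_i$ with $\pi_i$ effects on $A$ and $\phi_i\in c^*_{\mathbf{B}}$. In other words, it does not factor as an instrument on $A$ followed by an outcome-dependent causal effect on $B$; a ``$c_{\mathbf{A}}$-dependent choice'' is not a meaningful condition either.

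What does work is the following. The inclusion $c^*_{\mathbf{A}\seq\mathbf{B}}\subseteq c_{\mathbf{A}^*\seq\mathbf{B}^*}$ is again AP5 plus cancellation, now on states: AP5 for $A^*$ gives $x+x'=\lambda\maxmixDiag_A$, and $x\otimes\rho+x'\otimes\rho'$ and $\lambda\maxmixDiag_A\otimes\rho'$ are rescaled elements of $c_{\mathbf{A}\seq\mathbf{B}}$. The converse says $e\circ h=1$ whenever $e\circ(\id_A\otimes\rho)=\epsilon\in c^*_{\mathbf{A}}$ for all $\rho\in c_{\mathbf{B}}$ and $(\id_A\otimes\pi)\circ h=a\in c_{\mathbf{A}}$ for all $\pi\in c^*_{\mathbf{B}}$. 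This needs signed rather than positive decompositions. In $\CPM$, constancy plus flatness of $\mathbf{B}$ puts $h$ in $V_A\otimes\mathrm{span}(c_{\mathbf{B}})$, with $V_A$ the real span of states of $A$. Hence $h=a\otimes\rho_0+\sum_j u_j\otimes(\rho_j-\rho_0)$ with $\rho_j\in c_{\mathbf{B}}$, and so $e\circ h=\epsilon\circ a=1$.

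The paper itself simply cites this theorem. Abstractly, the signed-decomposition step is what the affine-closure characterisation of flat closed sets in \cite{Simmons2022} supplies; this is the criterion the paper uses in \cref{app:sec:product}, and it also gives closedness of $c_{\mathbf{A}\seq\mathbf{B}}$ at once. Without an ingredient of this kind your argument establishes neither the self-dual structure nor the $(\seq,\llpar)$ duoidal structure you obtain by dualising.
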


To prove that pullbacks preserve BV-category structures,
we use a consequence of categorical universal algebra.

\begin{lemma}
  \label{lem:bv-monadic-over-cat}
  The category of BV-categories
  and functors that strictly preserve every structure
  on the nose is finitary monadic over $\Cat$.
\end{lemma}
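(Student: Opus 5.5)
The plan is to present BV-categories with strict structure-preserving functors as the category of algebras for a finitary 2-dimensional (in fact, here, ordinary 1-categorical) algebraic theory over $\Cat$, and then to invoke the general result that such theories are finitary monadic. This parallels the argument used for $\caty{AutCat}$ in the proof of \cref{prop:causHilb-is-star-aut}, where monadicity was cited from \cite{BLACKWELL19891}. Concretely, I would use the framework of Kelly--Power / Blackwell--Kelly--Power. In that framework, a structure on a category is described by the following data:
\begin{itemize}
  \item a signature of operations, given as functors $\categoryC^{n} \to \categoryC$ or $(\categoryC^{\op})^{n}\times\categoryC^{m} \to \categoryC$;
  \item natural transformations between composites of those operations;
  \item equations between composite natural transformations.
\end{itemize}
The 2-category of algebras with strict morphisms is then monadic over $\Cat$. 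The 2-monad is finitary whenever every operation has finite arity.

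The key steps, in order, are as follows. First I list the BV-structure as such data.
\begin{itemize}
  \item Operations: the functors $\otimes$, $\llpar$, $\seq\colon \categoryC\times\categoryC\to\categoryC$, the dual $(-)^*$, and the unit objects $I$ and $J$ (nullary functors).
  \item Basic transformations: the associators $\alpha^\bullet$ and unitors $\ell^\bullet, r^\bullet$ together with their inverses, the symmetries, the $\ast$-autonomy data, the isoMIX map with its inverse, the self-duality isomorphism for $\seq$ with its inverse, and the interchange maps $\zeta$ and the duoidal unit maps.
  \item Equations: all coherence axioms, namely the monoidal pentagons and triangles, the $\ast$-autonomy triangle identities, the duoidal compatibility diagrams, and the inverse laws.
\end{itemize}

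Second, I deal with the parts that are not purely algebraic. ``Fully faithful'' for the dual functor and ``closed'' are properties, not structure, so I replace them by explicit structure. For $\ast$-autonomy I use the standard presentation by a contravariant involution $(-)^*$ with a natural isomorphism $A\iso A^{**}$ and an inverse. I also add the natural bijection data, presenting internal homs via $\li$ with unit and counit transformations subject to triangle equations. This is exactly what makes $\caty{AutCat}$ algebraic in \cite{BLACKWELL19891}. Contravariance is handled by allowing operations with $\categoryC^{\op}$ arguments, which is permitted in that framework.

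Third, I check that every operation and every transformation involves only finitely many arguments, so the resulting 2-monad on $\Cat$ is finitary. The resulting category of algebras and strict morphisms is then precisely the category of BV-categories with functors preserving all structure on the nose. It therefore follows that the forgetful functor is finitary monadic, and hence creates limits, in particular pullbacks.

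The main obstacle I expect is the second step. I must verify that every clause of the definition of BV-category from \cite{Blute2012,BV-model-HeffordWilson_2025} is expressible purely equationally, with no existential or uniqueness conditions. The delicate points are two:
\begin{itemize}
  \item the requirement that the units be isomorphic, $I\iso I^*\iso J$, which I handle by adding specified isomorphisms together with their inverses;
  \item the phrase ``the $\seq$/$\llpar$ duoidal structure derives as the dual'', which I handle by taking that structure as defined terms composed from $\zeta$ and the self-duality isomorphisms, so it adds no new data.
\end{itemize}
Once the theory is written down as specified data plus equations, monadicity follows from the cited general theorem.
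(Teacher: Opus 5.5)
\textbf{The step that fails.} Your route differs from the paper's. The paper writes BV-categories as a partial Horn theory \cite{PalmgrenV07} over the two-sorted theory $\mathbb{T}_{\mathrm{Cat}}$ of categories. It adds function symbols for the object and morphism parts of $\otimes,\llpar,\seq,(-)^*,\li$ and for the components of $\zeta$, the associators and so on, together with equations. It then obtains finitary monadicity from Kawase's relative algebraic theories \cite{kawase2026}. You instead want a Kelly--Power/Blackwell--Kelly--Power presentation by functor-valued operations, 2-cells, and equations between 2-cells.

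As written, that presentation fails at one step: your claim that operations with $\categoryC^{\mathrm{op}}$-arguments are permitted in that framework. In a presentation of a 2-monad on $\Cat$, an operation of arity $c$ is a functor $[c,\categoryC]\to\categoryC$ (or $[c,\categoryC]\times\Sigma c\to\categoryC$), and everything must be 2-natural. Since $\categoryC\mapsto\categoryC^{\mathrm{op}}$ reverses 2-cells, it is not a 2-functor on $\Cat$. So a signature containing $(-)^*\colon\categoryC^{\mathrm{op}}\to\categoryC$, or $\li$ with its mixed variance, does not define a 2-monad on $\Cat$. This is why closed and $\ast$-autonomous structure is usually treated over $\mathbf{Cat}_g$, which keeps only natural isomorphisms and on which $(-)^{\mathrm{op}}$ is a 2-functor.

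\textbf{How to repair it.} Because the lemma is purely 1-categorical, the repair is cheap. One option is to work over $\mathbf{Cat}_g$ and take the underlying monad on $\Cat$. The other is to use the $\mathbf{Set}$-enriched Kelly--Power theorem over the locally finitely presentable 1-category $\Cat$. There an operation is just a function from $c$-shaped diagrams to diagrams of some finitely presentable shape, so contravariance, functoriality and naturality all become equations. This second option is essentially the paper's encoding, in which a contravariant functor is just a pair of function symbols with $\mathrm{dom}(f^*)=\mathrm{cod}(f)^*$ and $(g\circ f)^*=f^*\circ g^*$. The 2-dimensional route would give extra information, such as algebra 2-cells and pseudo-morphisms, that this lemma does not need.

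\textbf{A second point to make precise.} Any structure you add to replace a property must be uniquely determined by the BV-structure. Otherwise the category of algebras is not isomorphic to BV-categories with strict functors. An arbitrary natural isomorphism $A\iso A^{**}$ with an inverse does force $(-)^*$ to be fully faithful. But a given $\ast$-autonomous category admits many such choices; for instance, minus the canonical map works in finite-dimensional vector spaces. A strict functor also need not preserve an arbitrary choice. You need either of the following:
\begin{itemize}
  \item an equation forcing this isomorphism to be the canonical map built from $\li$ and $A\li B^*\iso(A\otimes B)^*$;
  \item an encoding of full faithfulness by an inverse-on-homs operation, which is unique when it exists and automatically preserved.
\end{itemize}
The same caveat applies to $I\iso I^*\iso J$. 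These should be the canonical maps coming from the isoMIX and duoidal structure, equipped with specified inverses, not free data.
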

\begin{proof}[Sketch of proof]
  The theory of categories $\mathbb{T}_{\mathrm{Cat}}$ can be given as
  a partial Horn theory~\cite{PalmgrenV07}
  of two sorts $C_0$ (objects) and $C_1$ (morphisms)
  with function symbols such as $\mathrm{dom}\colon C_1 \to C_0$.
  The theory of BV-categories can be given by adding
  more function symbols such as
  $\seq_0 \colon C_0, C_0 \to C_0$
  (the action of the functor $\seq$ to objects)
  or
  $\zeta \colon C_0, C_0, C_0, C_0 \to C_1$
  (the natural transformation $\zeta$)
  and many equations corresponding to the coherence diagrams.
  Therefore, the theory of BV-categories can be written
  as a $\mathbb{T}_{\mathrm{Cat}}$-relative algebraic theory in \cite{kawase2026},
  so the category of BV-categories is monadic over $\Cat$.
\end{proof}

\begin{corollary}
  \label{thm:causHilb-is-bv}
  $\CausHilb$ is a BV-category,
  and all structures are preserved on the nose
  by the functors $\CausHilb \longrightarrow \Hilb$
  and $\CausHilb \longrightarrow \CausCPM$.
\end{corollary}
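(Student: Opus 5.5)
The plan mirrors the proof of \cref{prop:causHilb-is-star-aut}, replacing $\ast$-autonomous categories by BV-categories and using \cref{lem:bv-monadic-over-cat}. Write $\caty{BVCat}$ for the category of BV-categories and functors that strictly preserve every structure. By \cref{lem:bv-monadic-over-cat}, the forgetful functor $\caty{BVCat} \longrightarrow \Cat$ is monadic. Monadic functors create limits. So it suffices to show that the cospan $\Hilb \xrightarrow{\iota} \CPM \xleftarrow{U} \CausCPM$ of \eqref{diagram:pullback} lives in $\caty{BVCat}$. Then its pullback in $\Cat$, which is $\CausHilb$ by the pullback lemma, carries a unique BV-structure. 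This structure makes both projections strict BV-functors and makes the square a pullback in $\caty{BVCat}$. The projections are exactly the functors $\CausHilb \longrightarrow \Hilb$ and $\CausHilb \longrightarrow \CausCPM$, which gives the claimed preservation on the nose.

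The first step is to check the objects. $\Hilb$ and $\CPM$ are compact closed, so by the proposition above they are BV-categories with $\otimes = \seq = \llpar$. $\CausCPM$ is a BV-category by the theorem of \cite{Simmons2022}.

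The second step is to check the two legs. For $U \colon \CausCPM \longrightarrow \CPM$, the theorem of \cite{Simmons2022} states that the forgetful functor preserves all BV-structure on the nose. In $\CPM$ we take $\seq$ to be $\otimes$. The underlying object of $\mathbf{A}\seq\mathbf{B}$ is $A \otimes B$, and the interchange $\zeta$ and the $\seq$-coherences of $\CausCPM$ are the underlying symmetric monoidal maps of $\CPM$. So $U$ is strict for the degenerate structure on $\CPM$.

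For $\iota \colon \Hilb \longrightarrow \CPM$, we already know that $\iota$ preserves the compact closed structure on the nose. The definitions of $\Hilb$ and $\CPM$ are strict monoidal with identical duals, units and counits, and the symmetry is sent to the symmetry. In both categories $\seq$ and $\llpar$ are defined as $\otimes$, and the duoidal and isoMIX data are built from associators, unitors and symmetries. Hence these data are also preserved on the nose.

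The main obstacle is making sure these degenerate BV-structures are chosen compatibly, so that both legs really are strict morphisms into the same BV-structure on $\CPM$. Concretely, the interchange $\zeta$ in $\CPM$ must be the middle-four symmetry. It must also agree with the underlying map of $\zeta$ in $\CausCPM$ as given by the Caus construction, and likewise for the $\seq$-associator and the self-duality isomorphism. I would handle this by fixing the BV-structure on $\CPM$ as the one induced from its symmetric strict monoidal structure, and then reading off from \cite{Simmons2022} that the Caus structure maps are these same underlying morphisms.

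Biproducts are not part of the BV-signature, so the failure of $\iota$ to preserve them causes no trouble.
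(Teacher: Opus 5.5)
Your proposal is correct and follows the paper's own argument. The paper's proof is the one-line observation that, by \cref{lem:bv-monadic-over-cat}, the forgetful functor from BV-categories to $\Cat$ creates the pullback \eqref{diagram:pullback}, relying on the preceding facts that compact closed categories and $\CausCPM$ are BV-categories and that $\CausCPM \to \CPM$ is strict. Your explicit check that $\iota$ and $U$ are strict BV-functors into one and the same degenerate BV-structure on $\CPM$ (with $\zeta$ the middle-four interchange) is left implicit in the paper, and you handle it correctly.
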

\begin{proof}
  From \cref{lem:bv-monadic-over-cat},
  the forgetful functor creates the pullback \eqref{diagram:pullback} in $\Cat$.
\end{proof}

The BV-category structure of $\CausHilb$ on objects
can also be computed via the identification through $\EmbeddingFunctor$.
That is, since $\EmbeddingFunctor$ is injective on objects,
$\mathbf{A} \seq \mathbf{B}$ in $\CausHilb$
can be computed by computing
$(\iota\mathbf{A}) \seq (\iota\mathbf{B})$ in $\CausCPM$.
On the other hand, for any morphisms $f \in \CausHilb(\mathbf{A}, \mathbf{B})$
and $g \in \CausHilb(\mathbf{A'}, \mathbf{B'})$,
since the forgetful functor to $\Hilb$ preserves the structures,
the underlying morphisms of $f \otimes g$, $f \llpar g$, and $f \seq g$
in $\CausHilb$ are all $f \otimes g \in \Hilb(A \otimes A', B \otimes B')$.

\subsection{First-order Objects in CausHilb}
\label{subsec:fo-obj-in-caushilb}

In a category constructed
by the $\Caus$ construction~\cite{Kissinger2019,Simmons2022,Simmons2024},
it has been shown that the objects called \emph{first-order}
are particularly crucial.
These are the objects that can be written as
$(A, \{\discardDiag_A\}^*) \in \Caus[\categoryC]$.
Among all objects $(A, c_\mathbf{A})$
whose underlying object is $A \in \categoryC$,
such first-order objects are the ones that have
the maximal set $c_\mathbf{A} \subseteq \categoryC(I, A)$.
Since an intuitive meaning of $c_\mathbf{A}$
is the collection of causally coherent states,
first-order objects are the most unrestricted, simplest objects.
For example, in the category $\CausCPM$,
the full subcategory of first-order objects defines the category $\mathbf{CPTP}$,
the category of completely positive trace preserving maps,
which is a natural model of first-order quantum computation.

The most important aspect of first-order objects is that,
it models the first-order propositions
in the causal logic.
Indeed, in~\cite{Simmons2024}, Simmons and Kissinger have shown that
$\Caus[\categoryC]$ defines a complete model of causal logic
for any non-trivial $\categoryC$.

In our model $\CausHilb$, in addition to the BV-category structure,
we can also inherit first-order objects from $\CausCPM$.
We also see that these objects indeed capture
first-order pure quantum computation.

\begin{definition}
  An object $(n, c_\mathbf{A})$ in $\CausHilb$ is called \emph{first-order}
  if $c_\mathbf{A} = \{\discardDiag_{(n)}\}^*$.\footnote{
    Instead of this definition, we can also define
    first-order objects as objects whose $c^*_\mathbf{A}$ is a singleton.
    Since such objects are isomorphic to one of $(n, \{\discardDiag\}^*)$,
    these two definitions define an equivalent full subcategory of $\CausHilb$.
  }
  \lipicsEnd
\end{definition}

\begin{theorem}
  \label{thm:isom-is-full-subcategory}
  The full subcategory of first-order objects in $\CausHilb$
  is isomorphic to
  the category $\Isom$ of isometries,
  \ie the category whose objects are natural numbers
  and whose morphisms $n \longrightarrow m$ are isometries
  $\CC^n \longrightarrow \CC^m$.
\end{theorem}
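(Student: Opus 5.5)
We will show that the assignment $n \mapsto (n, \{\discardDiag_{(n)}\}^*)$, together with the identity on underlying linear maps, gives an isomorphism of categories between $\Isom$ and the full subcategory of first-order objects. Since this functor is bijective on objects and composition in both categories is composition in $\Hilb$, it suffices to show one thing. For a linear map $f \colon \CC^n \to \CC^m$, we must prove that $f$ is a morphism $(n, \{\discardDiag_{(n)}\}^*) \to (m, \{\discardDiag_{(m)}\}^*)$ in $\CausHilb$ if and only if $f^\dagger f = I$.

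First we identify the sets involved. An element of $\{\discardDiag_{(n)}\}^* \subseteq \CPM(1,(n))$ is a completely positive map $1 \to (n)$, i.e. a positive semidefinite $\rho \in \Lin(\CC^n)$, satisfying $\discardDiag \circ \rho = \id_I$, i.e. $\Tr\rho = 1$. So this set is exactly the set of density matrices on $\CC^n$. The morphism condition then reads: for every density matrix $\rho$ on $\CC^n$, $\iota(f)\circ\rho = f\rho f^\dagger$ is a density matrix on $\CC^m$. Positivity of $f\rho f^\dagger$ is automatic, so the condition reduces to
\[
  \Tr(f\rho f^\dagger) = \Tr(f^\dagger f\,\rho) = 1 \quad\text{for all density matrices } \rho.
\]

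If $f^\dagger f = I$, this holds trivially. Conversely, suppose the condition holds. Taking $\rho = \ketbra{\psi}{\psi}$ for unit vectors $\psi$ gives $\bra\psi f^\dagger f\ket\psi = 1 = \braket{\psi}{\psi}$ for all unit $\psi$. By scaling, the quadratic form of $f^\dagger f - I$ vanishes on all of $\CC^n$. Polarization over $\CC$ then shows that a sesquilinear form whose quadratic form vanishes is zero, so $f^\dagger f = I$.

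It remains to check functoriality and the bijection of objects. Identities and composites of isometries are isometries, and composition in both categories is composition in $\Hilb$, so functoriality is immediate. Distinct $n$ give distinct first-order objects, so the functor is bijective on objects and fully faithful, hence an isomorphism.

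The only mildly delicate point is the polarization step. It genuinely needs complex scalars, since over $\mathbb{R}$ a vanishing quadratic form only forces the symmetric part of the operator to vanish. Here $\CC$ is available, so the argument goes through.
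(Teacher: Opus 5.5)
Your proposal is correct and takes the same route as the paper. The paper's one-line proof also identifies morphisms $\mathbf{n}\to\mathbf{m}$ with linear maps $V$ for which $\rho\mapsto V\rho V^\dagger$ is trace-preserving on density matrices, and concludes that $V$ is an isometry; you additionally spell out that $\{\discardDiag_{(n)}\}^*$ is the set of density matrices and give the polarization step the paper leaves implicit.

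One thing both you and the paper leave unstated is that $(n,\{\discardDiag_{(n)}\}^*)$ is actually an object of $\CausHilb$: it is closed because it is a dual set, and flat via $\frac{1}{n}\maxmixDiag_{(n)}$ and $\discardDiag_{(n)}$. This holds for every $n\geq 1$ but fails for $n=0$, so your ``bijective on objects'' claim should be read over positive dimensions.
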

\begin{proof}
  Let us denote the first-order object $(n, \{\discardDiag_{(n)}\}^*)$
  as $\mathbf{n}$.
  A morphism $\mathbf{n} \longrightarrow \mathbf{m}$ in $\CausHilb$
  is a linear map $\ket\psi \longmapsto V\ket\psi$
  such that $\rho\longmapsto V\rho V^\dagger$
  defines a trace-preserving map.
  Such $V$ is an isometry.
\end{proof}

\begin{remark}
  As in $\Caus[\categoryC]$, first-order objects are closed under $\otimes$,
  and there, all three monoidal products coincide,
  \ie $\mathbf{n} \otimes \mathbf{m} = \mathbf{n} \seq \mathbf{m}
    = \mathbf{n} \llpar \mathbf{m}$.
  Also, the dual of a first-order object is first-order if and only if
  it is $\mathbf{I}$.

  On the other hand, in $\Caus[\categoryC]$, it is shown that
  first-order objects are closed under coproducts, but this is not true in $\CausHilb$.
  In our model $\CausHilb$, first-order objects are closed under neither products nor coproducts.
  Thus, for example, the semantics of the $\qbit$ type,
  which will be defined to be the first-order object $\mathbf{2}$
  in the next section,
  is not a product or coproduct of the units $\mathbf{I}$.
  \lipicsEnd
\end{remark}

We show that $\CausHilb$ can also model first-order propositions
by these first-order objects.
The key observation here is that
the embedding functor $\CausHilb \longrightarrow \CausCPM$ is conservative, \ie it reflects isomorphisms.
From this fact, we can state the following: for any proof $A \vdash B$ in BV-logic,
we can give a semantics of the proof in $\CausHilb$ and $\CausCPM$
by specifying an object in $\CausHilb$ for each atomic proposition that occurs in $A$ or $B$.
These semantics are related by the embedding functor.
Then, if the semantics in $\CausCPM$ happened to be an isomorphism,
the semantics of the proof in $\CausHilb$ must also be an isomorphism.
This idea can be used in the next proposition.

\begin{proposition}
  \label{prop:first-order-prop-invertible}
  In the category $\CausHilb$, we have the following natural transformation
  as part of the duoidal structure $(\otimes,\seq)$.
  \[
    \zeta_{A,B,C,D}\colon
    (A\seq C)\otimes(B\seq D) \longrightarrow
    (A \otimes B) \seq (C \otimes D).
  \]
  From this $\zeta$, by assuming $B = C = \mathbf{I}$,
  we obtain the following $\xi_{A,B}$.
  Also, by considering $\zeta^*_{B,C^*,I,A^*}$, we obtain the following $\gamma_{B,C,A}$.
  \begin{align*}
    \xi_{A,B} \colon A \otimes B  \longrightarrow  A \seq B
    \qquad\quad
    \gamma_{B,C,A} \colon (B \li C) \seq A \longrightarrow B \li (C \seq A)
  \end{align*}
  When $A$ is first-order,
  these morphisms are isomorphisms.
\end{proposition}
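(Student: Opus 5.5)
The plan follows the strategy announced just before the statement: transfer invertibility from $\CausCPM$ via the embedding $\iota\colon\CausHilb\to\CausCPM$. First, $\zeta$, $\xi$ and $\gamma$ in $\CausHilb$ are mapped by $\iota$ onto the corresponding morphisms in $\CausCPM$, because \cref{thm:causHilb-is-bv} says the functor preserves the whole BV-structure on the nose. This covers $\zeta$, the unitors (used to obtain $\xi$), the duality and the closed structure (used to obtain $\gamma$). Moreover, $\iota$ sends first-order objects $(n,\{\discardDiag_{(n)}\}^*)$ to first-order objects of $\CausCPM$. So the problem splits into two steps: (i) invertibility of $\xi_{A,B}$ and $\gamma_{B,C,A}$ in $\CausCPM$ when $A$ is first-order, and (ii) conservativity of $\iota$.

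For (i) we would cite Simmons and Kissinger~\cite{Simmons2024}. They show that $\Caus[\categoryC]$ models causal logic, where $A\otimes B\dashv\vdash A\seq B$ and $B\seq A\dashv\vdash B\llpar A$ hold for first-order $A$, and the canonical BV-maps realising these are inverse. If one prefers a direct check, $\xi$ has identity underlying map, so it suffices to show $c_{A\seq B}\subseteq c_{A\otimes B}$. Take $h\in c_{A\seq B}$. Then $(\id\otimes\pi)h$ is independent of $\pi\in c_B^*$ and lies in $c_A$. Since $c_A^*=\{\discardDiag\}$ is a singleton, every effect in $(c_A\otimes c_B)^*$ can be rewritten so that $h$ is sent to $\id_I$; this is the standard argument in~\cite{Kissinger2019}. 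Similarly $\gamma$ has underlying map a canonical isomorphism of the compact closed category. Its inverse exists in $\CausCPM$ by dualising the $\xi$ case, using $(B\li C)\seq A\cong (B^*\llpar C)\seq A$ and self-duality of $\seq$.

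For (ii) we prove conservativity directly from the definition of $\CausHilb$. Let $f\colon(n,c)\to(m,c')$ in $\CausHilb$ with $\iota f$ invertible in $\CausCPM$. Then $\rho\mapsto f\rho f^\dagger$ is invertible as a linear map on $\Lin(\CC^n)\to\Lin(\CC^m)$. This forces $n=m$, since the map is $f\otimes\bar f$ and invertibility gives $\dim^2$ equal, and it also forces $f$ to be invertible in $\Hilb$. Its inverse $f^{-1}$ then satisfies $\iota(f^{-1})=(\iota f)^{-1}$. This inverse preserves the causal sets because it is a morphism of $\CausCPM$. Hence $f^{-1}$ is a morphism of $\CausHilb$, by the pullback description \eqref{diagram:pullback}.

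Combining the two steps, $\xi_{A,B}$ and $\gamma_{B,C,A}$ are isomorphisms in $\CausHilb$. The main obstacle is step (i). One must make sure that the specific maps obtained from $\zeta$ and $\zeta^*$ are exactly the ones whose inverses~\cite{Simmons2024} constructs, rather than merely that some isomorphism exists. For this we rely on all of them having canonical compact-closed underlying maps, namely identities up to symmetry, so that being an isomorphism reduces to equality of the relevant causal sets.
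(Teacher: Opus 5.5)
Your proposal is correct and follows the paper's route: you obtain invertibility of $\iota\xi_{A,B}$ and $\iota\gamma_{B,C,A}$ in $\CausCPM$ from Simmons--Kissinger's causal logic and reflect it along the conservative functor $\iota$; your explicit proof of conservativity (invertibility of $\rho\mapsto f\rho f^\dagger$, i.e.\ of $f\otimes\bar f$, forces $f$ to be invertible, and $\iota(f^{-1})=(\iota f)^{-1}$ preserves the causal sets) fills in a step the paper only asserts. One correction to your optional direct check: the $\gamma$ case is not obtained by dualising the $\xi$ case, since that dual concerns a co-first-order object in the first slot of $\seq$; instead it is immediate, because $c_A^*=\{\discardDiag\}$ makes the independence condition in $c_{X\seq A}$ vacuous, so $c_{X\seq A}=c_{X\llpar A}$ and $\gamma$ reduces to the canonical isomorphism $(B^*\llpar C)\llpar A\cong B^*\llpar(C\llpar A)$.
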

\begin{proof}
  The converse $A\seq B \vdash A \otimes B$ is
  derivable in the causal logic when $A$ is a first-order proposition,
  and the semantics of the proof gives the inverse of
  $\EmbeddingFunctor(\xi_{A,B})$.
  Since $\EmbeddingFunctor (\xi_{A,B})$ is an isomorphism in $\CausCPM$,
  $\xi_{A,B}$ is also an isomorphism.
  The same proof applies to $\gamma$.
\end{proof}

\begin{remark}
  We can also prove this proposition easily by directly checking
  the existence of the inverse, but our approach is much easier to
  be generalized to any other such morphisms.
  \lipicsEnd
\end{remark}

\section{Categorical Semantics with Causal Structure}
\label{sec:cat-sem-causHilb}

We now give the semantics of \ourlang{} in $\CausHilb$
and study its properties.
We prove the full abstraction theorem in \cref{subsec:full-abst},
and we study the definability in \cref{subsec:definability}.

\subsection{Definition of Categorical Semantics}
\label{subsec:cat-sem-in-causHilb}

In this section, we define a refined categorical semantics $\sem{-}_{\CausHilb}$
of \ourlang{} in $\CausHilb$.
This categorical semantics has additional information about causality,
and by forgetting it, we can recover the original degenerate semantics
which we defined in~\cref{subsec:lang-degenerate-semantics}.
More concretely, through the forgetful functor $U: \CausHilb \longrightarrow \Hilb$,
we can write the semantics in $\Hilb$ as
$U \sem{-}_{\CausHilb} = \sem{-}_{\Hilb}$.

The semantics of types are given by the following,
relating each syntactic connective
with its corresponding categorical structure.
\begin{gather*}
  \sem{n} = (n, \{\discardDiag_{(n)}\}^*)
  \qquad
  \sem{\bot} = \mathbf{I} = (1, \{\id_{(1)}\})
  \qquad
  \sem{A \otimes B} = \sem{A} \otimes \sem{B}
  \\
  \sem{A \seq B} = \sem{A} \seq \sem{B}
  \qquad
  \sem{A \li B} = \sem{A} \li \sem{B}
\end{gather*}
For every type, the underlying object in $\Hilb$,
\ie the natural number representing the dimension,
coincides with the semantics in $\Hilb$.
It is notable that, 
the semantics of first-order types are given by first-order objects.

\begin{figure*}
  \begin{proofrules*}
    \sem{x \colon A} = \id_{\sem{A}}

    \sem{\unitval} = \id_{\mathbf{I}}

    \sem{\lambda x.\, M} = \Lambda_{\sem{\Gamma},\sem{A},\sem{B}}\sem{M}

    \sem{M\, N} = \mathrm{ev}_{\sem{A},\sem{B}} (\sem{M} \otimes \sem{N})

    \sem{M \otimes N} = \sem{M} \otimes \sem{N}

    \sem{\letx{\,\_}{M}{N}} = \ell^{\otimes} (\sem{M} \otimes \sem{N})

    \sem{\letx{x \otimes y}{M}{N}}
    = \sem{N} \circ (\sem{M} \otimes \id_{\sem{\Gamma'}})

    \sem{M \seq N}
    = \xi_{\sem{A},\sem{B}} \circ (\sem{M} \otimes \sem{N})

    \sem{\letx{x \seq y}{M}{(N_1 \seq N_2)}}
    = (\sem{N_1} \seq \sem{N_2})
    \circ
    \zeta_{\sem{A},\sem{B},\sem{\Gamma_1},\sem{\Gamma_2}}
    \circ
    (\sem{M} \otimes \xi_{\sem{\Gamma_1}, \sem{\Gamma_2}})

    \sem{\sassocR}
    = \alpha^\seq

    \sem{\sassocL}
    = {(\alpha^\seq)}^{-1}

    \sem{\sinterch(M)}
    = \zeta_{\sem{A},\sem{B},\sem{C},\sem{D}} \circ \sem{M}

    \sem{\pawait(M)}
    = \xi_{\sem{A},\sem{B}}^{-1} \circ \sem{M}

    \sem{\pexpose(M)}
    = \gamma_{\sem{B},\sem{C},\sem{A}}^{-1} \circ \sem{M}

    \sem{\qinj}
    = \qinj

    \sem{\gU}
    = U

    \sem{\qifx{M}{N_1}{N_2}}
    = \qifmap_{A} \circ (\sem{M} \otimes \langle\sem{N_1}, \sem{N_2}\rangle)
  \end{proofrules*}
  \vspace{-5ex}
  \caption{Categorical semantics of terms in $\CausHilb$}
  \label{fig:cat-sem-term}
\end{figure*}
For each term $\Gamma \vdash M \colon A$,
its categorical semantics is defined
in \cref{fig:cat-sem-term}
as a map
$\sem{M}_\CausHilb \colon \sem{\Gamma} \longrightarrow \sem{A}$
in $\CausHilb$.

Similarly to the semantics in~\cref{subsec:lang-degenerate-semantics},
the semantics of the terms in linear lambda calculus
are defined in a standard way in the monoidal closed category $\CausHilb$.
We explain the semantics for the following non-trivial ones.

For the term $\Gamma, \Gamma' \vdash M \seq N \colon A \seq B$,
the semantics is given by a morphism
$\sem{\Gamma} \otimes \sem{\Gamma'} \longrightarrow \sem{A} \seq \sem{B}$.
To obtain this morphism from $\sem{M}$ and $\sem{N}$,
we can first take the $\otimes$-monoidal product of them to obtain
a map $\sem{\Gamma} \otimes \sem{\Gamma'} \longrightarrow \sem{A} \otimes \sem{B}$,
and then post-compose the map
$\xi_{\sem{A},\sem{B}} \colon \sem{A} \otimes \sem{B}
  \longrightarrow \sem{A}\seq\sem{B}$.

For the term
$\Gamma_0, \Gamma_1, \Gamma_2 \vdash \letx{x \seq y}{M}{(N_1 \seq N_2)}$,
the semantics is given by a morphism of type
$\sem{\Gamma_0} \otimes \sem{\Gamma_1} \otimes \sem{\Gamma_2}
  \longrightarrow \sem{C} \seq \sem{D}$.
This can be obtained as follows:
\begin{align*}
  \sem{\Gamma_0} \otimes (\sem{\Gamma_1} \otimes \sem{\Gamma_2})
   &
  \xlongrightarrow{\sem{M} \otimes \xi_{\sem{\Gamma_1}, \sem{\Gamma_2}}}
  (\sem{A} \seq \sem{B}) \otimes (\sem{\Gamma_1} \seq \sem{\Gamma_2})
  \\ &
  \xlongrightarrow{\zeta_{\sem{A},\sem{\Gamma_1},\sem{B},\sem{\Gamma_2}}}
  (\sem{A} \otimes \sem{\Gamma_1}) \seq (\sem{B} \otimes \sem{\Gamma_2})
  \\ &
  \xlongrightarrow{\sem{N_1}\seq \sem{N_2}}
  \sem{C} \seq \sem{D}.
\end{align*}

For the term $\qifx{M}{N_1}{N_2}$,
to define the semantics, it suffices to show that
the morphism $\qifmap_A$ lifts to $\CausHilb$.
See \cref{app:sec:proof} for the proof.
\begin{proposition}
  \label{prop:qif-lifts-to-caushilb}
  For each object $\mathbf{A} \in \CausHilb$,
  the map $\qifmap_A$ in $\Hilb$ defines a map
  $\sem{\qbit} \otimes (\mathbf{A} \times \mathbf{A})
    \longrightarrow \sem{\qbit} \after \mathbf{A}$
  in $\CausHilb$.
  \lipicsEnd
\end{proposition}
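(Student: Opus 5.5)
Throughout, $\qbit \after \mathbf{A}$ means $\mathbf{A} \seq \sem{\qbit}$. We need to show that for every state $h$ in $c_{\sem{\qbit} \otimes (\mathbf{A}\times\mathbf{A})}$, the state $\iota(\qifmap_A)\circ h$ lies in $c_{\mathbf{A} \seq \sem{\qbit}}$. The plan is to simplify both sides using the explicit descriptions of $\otimes$, $\times$ and $\seq$, which are all computed in $\CausCPM$ via $\iota$, and then finish with a direct calculation with density matrices.

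\emph{Step 1 (target side).} We use the description of $c_{\mathbf{A}\seq\mathbf{B}}$ from the theorem of \cite{Simmons2022}. Since $\sem{\qbit} = (2,\{\discardDiag_{(2)}\}^*)$ is first-order, we have
\[
  c^*_{\sem{\qbit}} = \{\discardDiag_{(2)}\}^{***} = \{\discardDiag_{(2)}\}^{*}{}^{*} = \{\discardDiag_{(2)}\}^{**}.
\]
I expect this to be the singleton $\{\discardDiag_{(2)}\}$: an effect that evaluates to $1$ on every density matrix must be the trace. The quantifier over $\pi,\pi'$ therefore collapses, and the membership condition becomes
\[
  h' \in c_{\mathbf{A}\seq\sem{\qbit}} \iff (\id_A \otimes \discardDiag_{(2)}) \circ h' \in c_{\mathbf{A}} ,
\]
that is, the partial trace over the control qubit must land in $c_\mathbf{A}$. (Here I order the factors as $A \otimes 2$, inserting the symmetry where needed.)

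\emph{Step 2 (source side: reduce to generators).} We have $c_{\sem{\qbit}\otimes(\mathbf{A}\times\mathbf{A})} = \{\rho\otimes\sigma\}^{**}$. For any $g$, if $g\circ(\rho\otimes\sigma)\in c'$ for all generators, then for every $\pi \in c'^*$ the effect $\pi\circ g$ lies in $\{\rho\otimes\sigma\}^*$. Hence $g$ maps $\{\rho\otimes\sigma\}^{**}$ into $c'^{**} = c'$, using that the target set is closed. So it suffices to take $\rho$ a qubit density matrix with trace $1$ (i.e.\ $\rho \in c_{\sem{\qbit}}$). For $\sigma$, we take a positive operator on $A\oplus_\Hilb A$ with $p_1\sigma p_1^\dagger \in c_\mathbf{A}$ and $p_2\sigma p_2^\dagger\in c_\mathbf{A}$, which is the description of $\mathbf{A}\times\mathbf{A}$ from \cref{prop:caushilb-product}.

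\emph{Step 3 (calculation).} Write $\rho=\sum_{ij}\rho_{ij}\ketbra{i}{j}$. Since $\qifmap_A$ maps $\ket{i}\otimes(\varphi\oplus\psi)$ to $\ket{i}\otimes p_{i+1}(\varphi\oplus\psi)$, conjugating by it and tracing out the qubit kills the off-diagonal terms $\rho_{01},\rho_{10}$. This gives
\[
  \rho_{00}\, p_1\sigma p_1^\dagger + \rho_{11}\, p_2\sigma p_2^\dagger ,
\]
with $\rho_{00},\rho_{11}\ge 0$ and $\rho_{00}+\rho_{11}=1$. Any closed set $c=c^{**}$ is closed under convex combinations, because each constraint $\pi\circ(-)=\id_I$ is affine. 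Hence this state lies in $c_\mathbf{A}$, and Step 1 concludes.

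The main obstacle is Step 1: the bookkeeping of which side of $\seq$ the qubit sits on, the symmetry isomorphisms, and the verification that $c^*_{\sem{\qbit}}$ is exactly $\{\discardDiag\}$. If the singleton claim fails up to closure, the fallback is to note that the condition in the theorem only requires equality of the images under all $\pi,\pi'\in c^*_{\sem{\qbit}}$. Every such $\pi$ agrees with the trace on the relevant states, because the states are CP and the effects are normalised on all density matrices. Steps 2 and 3 are routine once this is settled.
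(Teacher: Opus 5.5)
Your proposal is correct and follows essentially the paper's argument: reduce to product generators $\rho\otimes\sigma$ by closedness, use that $c^*_{\sem{\qbit}}=\{\discardDiag\}$ because $\sem{\qbit}$ is first-order, and observe that discarding the control leaves the convex combination $\rho_{00}\,p_1\sigma p_1^\dagger+\rho_{11}\,p_2\sigma p_2^\dagger$ of elements of $c_{\mathbf{A}}$. The differences are cosmetic: the paper reads the target condition off $\sem{\qbit}\after\mathbf{A}\cong\sem{\qbit}\llpar\mathbf{A}$ and first reduces to pure control states, while you use the $\seq$ description directly and treat mixed $\rho$ at once; also, $c^*_{\sem{\qbit}}=\{\discardDiag\}^{**}$, not $\{\discardDiag\}^{***}$ (which equals $\{\discardDiag\}^{*}=c_{\sem{\qbit}}$), and since this set really is the singleton, your fallback paragraph is unnecessary.
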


\subsection{Full Abstraction}
\label{subsec:full-abst}

Though we do not provide an operational semantics for \ourlang{},
we can still discuss full abstraction for our language.
The following term $H_d$ is critical in the proof.

\begin{lemma}
  For each $d \in \Nat^+$,
  there is a program $\vdash H_d \colon d \otimes d$
  whose semantics is
  $\frac{1}{\sqrt d}\eta_d$
  where $\eta_d$ is
  the unit $\eta_d \colon I \longrightarrow d \otimes d$
  of the compact closed structure in $\Hilb$.
  \lipicsEnd
\end{lemma}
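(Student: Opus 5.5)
The plan is to build $H_d$ from the quantum primitives alone, with no appeal to higher-order structure. The target state $\frac{1}{\sqrt d}\sum_{i}\ket{i}\otimes\ket{i}$ is the maximally entangled state of two qudits. It is a unit vector in $\CC^{d\times d}$, so it lies in the image of an isometry from the one-dimensional space. Concretely, I will take a unitary $U$ on $\CC^{d\times d}$ whose first column is this vector. Such a $U$ exists by extending the unit vector to an orthonormal basis.

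The term is then assembled as follows.
\begin{enumerate}
\item Start from $\vdash \unitval \colon \bot$ and cast along the type isomorphism $1 \iso \bot$. Using \ref{lrule:type-iso} in the direction $\bot \li 1$ (the isomorphism is a congruence, hence symmetric) gives $\vdash \ceil{f}\,\unitval \colon 1$, whose semantics in $\Hilb$ is the scalar $1$, i.e.\ $\ket{0}\in\CC^1$.
\item Apply $\qinj \colon 1 \li 1 + (d^2-1)$ to obtain a closed term of type $d^2$ with semantics $\ket{0}$.
\item Apply $\gU \colon d^2 \li d^2$ with the unitary $U$ chosen above.
\item Cast along $d^2 = d\times d \iso d\otimes d$ using \ref{lrule:type-iso}, which yields $\vdash H_d \colon d\otimes d$.
\end{enumerate}

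For the semantics, I will compute in the degenerate model $\Hilb$. This is legitimate because $U\sem{-}_{\CausHilb}=\sem{-}_{\Hilb}$ and the forgetful functor is the identity on underlying maps. The composite is then $U\ket{0}$ reindexed along the canonical identification $\CC^{d^2}\cong\CC^d\otimes\CC^d$. We should fix the semantics of $\ceil{f}$ for $n\otimes m\iso n\times m$ to be this canonical reindexing $\ket{i}\otimes\ket{j}\mapsto\ket{im+j}$, where $m$ is the second factor's dimension. With that choice the result is exactly $\frac{1}{\sqrt d}\sum_i \ket i\otimes\ket i = \frac{1}{\sqrt d}\eta_d$.

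The main point needing care is well-definedness of each step in $\CausHilb$, in particular that $\ceil{f}$ lifts. The type $d\otimes d$ is interpreted as $\mathbf d\otimes\mathbf d$, and by the Remark on first-order objects this equals the first-order object on $d^2$. The casts are therefore permutation or identity maps between first-order objects, which are unitaries and hence morphisms by \cref{thm:isom-is-full-subcategory}. The same theorem covers $\qinj$ and $\gU$, since both are isometries between first-order objects.

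If the paper instead intends $\ceil{f}$ to have a different index convention, the fix is to absorb the resulting permutation into $U$, so the statement holds in any case.
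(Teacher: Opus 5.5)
Your proof is correct and is essentially the construction the paper relies on: the lemma is stated without a separate proof, but the definability argument in \cref{prop:semantic-soundness-of-the-type-system} realizes every isometry as $\lambda x.\,\gU(\qinj(x))$, and since $\frac{1}{\sqrt d}\eta_d$ is a unit vector, i.e.\ an isometry $1 \to d\times d$, your term $\ceil{g}\,(\gU\,(\qinj\,(\ceil{f}\,\unitval)))$ (with $f\colon \bot \iso 1$ and $g\colon d\times d \iso d\otimes d$) is exactly that construction applied to the state obtained from $\unitval$. Your remark that the semantics of $\ceil{f}$ must be fixed (as the identity on underlying maps, which is a morphism between first-order objects) fills in a detail the paper leaves implicit.
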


\begin{lemma}
  \label{lem:full-abst-inj-surg}
  For any type $A$,
  there are some natural numbers $n_A$, $m_A$, $\ell_A$
  and two terms
  $x_A \colon A \vdash M_A \colon n_A$
  and
  $y_A \colon m_A \vdash N_A \colon A \seq \ell_A$
  such that
  $\sem{M_A}$ is injective in $\Hilb$
  and
  $(\id_A \otimes \varepsilon_\ell) (\sem{N} \otimes \id_\ell)
    \colon m \otimes \ell \longrightarrow A$ is surjective in $\Hilb$.
\end{lemma}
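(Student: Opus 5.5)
We argue by simultaneous induction on the type $A$, building the ``encoder'' $M_A$ (an injective term $A \to n_A$) and the ``decoder'' $N_A$ (a term $m_A \to A \seq \ell_A$ whose partial trace against $\ell_A$ is surjective) together. The two cannot be separated: the case $A \li B$ needs the decoder of $A$ to feed an argument into the function and the encoder of $B$ to read off the result, and dually.

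\emph{Base cases.} For $A = d$ we take $M_d = x$, with $n_d = d$. We also take $N_d = y \seq \unitval$ of type $d \seq \bot$, composed with $\ceil{f}$ for $1 \iso \bot$, so that $m_d = d$ and $\ell_d = 1$. Both maps are identities in $\Hilb$. For $A = \bot$ we use $\ceil{f}$ for $\bot \iso 1$ and its inverse.

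\emph{Product-like cases.} For $A \otimes B$ we put $M_{A\otimes B} = \letx{a\otimes b}{x}{\ceil{f}(M_A\otimes M_B)}$ into $n_A n_B$; a tensor of injective maps is injective. For $A \seq B$ we first destruct with \ref{lrule:let-seq} to obtain $M_A \seq M_B \colon n_A \seq n_B$, then use \ref{lrule:await} (since $n_A$ is first-order) to reach $n_A \otimes n_B$, then cast. For the decoders, we combine $N_A \otimes N_B$ via \ref{lrule:interch} into $(A\otimes B)\seq(\ell_A\otimes\ell_B)$, which handles $\otimes$. For $\seq$ we obtain $A \seq \ell_A$ and $B \seq \ell_B$ and must reach $(A\seq B)\seq(\ell_A\ell_B)$. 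Because each $\ell$ is first-order, $A\seq\ell_A \cong A\otimes\ell_A$ by \cref{prop:first-order-prop-invertible}, and $\ell_A \otimes B \to B \otimes \ell_A$ is available. So we can sequence $A \seq (B\seq \ell_A\otimes\ell_B)$ using \ref{lrule:seq-intro} and reassociate with $\sassocR$. Surjectivity of the traced maps reduces to tensor products of surjective maps, because every structural map used is the identity or a swap in $\Hilb$ (\cref{fig:degenerate-cat-sem}).

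\emph{Function case (main obstacle).} For $A \li B$ the encoder must turn a function into data injectively. The idea is to apply $x$ to the decoder $N_A$ evaluated on half of a maximally entangled state $H_{m_A}$. The other half of $H_{m_A}$ is kept, together with the $\ell_A$ output, and $M_B$ is applied to the result. This corresponds to the Choi state, and the Choi map $f \mapsto (f\otimes\id)\eta$ is injective.

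Causality is the delicate point. $N_A$ produces $A \seq \ell_A$, so $\ell_A$ is only available after $A$ is consumed. We therefore use \ref{lrule:let-seq} to apply $x$ in the left component, obtaining $B \seq \ell_A$, then apply $M_B$ and use \ref{lrule:await} to get $n_B \otimes \ell_A$. Injectivity then follows because $f \mapsto f \circ (\id\otimes\varepsilon)(\sem{N_A}\otimes\id)$ is injective whenever that traced map is surjective, combined with injectivity of $M_B$ and of the Choi map. The $\frac{1}{\sqrt d}$ scalar is harmless.

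For the decoder of $A\li B$ we need a term $m \to (A\li B)\seq\ell$. We take $\lambda a.\,N_B(\ldots M_A\,a\ldots)$ and route $n_A$ out to $\ell$ by a teleportation-like use of $H$. The output must be placed after the function, which is exactly what \ref{lrule:expose} provides, since $\ell$ is first-order. Concretely, we form $\lambda a.\, (M_A a$ paired with the input data$)$, use $N_B$ to get $B\seq\ell_B$, pair with $M_A a$, and apply \ref{lrule:expose} to reach $(A\li B)\seq(n_A\otimes\ell_B)$. So $\ell = n_A\ell_B$ and $m = m_B$.

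Surjectivity is where the most care is needed. Tracing $n_A$ against the input $\ell$ yields maps of the form $g\circ M_A$ with $g$ arbitrary in the image of $N_B$'s traced map. Every linear $A \to B$ factors through the injective $M_A$ via a left inverse, so the traced map is onto $\Hilb(A,B)$. The expected obstacle is checking that the \ref{lrule:expose} and \ref{lrule:await} semantics, which are identities in $\Hilb$, wire the indices so that the partial trace really computes this composite.
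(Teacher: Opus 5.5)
\paragraph{Overall comparison.} Your overall plan is the paper's. It is the same simultaneous induction, with the same encoders for $\otimes$, $\seq$ and $\li$. For $A\li B$ this means the Choi state via $H_{m_A}$, \ref{lrule:let-seq} to apply $x$ on the left, \ref{lrule:await} on $n_B\seq\ell_A$, and injectivity from precomposition with a surjection and postcomposition with an injection. The $\otimes$-decoder also goes via \ref{lrule:interch}, exactly as in the paper. Your $\li$-decoder is the paper's $\lambda x_A.\,N_B\otimes M_A$ followed by \ref{lrule:expose}, with $\ell_{A\li B}=\ell_B n_A$ and $m_{A\li B}=m_B$, and you give the same left-inverse argument for surjectivity. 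No teleportation-style use of $H$ is needed in that decoder, and your concrete term already does without it. Your base case, with $N_d=y\seq\unitval$ and $\ell_d=1$, is tidier than the paper's.

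\paragraph{The gap: the decoder for $A\seq B$.} You use \cref{prop:first-order-prop-invertible} to get $A\seq\ell_A\cong A\otimes\ell_A$. That proposition, like the rule \ref{lrule:await}, only inverts $\xi_{X,Y}\colon X\otimes Y\to X\seq Y$ when the \emph{left} factor $X$ is first-order. In $A\seq\ell_A$ the first-order factor sits on the right, and $A$ is arbitrary. For a first-order factor on the right, causal logic gives $A\seq\ell\cong A\llpar\ell$ instead (intuitionistically, the \ref{lrule:expose} law). This is not $A\otimes\ell$ in general: for example, $(2\li\bot)\seq 2\cong 2\li 2$ via $\gamma$, which differs from $(2\li\bot)\otimes 2$. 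So you cannot pull $A$ and $\ell_A$ out of $N_A$ as separate resources and reassemble them with \ref{lrule:seq-intro}. This step is not available in the type system.

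\paragraph{How to repair it.} The fix is easy and does not change the $\Hilb$ semantics. The paper takes $\sinterch(N_A\otimes N_B)\colon(A\otimes B)\seq(\ell_A\otimes\ell_B)$ and applies the always-available map $A\otimes B\to A\seq B$ in the left component, i.e.\ $\letx{c\seq z}{\sinterch(N_A\otimes N_B)}{(\letx{a\otimes b}{c}{(a\seq b)})\seq z}$. Alternatively, you can proceed as follows:
\begin{itemize}
\item destruct $N_A$ with \ref{lrule:let-seq} and keep $A$ on the left;
\item send $\ell_A$ together with $N_B$ to the right, using the derivable $\ell_A\otimes(B\seq\ell_B)\vdash B\seq(\ell_A\otimes\ell_B)$;
\item reassociate with $\sassocR$.
\end{itemize}
In both cases the underlying map is $\sem{N_A}\otimes\sem{N_B}$ up to a swap, so your surjectivity argument applies verbatim.
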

\begin{proof}[Sketch of Proof]
  Ignoring some variable renamings and type equivalences,
  the terms can be recursively defined as follows:
  \begin{itemize}
    \item $M_d = M_\bot = x$,\ $N_d = N_\bot = y$,
    \item $M_{A \otimes B} = M_A \otimes M_B$,\
          $N_{A \otimes B} = \sinterch(N_A \otimes N_B)$,
    \item $M_{A \seq B} = \letx{x_A \seq x_B}{x}{(M_A \seq M_B)}$,\\
          $N_{A \seq B} =
          \letx{(a \otimes b) \seq z}
          {\sinterch(N_A \otimes N_B)}
          {(a \seq b) \seq z}$,
    \item For the function type, we can define the terms as following.
          Note that there is a equivalence of types
          $A \li ((B \seq \ell_B) \otimes n_A) \equiv
          (A \li B) \seq (\ell_B \otimes n_A)$
          via the term $\pexpose$.
          \begin{align*}
            M_{A \li B}
            =\ &
            \Let {y_A \otimes z^{m_A}} = {H_{m_A}}
            \In
            \Let {z^A \seq z^{\ell_A}} = N_A
            \In \\ &
            ((
              \Let x_B =  {x\,z^A} \In
              {M_B} 
            )
              \seq z^{\ell_B}
            )
            \otimes z^{m_A},
            \\
            N_{A\li B}
            =\ &
            \lambda x_A.\, N_B \otimes M_A.
            \tag*{\qedhere}
          \end{align*}
  \end{itemize}
\end{proof}

\begin{theorem}[Full abstraction]
  Let $\Gamma \vdash M \colon A$ and
  $\Gamma \vdash N \colon A$.
  Then, $\sem{C[M]} = \sem{C[N]}$
  for any context $C[-]$ of type $d \in \Nat^+$,
  if and only if
  $\sem{M} = \sem{N}$.
\end{theorem}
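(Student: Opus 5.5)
The ``if'' direction is immediate from compositionality. The semantics in \cref{fig:cat-sem-term} is defined by structural recursion on typing derivations, and each clause depends on the immediate subterms only through their denotations. So $\sem{C[M]}$ is a function of $\sem{M}$, and $\sem{M}=\sem{N}$ gives $\sem{C[M]}=\sem{C[N]}$. Since the forgetful functor $\CausHilb\to\Hilb$ is faithful on hom-sets (morphisms are linear maps satisfying a property), equality of denotations can be checked in $\Hilb$.

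For the ``only if'' direction I argue by contraposition: assuming $\sem{M}\neq\sem{N}$ as linear maps $\sem{\Gamma}\to\sem{A}$, I build a context of qudit type separating them.

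\textbf{Step 1 (close the context).} Write $\Gamma=x_1\colon B_1,\dots,x_k\colon B_k$. Abstracting gives closed terms $\lambda\vec x.M$ and $\lambda\vec x.N$ of type $B_1\li\cdots\li B_k\li A$. Their denotations are points $I\to\sem{\Gamma\li A}$, and these differ because currying is a bijection. So it suffices to treat closed terms $\vdash M,N\colon A$ with $\sem{M}\neq\sem{N}\colon I\to\sem{A}$. Any context for the curried terms is a context for the originals, via $C[\lambda\vec x.-]$.

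\textbf{Step 2 (push to first-order data).} Apply \cref{lem:full-abst-inj-surg} to $A$. This gives $x_A\colon A\vdash M_A\colon n_A$ with $\sem{M_A}$ injective in $\Hilb$. Take the context $C[-]=\letx{x_A}{[-]}{M_A}$, which has type $n_A\in\Nat^+$. Then $\sem{C[M]}=\sem{M_A}\circ\sem{M}$. Injectivity of $\sem{M_A}$ means it is left-cancellable, so $\sem{M_A}\sem{M}\neq\sem{M_A}\sem{M}'$ whenever $\sem{M}\neq\sem{N}$.

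Only the injective half of the lemma is needed here. The surjective half is what makes the injective half provable at function types, through the recursion in its proof, but for this theorem it is already packaged inside the lemma. If $n_A=0$ is a concern, the types $d$ are positive, so $n_A\ge1$ by construction.

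\textbf{Main obstacle.} The genuinely hard content is \cref{lem:full-abst-inj-surg} itself. At function types one must show that feeding in the maximally entangled state $H_{m_A}$ and the surjective ``preparation'' $N_A$ produces an injective map. This uses the $\pexpose$ isomorphism and the surjectivity of $(\id\otimes\varepsilon)(\sem{N_A}\otimes\id)$, together with the scalar $1/\sqrt d$ being invertible.

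Since that lemma is assumed, the only remaining care points are routine. First, the curried and let-bound contexts must be well typed, which is standard linear lambda calculus. Second, injectivity in $\Hilb$ suffices even though the statement concerns $\CausHilb$; this holds because equality of morphisms in $\CausHilb$ is equality of the underlying linear maps.
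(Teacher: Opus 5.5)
Your argument is correct, but it takes a different route from the paper's. Your ``if'' direction matches the paper. Your minor typo: $\sem{M_A}\sem{M}'$ should read $\sem{M_A}\sem{N}$.

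\textbf{The paper's route.} It does not curry. It keeps $\Gamma=\vec{x_i}\colon\vec{B_i}$ open and feeds the free variables with the terms $N_{B_i}$, combined by $\sinterch$ into a term $L$. It then post-composes $M_A$. This gives an open context $D[-]$ with qudit inputs $\vec{y_i}$, and the paper asserts that $D[-]$ induces an injective map $\Hilb(\sem{\Gamma},\sem{A})\to\Hilb(\prod m_{B_i},\, n_A\times\prod\ell_{B_i})$. Finally it closes the context by picking a unit vector on which $\sem{D[M]}$ and $\sem{D[N]}$ differ, prepared as $\gU\ket0$.

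\textbf{Your route.} You curry to a closed term of type $B_1\li\cdots\li B_k\li A$. You then use only the injective half of \cref{lem:full-abst-inj-surg}, applied to that compound function type. This works because the function-type clause of the lemma already does internally what the paper redoes by hand at top level. It uses $N$ at the argument type and $M$ at the result type, but it feeds the normalized maximally entangled state $H_m$ instead of a chosen vector. So every difference between the two linear maps is detected at once, in Choi--Jamio\l{}kowski style, and you never need to select a distinguishing input or a preparing unitary.

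\textbf{Trade-off.} Your version is shorter and reduces the theorem entirely to the injectivity statement of the lemma. It also avoids re-deriving, at top level, the injectivity of $\sem{D[-]}$ from the surjectivity of the $N_{B_i}$. The paper's version applies the lemma only at the types $B_i$ and $A$ themselves, not at the curried type. It yields a separating experiment whose input is a pure state prepared by a unitary. The cost is that small extra linear-algebra step plus the choice of vector.
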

\begin{proof}
  From the compositionality of the semantics, the if part is trivial.
  We prove the converse.
  Let $\Gamma = \vec{x_i} \colon \vec{B_i}$.
  Then, using $N_{B_i}$ and $\sinterch$,
  we obtain a term $L$ of type
  $\vec{y_i} \colon \vec{m_{B_i}}
  \vdash L \colon (\bigotimes B_i) \seq (\prod \ell_{B_i})$.
  Now, we define a context $D[-]$ as follows:
  \begin{align*}
    \vec{y_i} \colon \vec{m_{B_i}}
    \vdash
    \Let {\vec{x_i} \seq z} = {L} \In
    (\Let x_A = [-] \In M_A) \seq z
    \colon n_A \seq \textstyle\prod \ell_{B_i}.
  \end{align*}
  From \cref{lem:full-abst-inj-surg},
  $\sem{D[-]}$ defines an injective map
  $\Hilb(\sem{\Gamma}, \sem{A})
  \longrightarrow
  \Hilb(\prod_i m_{B_i}, n_A \times \prod_i \ell_{B_i})$.
  For any map $f, g \in \Hilb(n, m)$,
  if $f \neq g$, there exists a vector $\ket\psi$ of length 1
  such that $f \ket\psi \neq g \ket\psi$.
  Therefore, if $\sem{M} \neq \sem{N}$,
  there exists some unitary map
  $\gU \colon \prod m_{B_i} \longrightarrow \prod m_{B_i}$
  that makes the context
  $
    C[-]\,\defeq\,\letx{\vec{y_i}}{{\gU \ket0}}{D[-]}
  $
  satisfy $\sem{C[M]} \neq \sem{C[N]}$.
\end{proof}

\subsection{Causality and Definability}
\label{subsec:definability}

In this section, we study the class of maps and supermaps
that can be described in \ourlang{}.
In particular, we prove that every term is causally consistent 
in the sense that no paradox described in the Introduction
occurs in \ourlang{}.
We do this by studying the definability
of the semantics taken in $\CausHilb$.

First, for first-order maps, 
we can prove the following:

\begin{proposition}\label{prop:semantic-soundness-of-the-type-system}
  The semantics of every term $x \colon n \vdash M \colon m$
  is an isometry.
  Conversely, every map in $\Isom(n, m)$ is definable
  by a term $x \colon n \vdash M \colon m$.
\end{proposition}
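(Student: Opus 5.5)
Soundness follows from the semantics in $\CausHilb$ together with \cref{thm:isom-is-full-subcategory}. Given $x \colon n \vdash M \colon m$, the refined semantics $\sem{M}_\CausHilb$ is a morphism $\sem{n} \longrightarrow \sem{m}$ in $\CausHilb$. This is well defined because every clause of \cref{fig:cat-sem-term} lives in $\CausHilb$: $\xi^{-1}$ and $\gamma^{-1}$ exist by \cref{prop:first-order-prop-invertible}, since the side conditions of \ref{lrule:await} and \ref{lrule:expose} require $A \colon \FO$ and first-order types are interpreted by first-order objects, these being closed under $\otimes$ and $\seq$ (which coincide on them). The $\qif$ case is covered by \cref{prop:qif-lifts-to-caushilb}. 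The primitives also live in $\CausHilb$: $\qinj$ and $U$ are isometries between first-order objects, hence morphisms by \cref{thm:isom-is-full-subcategory}, and $\ceil{f}$ is the canonical isomorphism (e.g.\ $\mathbf{n}\otimes\mathbf{m} = (nm, \ldots)$, which is first-order, so it equals $\mathbf{nm}$). Both $\sem{n}$ and $\sem{m}$ are first-order objects, so \cref{thm:isom-is-full-subcategory} makes the underlying linear map an isometry. Since $U\sem{M}_\CausHilb = \sem{M}_\Hilb$, the degenerate semantics is an isometry as well.

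For definability, I would first factor an arbitrary isometry $V \colon \CC^n \to \CC^m$ (necessarily $n \le m$) as $V = W \circ \qinj$. Here $\qinj \colon n \to n + (m-n)$ is the standard embedding, and $W$ is a unitary on $\CC^m$ extending $V$ from the span of the first $n$ basis vectors, which exists by completing $V\ket{0},\dots,V\ket{n-1}$ to an orthonormal basis. The term is then $x \colon n \vdash \gW\,(\qinj\,x) \colon m$, with the degenerate case $m = n$ handled by $\qinj$ with the zero summand, or by omitting it. Its semantics is $W \circ \qinj = V$ by the clauses for application, $\qinj$, and $\gU$.

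The main obstacle is the well-definedness step of the soundness half, specifically verifying that the first-order side conditions in \ref{lrule:await} and \ref{lrule:expose} always make $\xi$ and $\gamma$ invertible. This needs a small induction showing that $\sem{F}$ is a first-order object for every $F \colon \FO$, which uses that $\mathbf{I}$ is first-order and the closure remark above. Once that holds, everything else follows directly from \cref{thm:isom-is-full-subcategory}.
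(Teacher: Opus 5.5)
Your proposal is correct and follows the paper's proof. Soundness comes from $\CausHilb(\mathbf{n},\mathbf{m}) = \Isom(n,m)$ (\cref{thm:isom-is-full-subcategory}), with the well-definedness checks you spell out left implicit in the paper, and definability uses the same factorization of an isometry as a unitary composed after $\qinj$; your open term in context $x \colon n$ actually fits the statement's judgement form better than the paper's closed $\lambda x.\,\gU(\qinj(x))$.
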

\begin{proof}
  It follows from $\CausHilb(n, m) = \Isom(n, m)$
  proved in \cref{thm:isom-is-full-subcategory}.
  For the definability,
  we can define each isometry by the term
  $\lambda x.\,\gU(\qinj(x))$.
\end{proof}

In particular, this fact shows that our language can
avoid the paradox regarding the quantum conditionals
we presented in the Introduction.

We can also characterize all pure supermaps:

\begin{theorem}\label{thm:characterization}
  A supermap is pure if and only if
  it can be represented as $\iota f$ for some
  $f \colon \bigotimes_i(\mathbf{n_i} \li \mathbf{m_i}) \li \mathbf{n} \li \mathbf{m}$
  in $\CausHilb$.
  In particular, all definable supermaps
  $\bigotimes_i(n_i \li m_i) \li n \li m$ in our language are pure,
  and the OCB process~\cite{Oreshkov2012-ocb}
  is not definable in \ourlang{} since it is not pure.
\end{theorem}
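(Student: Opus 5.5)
Since $\iota$ is identity on objects' underlying data and the hom-sets of $\CausHilb$ are just those linear maps whose image under $\iota$ lands in $\CausCPM$, the theorem reduces to the following. For $F \in \Hilb(\bigotimes_i (n_i\otimes m_i)\otimes n, m)$, viewed through the compact structure as a linear map $S$, we must show: $S$ is a pure supermap (in the sense of \cite{Yokojima2021consequencesof} recalled in the preliminaries) iff $F$ is a morphism $\mathbf{I}\to \bigotimes_i(\mathbf{n_i}\li\mathbf{m_i})\li\mathbf{n}\li\mathbf{m}$ in $\CausHilb$. Equivalently, by closedness, it is a morphism $\bigotimes_i(\mathbf{n_i}\li\mathbf{m_i})\otimes \mathbf{n}\to\mathbf{m}$. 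Here $\mathbf{n}=(n,\{\discardDiag\}^*)$ is first-order, so $c_{\mathbf{n}\li\mathbf{m}}$ consists of the (Choi operators of) CPTP maps $n\to m$.

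For the direction ``$\CausHilb$-morphism $\Rightarrow$ pure'', we take isometries $f_i\colon n_i\otimes X_i\to m_i\otimes Y_i$. We regard $\mathbf{X_i},\mathbf{Y_i}$ as first-order objects and $f_i$ as a morphism $\mathbf{n_i}\otimes\mathbf{X_i}\to\mathbf{m_i}\otimes\mathbf{Y_i}$ in $\CausHilb$; this is legitimate by \cref{thm:isom-is-full-subcategory}, since first-order objects are closed under $\otimes$. The partial application in the picture is built purely from $\ast$-autonomous structure: curry $f_i$ to $\mathbf{X_i}\to \mathbf{n_i}\li(\mathbf{m_i}\otimes\mathbf{Y_i})$, then use a strength-type map $(\mathbf{n_i}\li(\mathbf{m_i}\otimes\mathbf{Y_i}))\to(\mathbf{n_i}\li\mathbf{m_i})\otimes\mathbf{Y_i}$. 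This map is not available in general, but it is exactly $\gamma^{-1}$ composed with $\xi^{-1}$ from \cref{prop:first-order-prop-invertible}, because $\mathbf{Y_i}$ is first-order. We then tensor the pieces, apply $F$, and regroup. Hence the composite is a morphism $\bigotimes_i\mathbf{X_i}\otimes\mathbf{n}\to\mathbf{m}\otimes\bigotimes_i\mathbf{Y_i}$ between first-order objects. By \cref{thm:isom-is-full-subcategory} it is an isometry, and its underlying $\Hilb$ map is the wiring in the definition, since the forgetful functor to $\Hilb$ preserves everything on the nose.

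For the converse, take $S$ pure. We must show that $\iota F$ sends every $\rho\in c_{\bigotimes_i(\mathbf{n_i}\li\mathbf{m_i})\otimes\mathbf{n}}$ into $c_\mathbf{m}$, i.e.\ to a normalised state. Since $c_{\mathbf A\otimes\mathbf B}=\{\rho_A\otimes\rho_B\}^{**}$, closedness and the definition of $(-)^*$ say it suffices to check product states: $\discardDiag\circ\iota F\circ(\bigotimes_i\Phi_i\otimes\sigma)=1$ for CPTP $\Phi_i$ and a state $\sigma$. Writing $\discardDiag\circ\iota F$ as an effect $\pi$, the condition $\pi\circ(\bigotimes\rho_i)=\id$ for all generating product states puts $\pi$ in $\{\bigotimes\rho_i\}^{*}$, hence in $c^*$. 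By linearity we may take $\sigma$ pure. By Stinespring each $\Phi_i$ is an isometry $V_i\colon n_i\to m_i\otimes Y_i$ followed by discarding $Y_i$; choose $X_i=1$. Purity of $S$ makes the partial application an isometry $n\to m\otimes\bigotimes Y_i$, so discarding preserves trace. The $\iota$-image of this partial application, followed by discarding the $Y_i$, is precisely $\iota F$ applied to $\bigotimes_i\Phi_i\otimes\sigma$, using functoriality of $\iota$ and that $\iota$ preserves compact structure. So the trace is $1$.

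The final sentences of the theorem are immediate consequences. Every term of that type has semantics in $\CausHilb$, so it is pure. The OCB process is not pure \cite{AraujoFNB17}, so it is not of the form $\iota f$ and is therefore not definable.

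The main obstacle is the bookkeeping in the first direction. We must verify that the particular $\ast$-autonomous and first-order isomorphisms really assemble, on underlying $\Hilb$ maps, into the exact partial-application diagram, including the reinterpretation of the $\CPM$ trace conditions as isometry conditions. We also need that the reduction to product Choi states in the second direction is legitimate for the $^{**}$-closure. I expect both to be routine once the $\Hilb$-level identity of all structure maps is used.
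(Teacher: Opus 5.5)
Your outline follows the paper's route, but two steps fail as written.

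\textbf{Direction from $\CausHilb$ to purity.} The map $\mathbf{n_i}\li(\mathbf{m_i}\otimes\mathbf{Y_i})\to(\mathbf{n_i}\li\mathbf{m_i})\otimes\mathbf{Y_i}$ is not a morphism of $\CausHilb$. The inverse $\gamma^{-1}$ (which needs $\mathbf{Y_i}$ first-order) only takes you to $(\mathbf{n_i}\li\mathbf{m_i})\seq\mathbf{Y_i}$. The map $\xi_{A,B}\colon A\otimes B\to A\seq B$ is guaranteed invertible by \cref{prop:first-order-prop-invertible} only when the \emph{left} factor $A$ is first-order. Here that factor is $\mathbf{n_i}\li\mathbf{m_i}$, which is not first-order. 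A first-order object on the right of $\seq$ makes $\seq$ collapse to $\llpar$, not to $\otimes$.

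Concretely, every element of $c_{(\mathbf{n}\li\mathbf{m})\otimes\mathbf{Y}}=\{\Phi\otimes\sigma\}^{**}$ is a channel $n\to m\otimes Y$ that does not signal from $n$ to $Y$. So the Choi state of the isometry $\ket{k}\mapsto\ket{k}\otimes\ket{k}$ lies in $c_{\mathbf{n}\li(\mathbf{m}\otimes\mathbf{Y})}$ but not in $c_{(\mathbf{n}\li\mathbf{m})\otimes\mathbf{Y}}$.

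The repair is exactly \cref{ex:auxiliary-space}, which the paper invokes for this direction:
\begin{enumerate}
\item Keep each curried $f_i$ as a map $\mathbf{X_i}\to(\mathbf{n_i}\li\mathbf{m_i})\seq\mathbf{Y_i}$; this is your $\gamma^{-1}$ step, which is fine.
\item Merge these with $\zeta$ into $\bigl(\bigotimes_i(\mathbf{n_i}\li\mathbf{m_i})\bigr)\seq\bigotimes_i\mathbf{Y_i}$.
\item Tensor on the point $F$ of type $T=\bigotimes_i(\mathbf{n_i}\li\mathbf{m_i})\li(\mathbf{n}\li\mathbf{m})$, and use $T\otimes(B\seq D)\to(T\otimes B)\seq D$, obtained from $T\cong T\seq\mathbf{I}$ and $\zeta$.
\item Evaluate to reach $(\mathbf{n}\li\mathbf{m})\seq\bigotimes_i\mathbf{Y_i}$.
\item Finish with $\gamma$ followed by $\xi^{-1}$; the latter is legitimate here because $\mathbf{m}$ is first-order.
\end{enumerate}

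\textbf{Converse direction.} You begin with $S=\iota F$. But a pure supermap is by definition a linear map on operator spaces, and proving that it lies in the image of $\iota$ at all is the heart of this direction. Your opening reduction assumes it away.

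The missing idea is the paper's argument:
\begin{enumerate}
\item Plug $\iota(\mathsf{swap}_i)$ into $S$, with $X_i=m_i$ and $Y_i=n_i$.
\item Purity says the result is $\iota V$ for an isometry $V\colon n\otimes\bigotimes_i m_i\to m\otimes\bigotimes_i n_i$.
\item This partial application is just $S$ with wires bent in the compact closed category $\CPM$. So $S$ is recovered from $\iota V$ by composing with cups and caps.
\item Since $\iota$ preserves cups and caps on the nose, $S=\iota F$, where $F$ is $V$ with the corresponding wires bent.
\end{enumerate}
Only after this does your Stinespring argument apply. That part is correct, including the reduction to product states via $\{\bigotimes_i\rho_i\}^{*}=c^{*}$. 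It spells out what the paper compresses into ``$f$ maps isometries to isometries''.
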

\begin{proof}
  Sufficiency follows from \cref{thm:isom-is-full-subcategory}
  and a similar observation as in \cref{ex:auxiliary-space}.
  We prove the necessity.
  A similar proof can be found in \cite{Yokojima2021consequencesof}.
  For each pure supermap
  $g \colon \bigotimes_i(\Lin(\CC^{2^{n_i}}) \li \Lin(\CC^{2^{m_i}}))
  \longrightarrow \Lin(\CC^{2^{n}}) \li \Lin(\CC^{2^{m}})$,
  by partially applying
  $\mathsf{swap}_i\colon n_i \otimes m_i \longrightarrow m_i \otimes n_i$,
  we obtain an isometry of type
  $n \otimes_i m_i \longrightarrow m \otimes_i n_i$.
  Since the original $g$ can be recovered by composing unit and counit of
  the compact closed structure, and all these maps are in the image of the functor $\iota \colon \Hilb \longrightarrow \CPM$,
  the map $g$ itself is in the image of $\iota$.
  Let $\iota f = g$, then since $f\in\Hilb$ maps isometries to isometry,
  it is a map in $\CausHilb$.
\end{proof}

Furthermore, using some facts that have been proven in the quantum community,
we can extend the result to the supermaps
that take at most two inputs:

\begin{proposition}
  Every map of type
  $(\mathbf{n} \li \mathbf{m}) \li (\mathbf{n'} \li \mathbf{m'})$
  in $\CausHilb$
  is definable.
\end{proposition}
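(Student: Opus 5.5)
By \cref{thm:characterization}, a map $f \colon (\mathbf{n} \li \mathbf{m}) \li (\mathbf{n'} \li \mathbf{m'})$ in $\CausHilb$ is exactly a pure supermap with one input channel, i.e.\ a pure superchannel $\iota f$. The plan has three steps.

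First, we characterize $\iota f$. We use the known result from the quantum community (Chiribella--D'Ariano--Perinotti) that every deterministic superchannel $\Lin(\CC^n)\li\Lin(\CC^m) \longrightarrow \Lin(\CC^{n'})\li\Lin(\CC^{m'})$ is realized by a comb: a pre-processing isometry $V \colon n' \longrightarrow n \otimes a$, then the input channel on $n$, then a post-processing isometry $W \colon m \otimes a \longrightarrow m'$. For pure supermaps, the purity analysis of \cite{Yokojima2021consequencesof} (or \cite{AraujoBCFGB2015}) shows that both $V$ and $W$ can be taken to be isometries. No discarding is needed, and the ancilla is carried coherently.

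Second, we lift this to $\Hilb$. Since $\iota$ is not faithful, we must show $f$ itself equals the comb. In $\Hilb$ the comb equals $\lambda u.\,W\circ(u\otimes\id_a)\circ V$ only up to a global phase. We take the comb's Choi vector in $\Hilb$ and compare it with that of $f$. Both have the same image under $\iota$, so they differ by a phase $e^{i\theta}$, which we absorb into $V$. This step is the main obstacle. We have to be careful that the argument of \cref{thm:characterization} really produces a single vector, namely the Choi vector of $f$ as a map $1 \to (n\otimes m)\otimes (n'\otimes m')$, and that $\iota$ identifies vectors only up to phase. A rank-one CP map determines its Kraus vector up to phase, so this should go through.

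Third, we show definability. By \cref{prop:semantic-soundness-of-the-type-system}, $V$ and $W$ are definable: there are terms $x\colon n' \vdash M_V \colon n\otimes a$ and $z \colon m\otimes a \vdash M_W \colon m'$, using $\ceil{f}$ for $n\otimes a \iso na$. We then write
\[
\vdash \lambda u.\,\lambda x.\,\letx{y\otimes w}{M_V}{M_W[(u\,y)\otimes w/z]} \colon (n\li m)\li(n'\li m'),
\]
which is a plain linear lambda term. Its semantics in $\CausHilb$ has underlying $\Hilb$ map equal to the comb. Since $\CausHilb \longrightarrow \Hilb$ is faithful, this equals $f$, which completes the proof.
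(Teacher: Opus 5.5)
Your proposal is correct and takes essentially the paper's route: the paper cites Chiribella et al.'s realization theorem to write every such map as a comb $\lambda u.\,W\circ(u\otimes\id)\circ V$ with isometric pre- and post-processing, and then reads this comb off as a term. Your proposal adds two details that the paper's one-line proof leaves implicit: why no ancilla needs to be discarded in the pure case, and why the global-phase ambiguity can be absorbed when passing from $\iota f$ back to $f$ in $\Hilb$.
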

\begin{proof}
  For single input supermaps (deterministic supermaps),
  in \cite[Theorem 1]{Chiribella_2008},
  it is shown that all such maps can be written as
  $\lambda f.\, \gV \circ (f \otimes \id)\circ \gU \circ \qinj$.
\end{proof}

To discuss supermaps with two inputs, we extend our language
by extending the syntax of $\qif$ to
$\qifx{M \geq m}{x \to N_1}{y \to N_0}$.
This represents a coherent branching where,
it branches to $N_0$
if the control qubit $M$ is $\ket{0}$, \dots, $\ket{m-1}$,
and it branches to $N_1$
if it is in between $\ket{m}$, \dots $\ket{n + m - 1}$.
The typing rule is now modified as follows.
\begin{proofrule}
  \infer*[]{
    \Gamma \vdash M \colon n + m
    \\
    \Gamma', x_0 \colon m \vdash N_0 \colon m \after A
    \\
    \Gamma', x_1 \colon n \vdash N_1 \colon n \after A
  }{
    \Gamma, \Gamma' \vdash
    \qifx{M \geq m}{x_1 \to N_1}{x_1 \to N_0} \colon (n + m) \after A
  }
\end{proofrule}
The original $\qifx{M}{N_1}{N_2}$ can be recovered
by $\qifx{M \geq 1}{x \to (x; N_1)}{y \to (y; N_0)}$
when $m = n = 1$.
The semantics is defined in \cref{app:sec:proof}.

\begin{proposition}
  \label{prop:two-shot}
  In the language extended by generalizing
  $\qif$ that takes a qu$d$it as its condition,
  every map $(\mathbf{n} \li \mathbf{m}) \li (\mathbf{n} \li \mathbf{m}) \li (\mathbf{n'} \li \mathbf{m'})$
  in $\CausHilb$ is definable.
\end{proposition}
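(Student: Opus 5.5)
By \cref{thm:characterization}, a morphism $S \colon (\mathbf{n} \li \mathbf{m}) \li (\mathbf{n} \li \mathbf{m}) \li (\mathbf{n'} \li \mathbf{m'})$ in $\CausHilb$ is, up to $\iota$, the same thing as a pure supermap with two slots. The plan is therefore to import a structure theorem for such supermaps from the quantum literature and then realise the resulting normal form by a term. I would invoke the characterisation of Yokojima et al.~\cite{Yokojima2021consequencesof}. As I recall it, that result says every purity-preserving two-slot supermap is a \emph{coherent control of the two causal orders}. Concretely, there are:
\begin{itemize}
  \item an initial isometry $V_0 \colon n' \to (n \otimes a_1) \oplus (n \otimes a_0)$,
  \item middle isometries $V_1 \colon m \otimes a_1 \to n \otimes b_1$ and $V_0' \colon m \otimes a_0 \to n \otimes b_0$,
  \item a final isometry $W \colon (m \otimes b_1) \oplus (m \otimes b_0) \to m'$.
\end{itemize}
The supermap then acts as
\[
  W \circ \big( ((g \otimes \id) V_1 (f \otimes \id)) \oplus ((f \otimes \id) V_0' (g \otimes \id)) \big) \circ V_0 .
\]
Which exact form the cited result provides (purity-preserving versus unitary-preserving, and whether it is stated at the level of linear maps) is the point I would check first. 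Degenerate cases, where one of the summands is zero, are just fixed-order compositions. These are definable directly by the single-input construction of the previous proposition together with the wiring of \cref{ex:auxiliary-space}.

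Next I would write the term for the non-degenerate normal form. The direct sum $(n\otimes a_1)\oplus(n\otimes a_0)$ is identified with a qudit of dimension $k_1+k_0$, where $k_i = n a_i$. The initial step is then the term $\gV_0(\qinj\,x)$, using \cref{prop:semantic-soundness-of-the-type-system} for isometries. On the result I would use the extended conditional $\qifx{y \geq k_0}{x_1 \to N_1}{x_0 \to N_0}$.

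In branch $N_1$, I cast $x_1 \colon k_1$ to $n \otimes a_1$ by $\ceil{\cdot}$ and apply $f$ to the first component while keeping $a_1$ alongside. Then I apply $\gV_1$, then $g$, and finally re-embed the result into the codomain dimension using $\qinj$ and a unitary. Branch $N_0$ is symmetric, with $f$ and $g$ swapped. Each branch must return its control register through the type $k_i \after A$. Here I would use $\pawait$ and $\pexpose$ together with the type equivalences from \cref{subsec:lang-typingRules}, which are valid because all the data types are first-order. Finally, $\pawait$ turns the output $(k_1+k_0)\after m''$ into a pair, and the last isometry $W$, suitably padded, is applied to that pair.

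Finally I would check that the denotation agrees with the normal form. I would use the degenerate semantics, since $U\sem{-}_{\CausHilb}=\sem{-}_{\Hilb}$ and morphisms of $\CausHilb$ are determined by their underlying linear maps. This requires an exact equality of linear maps, not just an equality after applying $\iota$, and $\iota$ forgets phases. Concretely, the lift $f$ in \cref{thm:characterization} is fixed only up to a global phase relative to the decomposition. A global phase is harmless, because I can absorb it into $\gV_0$ as a phase unitary. The genuine obstacle is a \emph{relative} phase between the two branches, which is exactly the information lost by $\iota$. I would address it by redoing the Yokojima-style argument directly in $\Hilb$, working with the isometry obtained by partially applying swaps as in the proof of \cref{thm:characterization}. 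I would then verify that the decomposition can be chosen so that the branch phases are absorbed into $V_1$ and $V_0'$. This step, together with ensuring that branch dimensions match the form required by the generalised $\qif$, is where I expect most of the work to lie.
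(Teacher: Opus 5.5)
This is essentially the paper's proof. The paper also invokes Theorem 5 of Yokojima et al.\ to write the supermap as $V\circ(f_1(g,h)\oplus f_2(g,h))\circ U\circ\qinj$, with fixed-order combs $f_1,f_2$ built from isometries, and simply asserts that the $\oplus$ is realised by the generalised $\qif$; your explicit term construction only adds detail. Your relative-phase worry is unnecessary: $\iota$ identifies two linear maps only up to a single unimodular scalar, so a decomposition that holds after $\iota$ for the whole supermap already holds in $\Hilb$ up to the one global phase you absorb into $\gV_0$.
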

\begin{proof}
  For supermaps with two inputs (bipartite supermaps),
  in \cite[Theorem 5]{Yokojima2021consequencesof},
  it was shown that such maps can be written by a direct sum of
  two causally ordered processes.
  That is, there is a decomposition
  of the supermap $f$ 
  \[
    f(g,h) =
    V \circ (f_1(g,h) \oplus f_2(g,h)) \circ U \circ \qinj
  \]
  for some unitaries $U$, $V$ and quantum combs
  $f_i \colon (n \li m) \li (n \li m) \li n_i \li m_i$
  which can be realized by a composition of isometries.
  Since this $\oplus$ can be represented
  with a generalized $\qif$,
  this is definable.
\end{proof}

\begin{remark}
  However, we conjecture that for supermaps that take three or more inputs,
  there are some maps which are not definable in our language.
  This is because it is known that there are some tripartite supermaps
  that break causal inequality~\cite{AraujoFNB17},
  which do not seem to be describable only with $\qif$.
  \lipicsEnd
\end{remark}

\section{Conclusion}
\label{sec:conclusion}
We have developed a higher-order quantum programming language equipped with a type system based on causal logic, together with its categorical semantics.
In the presence of quantum conditional branching controlling functions, a na\"ive linear type discipline cannot avoid some physically-unrealizable programs.
In this paper, we have shown that tools from pomset/BV/causal logics~\cite{retore_pomset_1997,Guglielmi2007,Simmons2024}, namely the seq-connective \(\seq\) and the concept of first-order types, resolve this issue.
While it has been known that semantic models of quantum computation can exhibit structures of causality-aware logics, our results strengthen this connection by showing that some ideas from pomset/BV/causal logics are in fact indispensable in higher-order quantum computation.

\bibliographystyle{plainurl}%
\bibliography{references}

\clearpage
\appendix
\section{IBV, SequentIBV, and Our Language}
\label[appendix]{app:sec:logic-lang}

\subsection{Intuitionistic BV-logic}

For the derivation rules of IBV, see \cite{AcclavioS25}.

\begin{proof}[Proof of \cref{prop:IBV-eq-sequentIBV}]
  Proving that
  all derivable propositions in sequentIBV are derivable in IBV
  is easy.
  In \cref{app:fig:ibv-basic-derivable-in-sequentIBV}
  we describe how an IBV basic proposition can be derived in sequentIBV.
  In \cref{app:fig:ibv-deepinference-derivable-in-sequentIBV},
  we show how a deep inference can be simulated in a sequent calculus.
\end{proof}
\begin{figure*}
  \begin{proofrules}[0.20cm]
    \infer*{
      \infer*{
        A \vdash A
      }{
        \vdash A \li A
      }
    }{
      I \vdash A \li A
    }

    \infer*{
      \vdash I \seq I
      \\
      \infer*{
        I \vdash I
        \\
        \infer*{
          A \vdash A
        }{
          A, I \vdash A
        }
      }{
        A, I \seq I \vdash I \seq A
      }
    }{
      A \vdash I \seq A
    }

    \infer*{
      \vdash I \seq I
      \\
      \infer*{
        \infer*{
          A \vdash A
        }{
          A, I \vdash A
        }
        \\
        I \vdash I
      }{
        A, I \seq I \vdash A \seq I
      }
    }{
      A \vdash A \seq I
    }

    \infer*{
      \infer*{
        \vdash I \seq I
        \\
        \infer*{
          \infer*{A\vdash A}{A, I\vdash A}
          \\
          \infer*{B\vdash B}{B, I\vdash B}
        }{
          A, B, I \seq I \vdash A \seq B
        }
      }{
        A, B \vdash A \seq B
      }}{
      A \otimes B \vdash A \seq B
    }

    \infer*{
      \infer*{
        \infer*{
          A \vdash A
          \\
          \infer*{
            B \vdash B
            \\
            C \vdash C
          }{
            B, B \li C \vdash C
          }
        }{
          A, A \li B, B \li C  \vdash C
        }
      }{
        A, B \li C \vdash (A \li B) \li C
      }
    }{
      A \otimes (B \li C) \vdash (A \li B) \li C
    }

    \infer*{
      \infer*{
        \infer*{
          A \vdash A
          \\
          \infer*{
            B \vdash B
            \\
            C \vdash C
          }{
            B, C \vdash B \otimes C
          }
        }{
          A, A \li B, C \vdash B \otimes C
        }
      }{
        A \li B, C \vdash A \li (B \otimes C)
      }
    }{
      (A \li B) \otimes C \vdash A \li (B \otimes C)
    }

    \infer*{
      \infer*{
        \infer*{
          A \vdash A
          \\
          B \vdash B
        }{
          A, A \li B \vdash B
        }
        \\
        C \vdash C
      }{
        A, (A \li B) \seq C \vdash B \seq C
      }
    }{
      (A \li B) \seq C
      \vdash
      A \li (B \seq C)
    }

    \infer*{
      \infer*{
        B \vdash B\\
        \infer*{
          A \vdash A
          \\
          C \vdash C
        }{
          A, A \li C \vdash C
        }
      }{
        A, B \seq (A \li C) \vdash B \seq C
      }
    }{
      B \seq (A \li C) \vdash A \li (B \seq C)
    }

    \infer*{
      \infer*{
        A\seq C
        \vdash
        A\seq C
        \\
        B \seq D
        \vdash
        B \seq D
      }{
        A\seq C, B \seq D
        \vdash
        (A\otimes B) \seq (C \otimes D)
      }
    }{
      (A\seq C) \otimes (B \seq D)
      \vdash
      (A\otimes B) \seq (C \otimes D)
    }

    \infer*{
      (A\li B) \seq (C \li D)
      \vdash
      (A\li B) \seq (C \li D)
    }{
      (A\li B) \seq (C \li D)
      \vdash
      (A\seq C) \li (B \seq D)
    }

      \infer*{
        B \vdash B\\
        A \vdash A
      }{
        A \otimes B \vdash B \otimes A
      }

    \infer*{
      \infer*{
        \infer*{
          A \vdash A
          \\
          \infer*{
            B \vdash B
            \\
            C \vdash C
          }{
            B, C
            \vdash
            B \otimes C
          }
        }{
          A, B, C
          \vdash
          A \otimes (B \otimes C)
        }
      }{
        A \otimes B, C
        \vdash
        A \otimes (B \otimes C)
      }
    }{
      (A \otimes B) \otimes C
      \vdash
      A \otimes (B \otimes C)
    }

    \infer*{
      \infer*{
        \infer*{
          \infer*{
            A \vdash A\\
            B \vdash B
          }{
            A, B \vdash A \otimes B
          }
          \\
          C \vdash C
        }{
          A, B, (A \otimes B) \li C
          \vdash
          C
        }
      }{
        A, (A \otimes B) \li C
        \vdash
        B \li C
      }
    }{
      (A \otimes B) \li C
      \vdash
      A \li B \li C
    }

    \infer*{
      \infer*{
        \infer*{
          A \vdash A\\
          \infer*{
            B \vdash B\\
            C \vdash C
          }{
            B, B \li C
            \vdash C
          }
        }{
          A, B,
          A \li B \li C
          \vdash
          C
        }
      }{
        A \otimes B,
        A \li B \li C
        \vdash
        C
      }
    }{
      A \li B \li C
      \vdash
      (A \otimes B) \li C
    }

    \infer*{
      \infer*{
        \infer*{
          A \vdash A \\ B \vdash B
        }{
          A, B \vdash A \otimes B
        }
        \\
        C \vdash C
      }{
        A, B \seq C
        \vdash (A \otimes B) \seq C
      }
    }{
      A \otimes (B \seq C)
      \vdash (A \otimes B) \seq C
    }

    \infer*{
      \infer*{
        B \vdash B
        \\
        \infer*{
          A \vdash A \\ C \vdash C
        }{
          A, C \vdash A \otimes C
        }
      }{
        A, B \seq C
        \vdash B \seq (A \otimes C)
      }
    }{
      A \otimes (B \seq C)
      \vdash B \seq (A \otimes C)
    }

    \infer*{
      \vdash I\\
      A \vdash A
    }{
      A \vdash I \otimes A
    }

    \infer*{
      \infer*{
        A \vdash A
      }{
        I, A \vdash A
      }
    }{
      I \otimes A \vdash A
    }

    \infer*{
      \infer*{
        A \vdash A
      }{
        A, I \vdash A
      }
    }{
      A \vdash I \li A
    }

    \infer*{
      \vdash I
      \\
      A \vdash A
    }{
      I \li A \vdash A
    }

    (A \seq B) \seq C
    \dashvdash
    A \seq (B \seq C)
  \end{proofrules}
  \caption{Basic inference rules in IBV are derivable in sequentIBV.}
  \label{app:fig:ibv-basic-derivable-in-sequentIBV}
\end{figure*}

\begin{figure*}
  \begin{proofrules*}
    \infer*{
      \infer*{
        \infer*{\pi}{
          X \vdash Y
        }
        \\
        A \vdash A
      }{
        X, A \vdash Y \otimes A
      }
    }{
      X \otimes A \vdash Y \otimes A
    }

    \infer*{
      \infer*{
        A \vdash A
        \\
        \infer*{\pi}{
          X \vdash Y
        }
      }{
        A, A \li X \vdash Y
      }
    }{
      A \li X \vdash A \li Y
    }

    \infer*{
      \infer*{
        \infer*{\pi}{
          X \vdash Y
        }
        \\
        A \vdash A
      }{
        X, Y \li A \vdash A
      }
    }{
      Y \li A \vdash X \li A
    }

    \infer*{
      \infer*{\pi}{
        X \vdash Y
      }
      \\
      A \vdash A
    }{
      X \seq A \vdash Y \seq A
    }

    \infer*{
      A \vdash A
      \\
      \infer*{\pi}{
        X \vdash Y
      }
    }{
      A \seq X \vdash A \seq Y
    }
  \end{proofrules*}
  \caption{sequentIBV derivations corresponding to deep inference rules.}
  \label{app:fig:ibv-deepinference-derivable-in-sequentIBV}

\end{figure*}

\begin{proof}[Proof of \cref{prop:program-as-proof}]
  For each derivation $A_1, \dots, A_n \vdash B$ in sequentIBV,
  a program
  $x_1 \colon A_1, \dots, x_n \colon A_n \vdash M \colon B$
  can be defined inductively as in \cref{app:fig:ibv-proof-as-program}.
\end{proof}
\begin{figure*}
  \begin{proofrules*}
    \infer
    {}
    {x \colon A \vdash x \colon A}

    \infer
    {\Gamma, x \colon A, y \colon B \vdash M \colon C}
    {\Gamma, z \colon A \otimes B \vdash \letx{x \otimes y}{z}{M} \colon C}

    \infer
    {\Gamma \vdash M \colon A\\
    \Gamma' \vdash N \colon B}
    {\Gamma, \Gamma' \vdash M \otimes N \colon A \otimes B}

    \infer
    {\Gamma, x \colon A \vdash M \colon B}
    {\Gamma \vdash \lambda x.\,M \colon A \li B}

    \infer
    {\Gamma \vdash M \colon A\\
    \Gamma', x \colon B \vdash N \colon C}
    {\Gamma, \Gamma', f \colon A \li B \vdash \letx{x}{f\,M}{N} \colon C}

    \infer{\Gamma \vdash M \colon A}
    {\Gamma, x \colon \bot \vdash \letx{\ \_}{x}{M} \colon A}

    \infer{}
    {\vdash \unitval \colon \bot}

    \infer
    {\Gamma \vdash M \colon A\\
    \Gamma', x \colon A \vdash N \colon B}
    {\Gamma, \Gamma' \vdash \letx{x}{M}{N} \colon B}

    \infer
    {}
    {\ \  \vdash \unitval \seq \unitval \colon I \seq I}

    \infer
    { z \colon A \seq C \vdash z \colon A \seq C\\ 
      \Gamma, x \colon A \vdash M \colon B\\
    \Gamma', y \colon C \vdash N \colon D}
    {\Gamma, \Gamma', z \colon A \seq C \vdash
    \letx{x \seq y}{z}{(M \seq N)} \colon B \seq D}

    \infer
    {
      \Gamma \vdash M \colon A \seq C\\
      \Gamma' \vdash N \colon B \seq D
    }{
      \Gamma, \Gamma' \vdash \sinterch(M \otimes N)
      \colon (A \otimes B) \seq (C \otimes D)
    }

    \infer*
    {
      \infer*{
      \infer*{
        \Gamma \vdash M \colon (A \li C) \seq (B \li D)
      }{
        \Gamma, x \colon A \seq B \vdash M \otimes x \colon 
        ((A \li C) \seq (B \li D)) \otimes (A \seq B)
      }
      }{
        \Gamma, x \colon A \seq B \vdash \sinterch(M \otimes x) \colon 
        ((A \li C) \otimes A)  \seq ((B \li D) \otimes B)
      }
    }{
      \Gamma \vdash
      \lambda x.
      \letx{y \seq z}{
        (\sinterch{M \otimes x})
      }{
          (\letx{y_1 \otimes y_2}{y}{y_1\, y_2})
          \seq
          (\letx{z_1 \otimes z_2}{z}{z_1\, z_2})
      }
      \\
      \colon 
      (A \seq B) \li (C \seq D)
    }

    \infer{}{
      x \colon (A \seq B) \seq C
      \vdash
      \sassocL(x) \colon A \seq (B \seq C)
    }

    \infer{}{
      x \colon A \seq (B \seq C)
      \vdash
      \sassocR(x) \colon (A \seq B) \seq C
    }
  \end{proofrules*}
  \caption{Programs in \ourlang{} corresponding to proofs in sequentIBV.}
  \label{app:fig:ibv-proof-as-program}
\end{figure*}

\section{Detailed proof of existence of Cartesian products}
\label[appendix]{app:sec:product}

\begin{proof}[Proof of \cref{prop:caushilb-product}]
  Since each $\mathbf{A} = (n, c_\mathbf{A})$ is is isomorphic to
  $(n, \{\lambda\rho\mid\rho\in c_\mathbf{A}\})$
  for each positive real $\lambda$,
  without loss of generality,
  we can assume that $\maxmixDiag_n \in c_\mathbf{A}$
  and $\frac1n\cdot\discardDiag_n\in c_\mathbf{A}$.
  We prove that
  \[
    c_{\mathbf{A}\times\mathbf{B}}
    \defeq
    \{
    \rho \mid (\iota p_A)\rho \in c_\mathbf{A}, (\iota p_B)\rho \in c_\mathbf{B}
    \}
  \]
  is flat for each
  $\mathbf{A} = (n, c_\mathbf{A})$
  and
  $\mathbf{B} = (m, c_\mathbf{B})$.
  First,
  $\maxmixDiag_{n + m} \in c_{\mathbf{A}\times\mathbf{B}}$
  because $(\iota p_A) \circ \maxmixDiag_{n + m} = \maxmixDiag_n$
  and
  $(\iota p_B) \circ \maxmixDiag_{n + m} = \maxmixDiag_m$.
  Also,
  $\frac{1}{n + m}\discardDiag_{n+m} \in c_{\mathbf{A}\times\mathbf{B}}^*$,
  because
  for each $\rho \in c_{\mathbf{A}\times\mathbf{B}}$,
  \begin{align*}
    \left(
      \frac{1}{n + m} \discardDiag_{n+m}
    \right)
      \circ \rho
    &
    = \frac{1}{n + m} \sum_{i = 1}^{n+m}
    \langle i | \rho | i \rangle
    = \frac{1}{n + m}
    \left(
    \sum_{k = 1}^{n} \langle k | \rho | k \rangle
    +
    \sum_{\ell = 1}^{m} \langle \ell | \rho | \ell \rangle
    \right)
    \\ &
    = \frac{1}{n + m}
    \big(\discardDiag_n \circ\,(\iota p_A) \circ \rho
    +
    \discardDiag_m \circ (\iota p_B) \circ \rho\big)
    = \frac{1}{n + m} \cdot (n + m)
    = 1.
  \end{align*}

  For the closedness of $c_{\mathbf{A}\times\mathbf{B}}$,
  from \cite[Theorem 17]{Simmons2022},
  it suffices to prove that, for each sequence of reals
  $(a_i)_{i = 0,\dots,k}$ such that
  $\sum_i a_i = 1$,
  if $\rho = \sum_i a_i\rho_i \in \CPM(I, (n + m))$,
  then $\rho \in c_{\mathbf{A}\times\mathbf{B}}$.
  This is clear from the definition.

  To complete the proof,
  we need to prove that the projections and universal maps
  are causal.
  The causality of projections $p_\mathbf{A}$ and $p_\mathbf{B}$
  is clear from the definition.
  To prove the causality of universality,
  let us assume that we are given two maps
  $f \in \CausHilb(\mathbf{X}, \mathbf{A})$
  and
  $g \in \CausHilb(\mathbf{X}, \mathbf{B})$.
  Then it suffices to prove that,
  for each $\rho \in c_{\mathbf{X}}$,
  $(\iota \langle f,g\rangle) \circ \rho$
  is included in
  $c_{\mathbf{A} \times \mathbf{B}}$.
  This holds because
  $(\iota p_\mathbf{A}) \circ
  (\iota \langle f, g\rangle) \circ
  \rho
  = (\iota f) \circ \rho
  \in c_\mathbf{A}$.
\end{proof}

\section{Proofs in \cref{sec:cat-sem-causHilb}}
\label[appendix]{app:sec:proof}

\begin{proof}[Proof of \cref{prop:qif-lifts-to-caushilb}]
  Let $\mathbf{2} \defeq \sem{\qbit} = (2, \{\discardDiag\}^*)$.
  To prove the claim,
  we use results shown in \cite{Simmons2022}.
  First, to prove that $\qifmap_A$ has a type
  $\mathbf{2} \otimes (\mathbf{A} \times \mathbf{A})
    \li \mathbf{2} \after \mathbf{A}
    \iso \mathbf{2} \llpar \mathbf{A}$,
  it suffices to prove the following:
  \begin{align*}
    \forall \rho_\mathbf{2} \in c_\mathbf{A}.\
    \forall \rho_{\mathbf{A} \times \mathbf{A}}
    \in c_{\mathbf{A} \times \mathbf{A}}.\
    \forall \pi \in c^*_\mathbf{2}.
    \qquad(\pi \otimes \id_{\EmbeddingFunctor A})
    \circ\EmbeddingFunctor(\qifmap_A)\circ
    (\rho_\mathbf{2} \otimes \rho_{\mathbf{A} \times \mathbf{A}})
    \in c_\mathbf{2}.
  \end{align*}
  Since $\mathbf{2}$ is first-order,
  $c_\mathbf{2}^* = \{ \discardDiag \}$.
  Also, since every density matrix $\rho_\mathbf{2}$
  can be written as a convex combination of pure states,
  and since $c_\mathbf{2}$ is closed under convex combination,
  it suffices to assume $\rho_\mathbf{2}$ is pure.
  Let $\ket\psi = \alpha \ket0 + \beta \ket1$ be
  a quantum state such that $|\alpha|^2 + |\beta|^2 = 1$.
  We prove the following for every
  $\rho_{\mathbf{A} \times \mathbf{A}} \in c_{\mathbf{A} \times \mathbf{A}}$.
  \begin{align*}
    (\discardDiag \otimes \id_{\EmbeddingFunctor A})
    \circ\EmbeddingFunctor(\qifmap_A)\circ
    (\ket\psi\!\bra\psi \otimes \rho_{\mathbf{A} \times \mathbf{A}})
    \in c_\mathbf{2}
  \end{align*}
  Since the control qubit $\ket\psi\!\bra\psi$ of
  $\qifmap_A$ is discarded right after the $\qifmap$,
  this is equivalent to the following:
  \begin{align*}
    |\alpha|^2 \iota(p)(\rho_{\mathbf{A} \times \mathbf{A}})
    +
    |\beta|^2 \iota(p')(\rho_{\mathbf{A} \times \mathbf{A}})
  \end{align*}
  where $p$ and $p'$ are the projections
  $A \xleftarrow{p} A \times A \xrightarrow{p'} A$ in $\Hilb$.
  By the definition of
  $c_{\mathbf{A} \times \mathbf{A}}$,
  $\iota(p)(\rho_{\mathbf{A} \times \mathbf{A}})$
  and
  $\iota(p')(\rho_{\mathbf{A} \times \mathbf{A}})$
  are elements of $c_\mathbf{A}$.
  Since $|\alpha|^2 + |\beta|^2 = 1$,
  this is a convex combination of elements of $c_\mathbf{A}$,
  so it is also an element of $c_\mathbf{A}$.
\end{proof}

\begin{proof}[Proof of \cref{lem:full-abst-inj-surg}]
  Here, we provide a slightly more detailed proof
  that was sketched in \cref{subsec:full-abst}.
  Let us denote the term $M_A$ and $N_A$ as
  \begin{align*}
    x_A \colon A    \vdash M_A \colon n_A,
    \qquad
    y_A \colon m_A  \vdash N_A \colon A \seq \ell_A.
  \end{align*}
  \begin{itemize}
    \item For the base case when $A \iso d \in \Nat^+$,
          we define $M_A \defeq x_A$ and $N_A \defeq y_A$ and
          $n_A = m_A = d$ and $\ell_A = 0$.
          The semantics of the term $M_A$ or $N_A$ is
          the identity, thus the claim holds.
    \item For the tensor type,
          $M_{A\otimes B}$ is defined by
          $\letx{x_A \otimes x_B}{x_{A \otimes B}}{(M_A \otimes M_B)}$,
          and $N_{A \otimes B}$ is defined by
          $\letx{y_A \otimes y_B}{y_{A \otimes B}}{\sinterch(N_A \otimes N_B)}$.
          Since the tensor product of an injection (resp. surjection)
          is also an injection (resp. surjection) in $\Hilb$, the claim holds.
    \item For the seq type,
          $M_{A \seq B}$ is defined by
          $\pawait(
            \letx{x_A \seq x_B}{x_{A \seq B}}{(M_A \seq M_B)}
            )$.
          So the underlying morphism of the semantics
          is isomorphic to $\sem{M_A} \otimes \sem{M_B}$, which is injective.

          Also, the term $N_{A \seq B}$ is defined by
          \begin{align*}
            N_{A \seq B}
            \defeq
            \letx{c \seq z}
            {\sinterch(N_A \otimes N_B)}
            {(\Let {a \otimes b} = {c} \In {(a \seq b)}) \seq z}
          \end{align*}
          whose underlying morphisms are isomorphic to
          $\sem{N_A} \otimes \sem{N_B}$.
    \item We define $M'_{A \li B}$ as:
          \begin{align*}
            \letx{y_A \otimes z^{m_A}}{H_{m_A}}{
            \letx{z^A \seq z^{\ell_A}}{N_A}{
            (( \letx{x_B}{x\,z^A}{M_B}) \seq z^{\ell_B})
            \otimes z^{m_A}
            }}
          \end{align*}
          whose type is
          $(n_B \seq \ell_B) \otimes m_A$.
          This type is isomorphic to
          $n_B \otimes \ell_B \otimes m_A$,
          so we obtain a term $M_{A\li B}$ of that type.
          The semantics of this term is
          a map that maps $f \in \Hilb(A, B)$ to
          \begin{align*}
            I
             & \xlongrightarrow{\tfrac{1}{\sqrt{m_A}}\eta_{(m_A)}}
            m_A \otimes m_A
            \xlongrightarrow{\sem{N_A} \otimes \id}
            A \otimes \ell_A \otimes m_A
            \\
             & \xlongrightarrow{f \otimes \id}
            B \otimes \ell_A \otimes m_A
            \xlongrightarrow{\sem{M_B} \otimes \id}
            n_B \otimes \ell_A \otimes m_A.
          \end{align*}
          This procedure can be decomposed to
          (1) apply a faithful functor $(-) \otimes \id_{\ell_A \otimes m_A}$,
          (2) pre-compose a surjection,
          (3) post-compose an injection,
          (4) bend some wires and multiply a non-zero scalar.
          Therefore, it is injective.
    \item We define $N_{A\li B}'$ by
          \begin{align*}
            y_{B} \colon m_B
            \vdash
            \lambda x_A.\, N_B \otimes M_A \colon
            A \li ((B \seq \ell_B) \otimes n_A).
          \end{align*}
          Since $\ell_B$ and $n_A$ are first-order,
          we can define $N_{A \li B}$ of type
          $(A \li B) \seq (\ell_B \otimes n_A)$ using $\pexpose$.
          The linear map
          $f \defeq (\id \otimes \varepsilon_{\ell_B \times n_A})
            \circ (\sem{N_{A\li B}} \otimes \id)$
          maps
          $\sum_i \ket{a_i}\ket{b_i}\ket{c_i}
            \in \CC^{m_B} \otimes \CC^{\ell_B} \otimes \CC^{n_A}$
          to the function
          \begin{align*}
            \ket{\psi} \in \CC^{\sem{A}}
            \longmapsto
            \sum_i
            \bra{b_i} \sem{M_A} \ket\psi \otimes
            (\id_{\sem{B}} \otimes \bra{c_i}) \sem{N_B} \ket{a_i}.
          \end{align*}
          For any linear map
          $g \colon \ket\psi \longmapsto
            \sum_{j} \langle{p_j}|{\psi}\rangle \otimes \ket{q_j}$,
          we can choose some $\sum_i \ket{a_i}\ket{b_i}\ket{c_i}$
          that is mapped to $g$ by $f$
          since $\sem{M_A}$ is injective and
          $\sum_i\ket{a_i}\ket{c_i} \longmapsto
            (\id_{\sem{B}} \otimes \bra{c_i}) (\sem{N_B} \ket{a_i})$
          is surjective.
          Therefore $f$ is surjective.
          \qedhere
  \end{itemize}
\end{proof}

\subsection*{Generalized qif for qudits}

We define the semantics of generalized $\qif$
in $\Hilb$ simply via the direct sum of maps:
\begin{align*}
  \sem{\qifx{M \geq n}{x\to N_1}{y\to N_0}}_\Hilb
  \colon\hspace{-10em}
  \\
  \sem{\Gamma} \otimes \sem{\Gamma'}
   &
  \xrightarrow{\sem{M}\otimes \id}
  (n \oplus_\Hilb m) \otimes \sem{\Gamma'}
  \ \cong\
  (\sem{\Gamma'} \otimes m)
  \oplus_\Hilb
  (\sem{\Gamma'} \otimes n)
  \\
   &
  \xrightarrow{\sem{N_0} \oplus \sem{N_1}}
  (m \otimes \sem{A})
  \oplus_\Hilb
  (n \otimes \sem{A})
  \ \cong \
  (n \oplus_\Hilb m) \otimes \sem{A}
  .
\end{align*}

Now we need to check that this map lifts to
a map in $\CausHilb$.
To this end, it suffices to check the following
$\qifmap_A'$ defines a morphism.

\begin{proposition}
  Let us define a linear map $\qifmap_A'$ as follows:
  \begin{align*}
    \begin{matrix}
      (n \li (n \otimes A))
      \oplus_\Hilb
      (m \li (m \otimes A))
       &
      \longrightarrow
       &
      (n \oplus_\Hilb m)
      \li
      ((n \oplus_\Hilb m) \otimes A)
      \\
      (f,\ g)
       &
      \longmapsto
       &
      f \oplus g
    \end{matrix}
  \end{align*}
  This map $\qifmap_A'$ defines a map of type
  \begin{align*}
    (\mathbf{n} \li (\mathbf{n} \llpar \mathbf{A}))
    \times
    (\mathbf{m} \li (\mathbf{m} \llpar \mathbf{A}))
    \longrightarrow
    (\mathbf{n + m}) \li ((\mathbf{n + m}) \llpar \mathbf{A})
  \end{align*}
  in $\CausHilb$.\footnote{
    Note that $\mathbf{n + m}$ is the sum of natural numbers
    and not the Categorical (co)product.
  }
\end{proposition}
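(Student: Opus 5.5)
The plan follows the proof of \cref{prop:qif-lifts-to-caushilb}. Membership of $\qifmap'_A$ in $\CausHilb$ means that $\iota(\qifmap'_A)$ sends every causal state of the domain to a causal state of the codomain. I will characterize the causal states of the codomain, reduce to pure control inputs, and then finish with a short block-matrix calculation.

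\emph{Step 1: characterize the causal states.} Write $N \defeq n+m$. I will use the description of the internal hom in $\Caus$ from \cite{Kissinger2019,Simmons2022}: a state $h$ lies in $c_{\mathbf{X}\li\mathbf{Y}}$ if and only if $\mathrm{ev}\circ(h\otimes\rho)\in c_\mathbf{Y}$ for every $\rho\in c_\mathbf{X}$. Since $\mathbf{N}$ is first-order, $c_\mathbf{N}$ is exactly the set of trace-one positive operators. Also $c^*_\mathbf{N}=\{\discardDiag\}$, so, as in the proof of \cref{prop:qif-lifts-to-caushilb}, a state $\sigma$ of $\mathbf{N}\llpar\mathbf{A}$ is causal if and only if $(\discardDiag\otimes\id)\circ\sigma\in c_\mathbf{A}$. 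Both characterizations apply in the same way to $\mathbf{n}\li(\mathbf{n}\llpar\mathbf{A})$ and to $\mathbf{m}\li(\mathbf{m}\llpar\mathbf{A})$. Combined with the description of $c_{\mathbf{X}\times\mathbf{Y}}$ from \cref{prop:caushilb-product}, the statement reduces to the following claim. Let $\rho$ be a state of the product whose two projections are causal. Then for every normalized $\ket\psi\in\CC^{N}$,
\[
(\discardDiag_N\otimes\id)\circ \mathrm{ev}\circ\big(\iota(\qifmap'_A)\rho\otimes\ket\psi\!\bra\psi\big)\in c_\mathbf{A}.
\]
Only pure $\ket\psi$ are needed here. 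The expression is linear in the $\mathbf{N}$-input, and $c_\mathbf{A}$ is closed under convex combinations (the property already used in \cref{prop:qif-lifts-to-caushilb} and \cref{app:sec:product}).

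\emph{Step 2: the block computation.} Decompose $\ket\psi=\ket{\psi_n}\oplus\ket{\psi_m}$ with $\|\psi_n\|^2=a$, $\|\psi_m\|^2=b$, and $a+b=1$. Write $\rho$ as a block operator on $(n\li n\otimes A)\oplus(m\li m\otimes A)$, with diagonal blocks $\rho_{11},\rho_{22}$ and off-diagonal blocks $\rho_{12},\rho_{21}$. Applying $f\oplus g$ to $\ket\psi$ gives $f\ket{\psi_n}\oplus g\ket{\psi_m}\in(n\otimes A)\oplus(m\otimes A)$. Any term built from $\rho_{12}$ or $\rho_{21}$ lands in an off-diagonal block $(n\otimes A)\to(m\otimes A)$ or its reverse. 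The partial trace $\discardDiag_N\otimes\id$ kills these terms, because the $n$-part and the $m$-part of $\CC^N$ are orthogonal. What remains is
\[
a\cdot(\discardDiag_n\otimes\id)\,\mathrm{ev}\big(\rho_{11}\otimes\widehat{\psi_n}\big)
\;+\;
b\cdot(\discardDiag_m\otimes\id)\,\mathrm{ev}\big(\rho_{22}\otimes\widehat{\psi_m}\big),
\]
where $\widehat{\psi_n}$ and $\widehat{\psi_m}$ are the normalized projectors (a term is absent when its weight is $0$). Here $\rho_{11}=\iota(p)\rho$ and $\rho_{22}=\iota(p')\rho$ are causal by assumption. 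By Step 1 applied to $\mathbf{n}\li(\mathbf{n}\llpar\mathbf{A})$ and $\mathbf{m}\li(\mathbf{m}\llpar\mathbf{A})$, each bracket therefore lies in $c_\mathbf{A}$. The whole expression is a convex combination of elements of $c_\mathbf{A}$, so it lies in $c_\mathbf{A}$.

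\emph{Main obstacle.} I expect the main difficulty to be Step 1, specifically making the identification of causal states of $\mathbf{N}\li(\mathbf{N}\llpar\mathbf{A})$ precise through the compact-closed currying in $\CPM$. I would handle it by invoking the $\Caus$ definitions of $\li$ and $\llpar$ directly, together with $c_\mathbf{N}^*=\{\discardDiag\}$. A secondary point is to be careful that the transposes and conjugations introduced by $\iota$ and by wire-bending respect the block decomposition, so that the cross terms genuinely cancel. I would verify this by writing everything in the standard basis, where every map involved is block-diagonal.
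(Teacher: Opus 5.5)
Your proposal is correct and follows the paper's own proof. The paper also reduces to pure control states $\alpha\ket\varphi+\beta\ket\psi$ using convexity of $c_\mathbf{A}$. It then observes that discarding the $(n+m)$-dimensional control factors through the measurement onto the two blocks, which kills the cross terms, and concludes that the result is the convex combination $|\alpha|^2(\cdots)+|\beta|^2(\cdots)$ of two elements of $c_\mathbf{A}$ coming from the causal projections of the product state. Your Step~1 and your block-operator bookkeeping for a general $\rho$ make explicit what the paper leaves implicit in its ``it suffices to prove'' reduction and its $\sum_x (f_x,g_x)$ notation.
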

\begin{proof}
  Let $f_x \in \Hilb(n, n \otimes A)$
  and $g_x \in \Hilb(m, m \otimes A)$ be maps
  such that
  $\sum_{x \in X} (f_x, g_x)
    \in c_{(\mathbf{n} \li (\mathbf{n} \llpar \mathbf A))
        \times (\mathbf{n} \li (\mathbf{n} \llpar \mathbf A))}$.
  Let $\ket\varphi \in \CC^n$ and $\ket\psi \in \CC^m$
  be normalized vectors.
  We identify these vectors with vectors
  in the coproduct space $\CC^{n+m}$.
  Let $\alpha$ and $\beta$ be complex numbers such that
  $|\alpha|^2 + |\beta|^2 = 1$.

  Using the affine closedness of flat and closed sets~\cite{Simmons2022}
  and linearity of maps,
  without loss of generality,
  we can focus on pure inputs.
  That is, it suffices to prove that the following is in $c_\mathbf{A}$:
  \begin{align*}
    \textstyle
    (\discardDiag_{n+m} \otimes \id_A)
    \circ
    \iota(\sum_x f_x \oplus_\Hilb g_x)
    \circ
    \iota(\alpha \ket\varphi + \beta \ket\psi).
  \end{align*}
  Let $p \colon \mathbf{n}+\mathbf{m} \longrightarrow
    \mathbf{n}\oplus_\CPM \mathbf{m}$
  be the canonical map that partially measures the qu$d$its.
  We can calculate as follows.
  \begin{align*}
     & \textstyle
    =
    ((\discardDiag_{n} \oplus_\CPM \discardDiag_m)
    \circ p
    \otimes \id_A)
    \circ
    \iota(\sum_x f_x \oplus g_x)
    \circ
    \iota(\alpha \ket\varphi + \beta \ket\psi).
    \\
     & \textstyle
    =
    ((\discardDiag_{n} \oplus_\CPM \discardDiag_m)
    \circ p
    \otimes \id_A)
    \circ
    \iota(\sum_x \alpha \cdot f_x \ket\varphi
    \oplus_\Hilb \beta \cdot g_x \ket\psi)
    \\
     & \textstyle
    =
    ((\discardDiag_{n} \oplus_\CPM \discardDiag_m)
    \otimes \id_A)
    \circ
    (
      |\alpha|^2 \cdot \iota(\sum_x f_x \ket\varphi)
      \oplus_\CPM
      |\beta|^2 \cdot \iota(\sum_x g_x \ket\psi)
    )
    \\
     & \textstyle
    =
    |\alpha|^2 \cdot
    (\discardDiag_{n} \otimes \id_A)
    \circ
    \iota(\sum_x f_x \ket\varphi)
    \ \oplus_\CPM\
    |\beta|^2 \cdot
    (\discardDiag_{m} \otimes \id_A)
    \circ
    \iota(\sum_x g_x \ket\psi)
  \end{align*}
  By definition of product in $\CausHilb$,
  $\sum_x f_x \in \CausHilb(\mathbf{n}, \mathbf{n}\llpar\mathbf{A})$
  and
  $\sum_x g_x \in \CausHilb(\mathbf{m}, \mathbf{m}\llpar\mathbf{A})$.
  Therefore, this is a convex conbination
  of elements in $c_\mathbf{A}$,
  which is also an element of $c_\mathbf{A}$.
\end{proof}

\end{document}